\newcommand{\RN}[1]{%
	\textup{\uppercase\expandafter{\romannumeral#1}}%
}
\def\wh{\widehat}
\def\Im{ \mathrm{Im}}
\def\OO{\mathcal{O}}
\def\CC{\mathbb{C}}
\def\C{\mathbb{C}}
\def\H{\mathbb{H}}
\def\P{\mathbf{P}}
\def\R{\mathbb{R}}
\newcommand{\re}{\operatorname{Re}}
\newcommand{\im}{\operatorname{Im}}
\newcommand{\erfc}{\operatorname{erfc}}
\newcommand{\ZZ}{\mathbb{Z}}
\newcommand{\RR}{\mathbb{R}}
\newcommand{\al}{\alpha}
\newcommand{\Del}{\Delta}
\newcommand{\del}{\delta}
\newcommand{\sig}{\sigma}
\newcommand{\mc}{\mathcal}
\newcommand{\mf}{\mathfrak}
\newcommand{\HH}{\mathbb{H}}
\newcommand{\II}{\mathcal{I}}
\newcommand{\JJ}{\mathcal{J}}
\newcommand{\mm}{\mathfrak{m}}
\newcommand{\PP}{\mathbb{P}}
\newcommand{\PPP}{\mathcal{P}}
\newcommand{\ind}{\mathbbm{1}}
\renewcommand{\Im}{\operatorname{Im}}
\newcommand{\floor}[1]{\lfloor#1\rfloor}
\newcommand{\ceil}[1]{\lceil#1\rceil}
\theoremstyle{plain}
\newtheorem*{thm*}{Theorem}
\newtheorem{thm}{Theorem}[section]
\newtheorem{lem}[thm]{Lemma}
\newtheorem{cor}[thm]{Corollary}
\newtheorem{prop}[thm]{Proposition}
\newtheorem*{prop*}{Proposition}
\newtheorem*{lem*}{Lemma}
\theoremstyle{definition}
\newtheorem*{eg*}{Example}
\newtheorem*{egs*}{Examples}
\theoremstyle{remark}
\newtheorem*{rmk*}{Remark}
\newtheorem*{rmks*}{Remarks}
\newtheorem{rem}[thm]{Remark}
\numberwithin{equation}{section}
\begin{document}
\title[Large gap probabilities of the spherical ensembles]{Large gap probabilities of complex and symplectic \\ spherical ensembles with point charges}


\author{Sung-Soo Byun}
\address{Department of Mathematical Sciences and Research Institute of Mathematics, Seoul National University, Seoul 151-747, Republic of Korea}
\email{sungsoobyun@snu.ac.kr}

\author{Seongjae Park}
\address{Department of Mathematical Sciences, Seoul National University, Seoul 151-747, Republic of Korea}
\email{psj\_0708@snu.ac.kr}


\date{\today}

\thanks{Sung-Soo Byun was partially supported by the POSCO TJ Park Foundation (POSCO Science Fellowship), by the New Faculty Startup Fund at Seoul National University and by the National Research Foundation of Korea funded by the Korea government (NRF-2016K2A9A2A13003815, RS-2023-00301976). 
}

\subjclass[2020]{Primary 60B20; Secondary 33B20}
\keywords{Spherical ensemble, gap probability, large deviation, orthogonal polynomial, asymptotic analysis}


\begin{abstract}
We consider $n$ eigenvalues of complex and symplectic induced spherical ensembles, which can be realised as two-dimensional determinantal and Pfaffian Coulomb gases on the Riemann sphere under the insertion of point charges.
For both cases, we show that the probability that there are no eigenvalues in a spherical cap around the poles has an asymptotic behaviour as $n\to \infty$ of the form
$$
\exp\Big( c_1 n^2 + c_2 n\log n + c_3 n + c_4 \sqrt n + c_5 \log n + c_6 + \OO(n^{-\frac1{12}}) \Big) 
$$
and determine the coefficients explicitly. 
Our results provide the second example of precise (up to and including the constant term) large gap asymptotic behaviours for two-dimensional point processes, following a recent breakthrough by Charlier. 
\end{abstract}

\maketitle

\section{Introduction and main results}

\subsection{Spherical ensembles}

We consider two-dimensional Coulomb gases \cite{Fo10} (also known as one-component plasmas) $\boldsymbol{x}=\{x_j\}_{j=1}^n$ on the Riemann sphere $\mathbb{S}= \{ x \in \mathbb{R}^3: \| x \| =1 \}$, under the insertion of point charges at the south and north poles, ${\rm{S}}=(0,0,-1)$ and ${\rm{N}}=(0,0,1)$ respectively. Here, $\| \cdot \|$ is the Euclidean metric in $\R^3$.  
In addition to the point charge insertions, we consider both determinantal and Pfaffian Coulomb gases. 
Consequently, for given fixed non-negative integers $\alpha$ and $c$ determining point charges at the poles, the joint probability distributions of the models are given by 
\begin{align}
 \label{Gibbs cplx sphere}
d\boldsymbol{\mathcal{P}}_{n}^\C(\boldsymbol{x})&=\frac{1}{ \mathcal{Z}_{n }^{ \mathbb{C} }   } \prod_{j>k=1}^n \| x_j-x_k \|^{2} \prod_{j=1}^{n}  \| x_j-{\rm{N}} \|^{2\alpha}  \,  \| x_j-{\rm{S}} \|^{2c}  \,dA_{ \mathbb S }(x_j),
\\
d\boldsymbol{\mathcal{P}}_{n}^\H(\boldsymbol{x})&=\frac{1}{   \mathcal{Z}_{n }^{ \mathbb{H} }  } \prod_{j>k=1}^n \| x_j-x_k \|^{2} \| x_j-\overline{x}_k \|^{2} \prod_{j=1}^{n} \|x_j-\overline{x}_j\|^2   \| x_j-{\rm{N}} \|^{4\alpha}  \,  \| x_j-{\rm{S}} \|^{4c}  \,dA_{ \mathbb S }(x_j),  \label{Gibbs symplectic sphere}
\end{align}
where $dA_{ \mathbb S }$ is the area measure on $\mathbb{S}$, and  we use the convention $\overline{x}=(x_1,-x_2,x_3)$ for $x=(x_1,x_2,x_3)$. 
Here, $\mathcal{Z}_{n}^{ \mathbb{C} }$ and $ \mathcal{Z}_{n}^{ \mathbb{H} }$ are partition functions that make $\boldsymbol{\mathcal{P}}_{n}^\C$ and $\boldsymbol{\mathcal{P}}_{n}^\H$ probability measures. 
The superscripts $\mathbb{C}$ and $\mathbb{H}$ denoting determinantal and Pfaffian Coulomb gases, respectively, will become clear below. 

For readers who are inclined towards the statistical physics perspective, it suffices to regard \eqref{Gibbs cplx sphere} and \eqref{Gibbs symplectic sphere} as the primary subjects of our investigations. 
On the other hand, for readers from the random matrix theory community, let us stress that these models have realisations of eigenvalues of random matrix models called complex \cite{FF11,FK09,Kr09} and symplectic \cite{BF23a,FM12,Ma13,MP17} spherical induced ensembles. 
We refer to \cite[Section 2.5]{BF22} and \cite[Section 6.3]{BF23} for comprehensive reviews, see also a recent work \cite{No23} and references therein.
A simple way to define these random matrix models is as $n \times n$ complex or quaternion matrices, denoted by $\textbf{G}$, whose matrix probability distribution function is proportional to
\begin{equation} \label{mat dist}
\frac{ \det (\textbf{G} \textbf{G}^*)^{c \beta/2} }{ \det ( \mathbbm{1}_n + \textbf{G} \textbf{G}^* )^{ (2n+c+\alpha) \beta/2 } }, \qquad \beta= \begin{cases}
2 & \textup{for the complex case},
\smallskip 
\\
4 & \textup{for the quaternionic case}. 
\end{cases}
\end{equation}
For the special case when $c=\alpha=0$ (i.e. without point charges), these models are called the complex and symplectic spherical ensembles. 
In this case, an element-wise realisation of random matrices distributed as \eqref{mat dist} is due to Krishnapur \cite{Kr09}.  
To be more concrete, the matrix distribution \eqref{mat dist} follows from $\textbf{G}_1 \textbf{G}_2^{-1}$, where $\textbf{G}_1$ and $\textbf{G}_2$ are independent complex or symplectic Ginibre matrices whose elements are given by i.i.d. complex or quaternionic Gaussian random variables.  
It is evident from the construction that the spherical ensembles are closely related to generalised eigenvalue problems \cite{FM12,EKS94}, which find applications in random plane geometry.
Beyond the spherical cases, the presence of point charges can also be realised at the level of matrix models via a so-called inducing procedure, see e.g. \cite{FF11,MP17,FBKSZ12}. This construction requires the use of rectangular Ginibre matrices as well as Haar-distributed matrices within the symmetry classes under consideration.
In this inducing procedure, one needs to assume that $\alpha$ and $c$ are non-negative integer values since they are related to the rectangular parameters.

It follows from \eqref{mat dist} that the joint distributions of eigenvalues $\boldsymbol{z}=\{z_j\}_{j=1}^n$ are given by
\begin{align} \label{Gibbs cplx}
d\P_{n}^\C(\boldsymbol{z})&=\frac{1}{Z_{n}^\C } \prod_{j>k=1}^n |z_j-z_k|^{2} \prod_{j=1}^{n}  e^{-n Q_n(z_j) }  \,dA(z_j),
\\
d\P_{n}^\H(\boldsymbol{z})&=\frac{1}{Z_{n}^\H } \prod_{j>k=1}^n |z_j-z_k|^{2} |z_j-\overline{z}_k|^2  \prod_{j=1}^{n}|z_j-\overline{z}_j|^2  e^{-2n Q_{n}(z_j) }  \,dA(z_j),  \label{Gibbs symplectic}
\end{align}
where $dA(z)=d^2z/\pi$ and the external potential $Q_n$ is given by 
\begin{equation} \label{potential spherical}
 Q_n(z):= \frac{ n+\alpha+c+1 }{n} \log(1+|z|^{2})-\frac{ 2c }{n} \log |z|.  
\end{equation}
The models \eqref{Gibbs cplx} and \eqref{Gibbs symplectic} are again two-dimensional Coulomb gas ensembles in the complex plane, with an $n$-dependent potential.  
Due to their special integrable structures, they are determinantal and Pfaffian Coulomb gas ensembles, respectively.   
We mention that the partition functions $Z_n^{ \mathbb{C} } $ and $Z_n^{ \mathbb{H} } $ can be expressed in terms of the gamma functions as 
\begin{equation} \label{partition product expression}
Z_n^{ \mathbb{C} } = n! \prod_{k=0}^{n-1} \frac{  \Gamma(k+c+1) \Gamma(n+\alpha-k) }{ \Gamma(n+\alpha+c+1) } , \qquad   Z_n^{ \mathbb{H} }  = n! \,2^n\, \prod_{k=0}^{n-1} \frac{  \Gamma(2k+2c+2) \Gamma(2n+2\alpha-2k) }{ \Gamma(2n+2\alpha+2c+2) }. 
\end{equation} 

To observe the equivalence between \eqref{Gibbs cplx sphere} and \eqref{Gibbs cplx}, as well as \eqref{Gibbs symplectic sphere} and \eqref{Gibbs symplectic}, recall that the stereographic projection $\phi: \mathbb{S}^2 \to \C \cup \{ \infty \}$ is given by 
\begin{equation}
\phi(x) = 
\begin{cases}
\dfrac{ x_1+ix_2 }{ 1-x_3 }, &\textup{if } x \in \mathbb{S}^2 \setminus {\rm{N}},
\smallskip 
\\
\infty, &\textup{if }x= {\rm{N}},
\end{cases} \qquad \phi^{-1}(z)= \Big( \frac{2 \re z}{1+|z|^2}, \frac{2\im z}{1+|z|^2}, \frac{|z|^2-1}{1+|z|^2} \Big),
\end{equation}
where $x=(x_1,x_2,x_3).$ 
Using these, one can easily check that the measures \eqref{Gibbs cplx sphere} and \eqref{Gibbs symplectic sphere} are a pull-back of \eqref{Gibbs cplx} and \eqref{Gibbs symplectic}, whence the name \emph{spherical} for the random matrix $\textbf{G}$ was first coined \cite{Kr09,FK09}. 
Furthermore, the partition functions $ \mathcal{Z}_{n}^{\mathbb{C}} $ and $ \mathcal{Z}_{n}^{\mathbb{H}} $ in \eqref{Gibbs cplx sphere} and \eqref{Gibbs symplectic sphere} are also given by \eqref{partition product expression}, up to explicit constants. 
From this viewpoint, the potential \eqref{potential spherical} in the complex plane can also be realised as point charge insertions at the origin and infinity, cf. \cite{Sam17}.

The potential $Q_n$ in \eqref{potential spherical} is indeed a prominent example of \emph{weakly confining potential}, see e.g. \cite{BGNW21} and references therein. 
Contrary to Coulomb gas ensembles with a confining potential that makes the particles lie in a compact set in the large-$n$ limit, the ensembles \eqref{Gibbs cplx} and \eqref{Gibbs symplectic} tend to be distributed in the whole complex plane. 
To be more precise, due to the Laplacian growth property of the two-dimensional Coulomb gas ensemble, 
for given fixed $\alpha$ and $c$, the ensembles \eqref{Gibbs cplx} and \eqref{Gibbs symplectic} tend to occupy the whole complex plane with the limiting density
\begin{equation} \label{def of rho}
\rho^{\rm sp}(z)= \frac{1}{(1+|z|^2)^2} . 
\end{equation}
As a consequence, using the inverse stereographic projection, we infer that as $n \to \infty$, the ensembles \eqref{Gibbs cplx sphere} and \eqref{Gibbs symplectic sphere} tend to be uniformly distributed on the whole Riemann sphere $\mathbb{S}$, see Figure~\ref{Fig_spherical}. 
We also mention that the equilibrium measure problems associated with spherical Coulomb gases with point charges have been recently studied in \cite{LD21,CK22}. 

\begin{figure}[t]
	\begin{subfigure}{0.48\textwidth}
	\begin{center}	
		\includegraphics[width=0.7\textwidth]{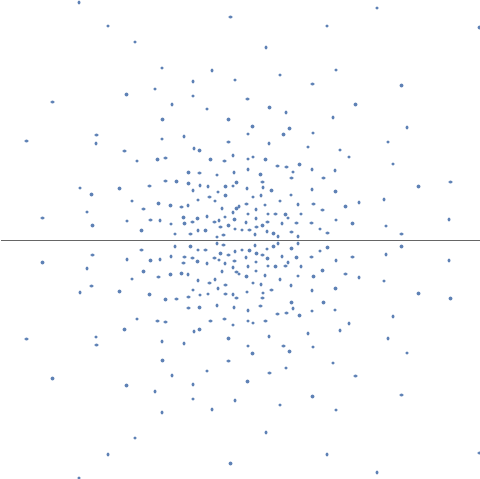}
	\end{center}
	\subcaption{ $\{z_j\}_{j=1}^n$, $z_j \in \C$ }
\end{subfigure}	
	\begin{subfigure}{0.48\textwidth}
	\begin{center}	
		\includegraphics[width=0.7\textwidth]{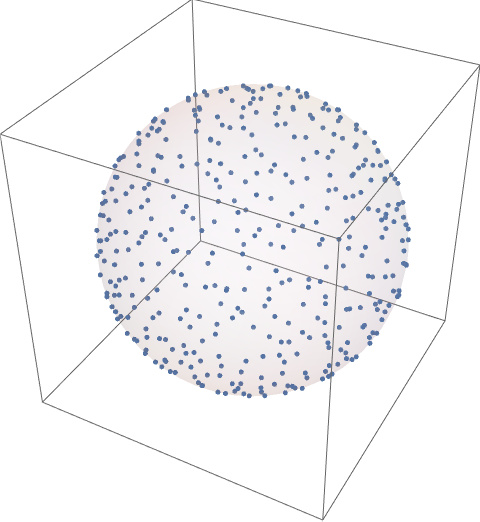}
	\end{center}
	\subcaption{ $\{x_j\}_{j=1}^n$, $x_j \in \mathbb{S}$ }
\end{subfigure}	
	\caption{The plot (A) displays eigenvalues of the spherical symplectic ensembles, where $n=200$. The plot (B) shows its projection to the sphere.} \label{Fig_spherical}
\end{figure}

\subsection{Main results: large gap probabilities}

In this work, we study the asymptotic behaviour of the large gap probabilities as $n \to \infty$ of the ensembles \eqref{Gibbs cplx} and \eqref{Gibbs symplectic}: 
\begin{align}
\begin{split}
 \label{Gap prob def inner}
\mathcal{P}_n(R; \alpha,c) & :=\mathbb{P}\Big[ \textup{there is no eigenvalues in } \mathbb{D}(0,R)  \Big] .
\end{split}
\end{align}
This problem was already considered a decade ago in the work \cite{AZ15}, where the leading term for the complex case with $\alpha=c=0$ was obtained (i.e. the constant $A_1$ in Theorem~\ref{thm:1.1}). 

More generally, obtaining large gap asymptotics is a classical and challenging problem in random matrix theory with a long history, which has attracted considerable attention over the years. The literature pertaining to this topic will be reviewed in Subsection~\ref{Subsection_related work} below.
Note that $\mathcal{P}_n$ can also be written as 
\begin{equation}
\mathcal{P}_n(R; \alpha,c) = \mathbb{P} \Big[ \min\{|z_{1}|,\ldots,|z_{n}|\} \geq R \, \Big]. 
\end{equation}
Hence $\mathcal P_n$ can be viewed as a heavy-tail distribution for the distribution of the smallest moduli. 
Analogous extreme distributions have been widely studied for various one-dimensional log-correlated point processes (such as the Tracy-Widom distribution for the Airy point process); this will be discussed in Subsection~\ref{subsection_1D process}.
In our case, the moduli $\{|z_j|\}_{j=1}^n$ forms a permanantal point process (see e.g. \cite{AIP14}), and the quantity $\mathcal{P}_n$ can also be interpreted as the large deviation probability of this process.
Furthermore, $\mathcal{P}_n$ is intimately connected to an energy minimisation (electrostatics) problem, which is a fundamental aspect of potential theory, see Subsection~\ref{Subsection_balayage}. 
Additionally, it is equivalent to the free energy of a one-component plasma confined by hard walls, see Subsection~\ref{Subsection_free energy} for further details.  

For the models \eqref{Gibbs cplx sphere} and \eqref{Gibbs symplectic sphere} on the sphere, the probability \eqref{Gap prob def inner} coincides with the probability that there are no particles in a spherical cap (whose center is the south pole) of area
\begin{equation}
4\pi \frac{R^2}{1+R^2},
\end{equation}
see Figure~\ref{Fig_spherical_gap} for some illustrations.
Note that instead of \eqref{Gap prob def inner}, one may also consider 
\begin{align}
\begin{split}
 \label{Gap prob def outer}
 \widetilde{\mathcal{P}}_n(R; \alpha,c) & :=\mathbb{P}\Big[ \textup{there is no eigenvalues in } \mathbb{D}(0,R)^c  \Big] = \mathbb{P} \Big[  \max\{|z_{1}|,\ldots,|z_{n}|\} \le R \, \Big]. 
\end{split}
\end{align}
Due to the sphere geometry, there is a duality relation between $\mathcal{P}_n$ and $\widetilde{\mathcal{P}}_n$:
\begin{equation} \label{duality}
\mathcal{P}_n(R; \alpha,c)  =  \widetilde{\mathcal{P}}_n(1/R; c,\alpha).  
\end{equation} 
Alternatively, \eqref{duality} can also be directly obtained using Lemma~\ref{Lem_finite expression} below. 

In the sequel, we add the superscripts and write $\mathcal{P}_n^\C$ and $\mathcal{P}_n^{ \mathbb{H} }$ to distinguish \eqref{Gibbs cplx} and \eqref{Gibbs symplectic}. 
To state our main results, we need some elementary special functions.
Recall that the complementary error function is defined by \cite[Chapter 7]{NIST}
\begin{equation}
\erfc(z) := \frac{2}{ \sqrt{\pi} } \int_z^\infty e^{-t^2} \,dt
\end{equation}
and that the Barnes $G$-function is defined recursively by \cite[Section 5.17]{NIST}
\begin{equation} \label{Barnes G def}
G(z+1)=\Gamma(z)G(z),\qquad G(1)=1, 
\end{equation}
where $\Gamma$ is the standard gamma function. 
We write 
  \begin{align}
        \mathcal{I}_1 &  =  \sqrt{2} \int_{-\infty}^0 \log\Big(\tfrac12 \erfc(t)\Big) \,dt +  \sqrt{2} \int_{0}^\infty \bigg[ \log\Big(\tfrac12 \erfc(t)\Big)  + t^2+\log t + \log (2\sqrt{\pi}) \bigg]\,dt \approx -2.301,   \label{I1}
        \\
        \II_2 &= \int_{-\infty}^\infty \bigg[\frac{(1+t^2)e^{-t^2}}{\sqrt{\pi}\erfc(t)} - \Big(t^3+\frac32 t\Big)\ind_{(0,\infty)}(t)\bigg] \, dt \approx  0.928.  
        \label{I2}
        \end{align}
        
We are now ready to state the asymptotics as $n\to\infty$ of the large gap probabilities of the complex induced spherical ensemble \eqref{Gibbs cplx}. 

\begin{thm}[\textbf{Large gap probabilities of the complex ensemble}] 
\label{thm:1.1}
  Let $\al, c \in \ZZ_{\geq 0}$ and $R>0$ be fixed.
  Then as $n \to \infty$, we have
    \begin{align}
        \log \PPP_n^\C(R;\al,c) = A_{1}  n^2 + A_2 n\log n + A_3 n + A_4 \sqrt n + A_5 \log n + A_6 + \OO(n^{-\frac1{12}}),
    \end{align} 
  where  
    \begin{align}
        A_1 &= \frac12 \Big( \frac{R^2}{1+R^2} -\log(1+R^2) \Big), 
        \\
        A_2 &= -\frac12  \frac{R^2}{1+R^2} ,
        \\
        A_3 &= \frac{R^2}{1+R^2}\Big( 1+\al+c - \frac12 \log(2\pi) -\log R \Big)- \al \log (1+R^2), 
        \\
        A_4 &=\frac{R}{1+R^2}\, \II_1,
        \\
        A_5 & = \frac{1+\al+c}{2(1+R^2)} - \frac16 -\frac{\al + c^2}{2}, 
        \\
        \begin{split}
        A_6 &= \Big(\frac{1+\al+c}{1+R^2}- \frac13 - \al - c^2 \Big) \log R - \frac{\al^2 - c^2}{2}\log (1+R^2)
        \\
        & \quad +\Big(\frac12 - (1+\al+c) \frac{R^2}{1+R^2}\Big) \frac{\log (2\pi)}{2} + \frac{(\al+c)(1+\al+c)}{2} \frac{R^2}{1+R^2}
        \\
        &\quad  - \frac23 \frac{1-R^2}{1+R^2} \, \II_2 + \frac{1}{12} \Big( 1+\frac{11}{1+R^2}\Big) -\zeta'(-1) + \log G(c+1).
        \end{split}
    \end{align} 
    Here, $\zeta$ is the Riemann zeta function, $G$ is the Barnes $G$-function, and the constants $\II_1$ and $\II_2$ are given by \eqref{I1} and \eqref{I2}. 
\end{thm}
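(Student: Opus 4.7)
Because the external potential $Q_n$ in \eqref{potential spherical} is rotationally invariant, the monomials $\{z^k\}_{k\ge 0}$ are orthogonal for the weighted $L^2$-inner product on every annular region $\{|z|>R\}$. A standard application of the Andr\'eief (Heine) identity to \eqref{Gibbs cplx} thus yields the exact finite-$n$ product formula
\begin{equation}\label{product formula plan}
\PPP_n^\C(R;\al,c) = \prod_{k=0}^{n-1}\frac{h_k(R)}{h_k(0)}, \qquad h_k(R) := 2\int_R^\infty r^{2k+2c+1}(1+r^2)^{-(n+\al+c+1)}\,dr,
\end{equation}
and the substitution $u=r^2/(1+r^2)$ identifies each ratio as $\overline{I}_s(k+c+1,n+\al-k)$, where $s:=R^2/(1+R^2)$ and $\overline{I}_s=1-I_s$ is the complementary regularised incomplete Beta function. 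The problem then reduces to the asymptotic analysis of
\begin{equation}\label{sum goal plan}
\log \PPP_n^\C(R;\al,c) = \sum_{k=0}^{n-1}\log \overline I_s\bigl(k+c+1,\,n+\al-k\bigr).
\end{equation}

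Write $N:=n+\al+c+1$ and $p_k:=(k+c+1)/N$, the mean of the relevant Beta distribution; the critical index $k^\ast\approx sn$ is defined by $p_{k^\ast}=s$. The plan is to split \eqref{sum goal plan} into three ranges: a \emph{left bulk} $k\le k^\ast-n^{2/3}$, a \emph{transition window} $|k-k^\ast|<n^{2/3}$, and a \emph{right bulk} $k\ge k^\ast+n^{2/3}$. The right-bulk contribution is $O(e^{-cn^{1/3}})$ and is negligible. In the left bulk, Temme's uniform asymptotic expansion of $I_s(a,b)$ combined with a Laplace analysis at the upper endpoint gives
\[
\log \overline I_s(k+c+1,n+\al-k) = -N\,H(p_k,s) - \tfrac12 \log N + \Phi(p_k,s) + O(N^{-1})
\]
uniformly on compacta of $p_k\in(0,s)$, where $H(p,s):=p\log(p/s)+(1-p)\log((1-p)/(1-s))$ is the Bernoulli relative entropy and $\Phi$ is an explicit logarithmic-rational function. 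Euler--Maclaurin summation then converts
\[
-N^2\int_0^s H(p,s)\,dp = \tfrac{N^2}{2}\bigl(s+\log(1-s)\bigr)
\]
into $A_1 n^2$ (using $1-s=(1+R^2)^{-1}$); expanding $N^2$ in $n$, and combining with the $-\tfrac12\log N$ per summand and the $\Phi$-term, delivers $A_2 n\log n + A_3 n$. Rewriting $\log B(a,b)=\log\Gamma(a)+\log\Gamma(b)-\log\Gamma(a+b)$ and summing via Stirling together with the Barnes-$G$ expansion
\[
\log G(n+1) = \tfrac{n^2}{2}\log n - \tfrac{3n^2}{4} + \tfrac{n}{2}\log(2\pi) - \tfrac{1}{12}\log n + \zeta'(-1) + o(1)
\]
then contributes the $\zeta'(-1)$ and $\log G(c+1)$ constants in $A_6$, along with $A_5\log n$ and the rational-in-$R$ pieces of $A_6$.

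The transition window carries the $\sqrt n$ coefficient and the $\II_1, \II_2$-dependent pieces of $A_6$. Setting $k=k^\ast+\xi\sqrt n$ and expanding Temme's variable, one finds $\eta\sqrt{N/2}=T\xi+O(n^{-1/2})$ for an explicit constant $T=T(R)$ (the first Taylor coefficient of $\eta(p,s)$ at $p=s$), so that
\[
\log \overline I_s(k+c+1,n+\al-k) = \log\bigl(\tfrac12\erfc(T\xi)\bigr) + O(n^{-1/2}).
\]
Reading the sum over $k$ in the window as a Riemann sum in $\xi$ with mesh $1/\sqrt n$, and subtracting the polynomial large-$|\xi|$ asymptotics of $\log(\tfrac12\erfc(T\xi))$ so that they splice onto the bulk expansion on either side, produces exactly the renormalised integrands of \eqref{I1}--\eqref{I2}: the leading $\sqrt n$-contribution is $A_4\sqrt n = \tfrac{R}{1+R^2}\II_1\sqrt n$, and the first sub-leading Euler--Maclaurin correction yields the $-\tfrac23\tfrac{1-R^2}{1+R^2}\II_2$ piece of $A_6$. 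The main obstacle is precisely this bulk/window matching: the window width $n^{2/3}$ is optimal, balancing the $O(n^{-1/2})$ intra-window error against Gaussian-small bulk tails, and the residual mismatch is what ultimately produces the stated remainder $O(n^{-1/12})$. Tracking every constant contribution to $A_6$ with correct signs---the Edgeworth-type subleading corrections to Temme that feed $\II_2$, the boundary terms at $k=0$, and the continuous/discrete matching at the window interface---is the main bookkeeping challenge.
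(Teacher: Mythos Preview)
Your overall architecture matches the paper's: product formula via orthogonal polynomials (your \eqref{product formula plan} is exactly Lemma~\ref{Lem_finite expression} after the reflection $I_x(a,b)=1-I_{1-x}(b,a)$), Temme's uniform asymptotics for $I_s$, a splitting into tail/bulk/window, Euler--Maclaurin on the bulk, and a Riemann sum on the window producing $\II_1,\II_2$. So the strategy is right. There are, however, two genuine gaps.

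\textbf{The extreme tail.} You note the Temme expansion is uniform only ``on compacta of $p_k\in(0,s)$'', but then run Euler--Maclaurin on $\int_0^s H(p,s)\,dp$ without saying how to treat $p_k$ near $0$ (equivalently, $k$ close to $0$ in your indexing). Temme's expansion is \emph{not} valid there, and this region contributes nontrivially at orders $n$, $\log n$, and $1$. In the paper this is handled by two further sub-regimes (the sums $S_2$ and $S_3$): $S_2$ uses a separate large-deviation asymptotic for the binomial tail (Lemma~\ref{lem:2.7}), and $S_3$ is a direct finite summation over the last $\sim N^{1/3}$ indices, which is where the Barnes-$G$ term $\log G(c+1)$ actually enters (via $\sum_\ell \log \ell!$, see Lemma~\ref{S3}). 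Your proposed mechanism for $\log G(c+1)$---summing $\log B(a,b)$ via Stirling---does not produce it, because the Stirling expansion of $\log\Gamma(k+c+1)$ fails uniformly as $k\to 0$; the finite sum $\sum_{k=0}^{c-1}$ has to be isolated by hand.

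\textbf{Window width and the error bound.} The paper's window has half-width $M_1\sqrt{N_1}=N_1^{7/12}$, not $n^{2/3}$, and the exponent $1/12$ in $M_1=N_1^{1/12}$ is chosen precisely to balance the competing errors $M_1^5/\sqrt{N_1}$ and $\sqrt{N_1}/M_1^7$ coming from higher Temme coefficients and from the matching with $S_1^{(3)}$ (see the error terms in Lemmas~\ref{S1^(2)}--\ref{S1}); this balance is what yields $\OO(n^{-1/12})$. Your claim that width $n^{2/3}$ is optimal and that a per-term $O(n^{-1/2})$ window error suffices does not add up: summing $O(n^{-1/2})$ over $\sim n^{2/3}$ indices already gives $O(n^{1/6})$, which swamps $A_6$. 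To reach the constant term one has to push the window expansion several orders further (the paper keeps $u_1,u_2,u_3$ and the functions $g,g_1$ in Lemma~\ref{S1^(2)}), and only then do the matching errors cancel down to $\OO(n^{-1/12})$.
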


\begin{rem}[Conjecture for the terms proportional to $n \log n$ and $\sqrt{n}$]
In view of Theorem~\ref{thm:1.1} and \cite[Theorem 1.7]{Ch23}, we formulate the following conjecture for $A_2$ and $A_4$, valid for a general radially symmetric potential $V$ such that $\{z\in \C: |z|< R\}$ is contained in the droplet:   
\begin{equation} 
A_2= -\frac12 \int_{ |z| <R } \,d\sigma_V (z) \qquad \mbox{and} \qquad  \label{universal sqrt n}
A_4 =R \sqrt{\rho(R) } \, \mathcal{I}_1 ,  
\end{equation}
where $d\sigma_V = \Delta V \,dA= \rho \,dA$ is the equilibrium measure \eqref{eq msr form} associated with the potential $V$, see also Subsection~\ref{Subsection_balayage}. 
This conjecture is consistent with Theorem~\ref{thm:1.1}, and also with the result \cite[Theorem 1.7]{Ch23} on the Mittag-Leffler ensemble. 
In our case, $\rho$ is given by \eqref{def of rho}, whereas for the Mittag-Leffler ensemble, $\rho$ is given by \eqref{rho ML}.
\end{rem}

As previously mentioned, when $\alpha=c=0$, $A_{1}$ was already obtained in \cite[Proposition 3.1]{AZ15}. However, a more precise expansion has not been discovered; indeed even the second term $A_2$ is new to our knowledge.

The symplectic counterpart of Theorem~\ref{thm:1.1} is as follows.

\begin{thm}[\textbf{Large gap probabilities of the symplectic ensemble}] \label{thm:1.2}
 Let $\al, c \in \ZZ_{\geq 0}$ and $R>0$ be fixed. 
  Then as $n \to \infty$, we have
    \begin{align}
        \log \PPP_n^{\mathbb H}(R;\al,c) = \widehat{A}_1 n^2 +  \widehat{A}_2 n\log n +  \widehat{A}_3 n +  \widehat{A}_4 \sqrt n +  \widehat{A}_5 \log n +  \widehat{A}_6 + \OO(n^{-\frac1{12}}),
    \end{align} 
where  
 \begin{align}
  \widehat{A}_1 &= \frac{R^2}{1+R^2} -\log(1+R^2),
  \\
   \widehat{A}_2 &= -\frac12 \frac{R^2}{1+R^2},
   \\
    \widehat{A}_3 &= \frac{R^2}{1+R^2}\Big(2+2\al+2c-\frac12\log(4\pi)-\log R \Big)-\Big(2\al+\frac12\Big)\log(1+R^2),
        \\
         \widehat{A}_4 &= \frac{1}{\sqrt{2}} \frac{R}{1+R^2} \, \II_1   ,
         \\
          \widehat{A}_5 & = \frac{1+\al+c}{2(1+R^2)}-\frac{5}{24}-\frac{\al+c}{2}-c^2 , 
          \\
  \begin{split}
       \widehat{A}_6 &= \Big(\frac{1+\al+c}{1+R^2}-\frac{5}{12}-\al-c-2c^2\Big)\log R - (\al-c)\Big(\frac12+\al+c\Big)\log(1+R^2) 
         \\
        & \quad -\Big(c+(1+\al+c)\frac{R^2}{1+R^2}\Big)\frac{\log(4\pi)}{2}+\Big(c+\frac14\Big)\log 2 +\Big(\frac12+\al+c\Big)(1+\al+c)\frac{R^2}{1+R^2}\\
        &\quad -\frac{1}{3}\frac{1-R^2}{1+R^2}\II_2 + \frac1{24}\Big(1+\frac{11}{1+R^2}\Big)-2\zeta'(-1)+\log\Big(G(c+1)G(c+\tfrac32)\Big).    
  \end{split}
 \end{align}
    Here, 
    the constants $\II_1$ and $\II_2$ are given by \eqref{I1} and \eqref{I2}. 
\end{thm}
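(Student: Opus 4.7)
The proof follows the same strategy as that of Theorem~\ref{thm:1.1}, adapted to the Pfaffian structure. First, I would write
\begin{equation*}
\PPP_n^{\mathbb H}(R;\alpha,c) = \frac{Z_n^{\mathbb H}(R;\alpha,c)}{Z_n^{\mathbb H}},
\end{equation*}
where $Z_n^{\mathbb H}(R;\alpha,c)$ denotes the partition function of \eqref{Gibbs symplectic} restricted to configurations with $\min_j|z_j|\ge R$. Because $Q_n$ is rotationally symmetric and \eqref{Gibbs symplectic} is invariant under $z\leftrightarrow\bar z$, the skew-orthogonal polynomials of the model reduce essentially to monomials, and both partition functions factor into products of radial integrals. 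Combined with the closed form \eqref{partition product expression} for $Z_n^{\mathbb H}$, this yields via Lemma~\ref{Lem_finite expression} a finite representation of the shape
\begin{equation*}
\log \PPP_n^{\mathbb H}(R;\alpha,c) \;=\; \sum_{k=0}^{n-1} \log I_{1/(1+R^2)}\!\bigl(2n+2\alpha-2k-1,\; 2k+2c+2\bigr),
\end{equation*}
where $I_x(a,b)$ is the regularised incomplete beta function; this is the Pfaffian analogue of the exact formula used in the complex case.

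Next, I would analyse each summand according to where the underlying Laplace saddle sits relative to the cutoff. Setting $k = \tau n$, the saddle meets $R$ precisely at $\tau_\star := R^2/(1+R^2)$. This prompts a three-regime split: a \emph{bulk} regime $\tau<\tau_\star-\delta$, where the incomplete beta value is $1-O(e^{-cn})$ and contributes only through smooth Stirling corrections; an \emph{outer} regime $\tau>\tau_\star+\delta$, where it is $O(e^{-cn})$ and its logarithm reduces to the saddle-point action; and a \emph{transition} regime $|k-n\tau_\star| = O(\sqrt n)$, where a uniform Laplace expansion gives
\begin{equation*}
\log I_{1/(1+R^2)}(\cdot,\cdot) = \log\bigl(\tfrac12\erfc(t_k)\bigr) + (\text{subleading}),
\end{equation*}
with $t_k$ a rescaled deviation whose spacing is governed by the doubled beta parameters, and which is the source of the symplectic-versus-complex prefactors in $\widehat{A}_4$ and in the $\mathcal{I}_2$-part of $\widehat{A}_6$.

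Assembling the three regimes produces the coefficients $\widehat{A}_1,\ldots,\widehat{A}_6$. A high-order Euler--Maclaurin expansion applied to the smooth bulk and outer contributions, together with the Stirling expansion of \eqref{partition product expression}, yields the $n^2$, $n\log n$, $n$, and $\log n$ terms as well as the non-integral portion of $\widehat{A}_6$; the appearance of $-2\zeta'(-1)$ (twice the complex coefficient) and of the extra Barnes factor $\log G(c+\tfrac 32)$ stems from the half-integer shifts present in \eqref{partition product expression} for the symplectic case, which upon splitting by parity of the index generate two Barnes-$G$ asymptotics instead of one. The transition regime supplies the $\sqrt n$ term proportional to $\mathcal{I}_1$ and the $\mathcal{I}_2$-piece of $\widehat{A}_6$, by recognising the pertinent Riemann sums of $\log(\tfrac12\erfc(t))$ as the integrals \eqref{I1} and \eqref{I2}.

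The hard part will be the transition analysis. One needs a uniform asymptotic expansion of the incomplete beta function that is sharp enough to extract both the leading erfc contribution (feeding into $\mathcal{I}_1$) and the next-order correction (feeding into $\mathcal{I}_2$), together with matching estimates that splice this expansion smoothly onto the Euler--Maclaurin analysis at the edges of the transition window. Propagating the errors through $\sum_{k=0}^{n-1}$ is what ultimately determines the remainder $O(n^{-1/12})$; keeping consistent bookkeeping of the $\alpha$- and $c$-dependent shifts that enter through the beta parameters is the principal algebraic difficulty in pinning down $\widehat{A}_6$ exactly.
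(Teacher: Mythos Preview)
Your overall strategy---product formula via Lemma~\ref{Lem_finite expression}, a regime split according to whether the incomplete beta is close to $1$, exponentially small, or transitional, Euler--Maclaurin on the smooth pieces, and a uniform \textup{erfc}-type expansion on the transition window---is exactly the paper's approach. A few points, however, need correction.

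First, two small slips. Your product formula should read $I_x(2n+2\alpha-2k,\,2k+2c+2)$; there is no $-1$. More seriously, with your chosen indexing the first beta argument has fraction $(1-\tau)$, so $I_x\approx 1$ occurs when $1-\tau<x$, i.e.\ for $\tau>\tau_\star$, not $\tau<\tau_\star$. Your ``bulk'' and ``outer'' regimes are interchanged.

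The substantive gap is your treatment of the exponentially small tail. You propose a single ``outer'' region handled by the saddle-point action plus Euler--Maclaurin, but this fails at the far edge: when one of the beta parameters is $O(1)$ (here, the second argument $2k+2c+2$ for $k$ near $0$ in your indexing, or equivalently $k$ near $n-1$ in the paper's), the Laplace approximation for $I_x$ is no longer valid. The paper therefore uses a \emph{four}-way split $\wh S_0+\wh S_1+\wh S_2+\wh S_3$ (Figure~\ref{Fig_sum division H}), isolating a far-tail piece $\wh S_3$ of width $M=N^{1/3}$ that is computed directly from the binomial sum via $\sum_{\ell=c}^{\lceil M\rceil-1}\log[(2\ell+1)!]$; the duplication formula then produces $G(c+1)G(c+\tfrac32)$ and $-2\zeta'(-1)$ (Lemma~\ref{H:S3}). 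These constants do \emph{not} come from the Stirling expansion of \eqref{partition product expression}---once you have the incomplete-beta product, $Z_n^{\mathbb H}$ has already cancelled out. Without this extra region your Euler--Maclaurin sum would diverge at the edge and you would not recover $\widehat A_6$.

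Finally, the transition analysis in the paper rests specifically on Temme's uniform expansion of $I_x(a,b)$ (Lemma~\ref{beta}), applied with $N$ replaced by $2N_1$; this is what supplies both the $\erfc$ leading term and the correction coefficients $c_0,c_1$ feeding into $\mathcal I_1$ and $\mathcal I_2$. Your sketch is compatible with this, but you should be aware that the literature version of that expansion contains a typo (see Remark~\ref{Rem_typo}).
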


The numerical verifications of Theorems~\ref{thm:1.1} and \ref{thm:1.2} are provided in Figure~\ref{Fig_numerics}. Here, we present the plots for the case where $\alpha=c=0$, and similar figures can be observed for other values of $\alpha$ and $c$. 

In general, for Pfaffian point processes, the large gap problems have been much less developed, even in dimension one, see Subsection~\ref{subsection_1D process}. Indeed, Theorem~\ref{thm:1.2} provides one of the very few results on precise large gap asymptotics for Pfaffian point processes.

\begin{figure}[h!]
	\begin{subfigure}{0.48\textwidth}
	\begin{center}	
		\includegraphics[width=\textwidth]{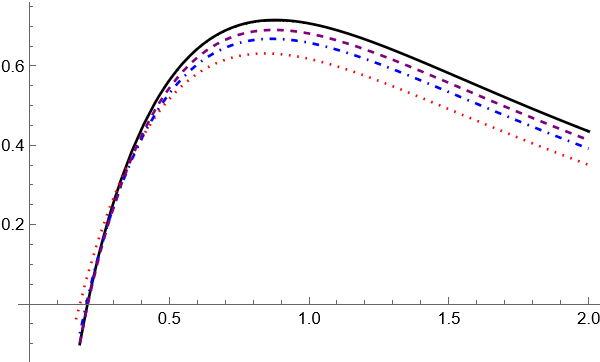}
	\end{center}
	\subcaption{ Complex }
\end{subfigure}	 \quad 
	\begin{subfigure}{0.48\textwidth}
	\begin{center}	
		\includegraphics[width=\textwidth]{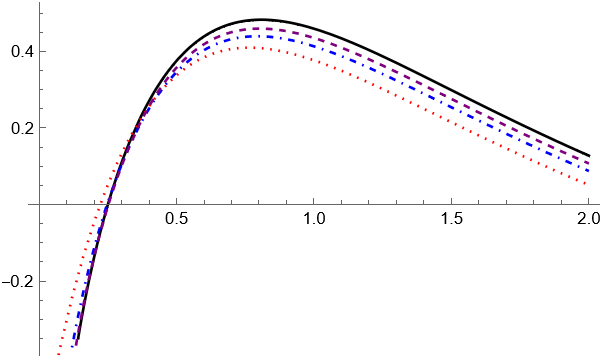}
	\end{center}
	\subcaption{ Symplectic }
\end{subfigure}	
	\caption{The plot (A) shows $R \mapsto A_6$ (black full line) and its comparison $ R \mapsto   \log \PPP_n^\C(R;\al,c) - ( A_{1}  n^2 + A_2 n\log n + A_3 n + A_4 \sqrt n + A_5 \log n ),$ where $\alpha=c=0$. Here, $n=10$ (red, dotted line), $n=40$ (blue, dot-dashed line) and $n=160$ (purple, dashed line), cf. Lemma~\ref{Lem_finite expression}. The plot (B) shows the same figure for the symplectic case.  } \label{Fig_numerics}
\end{figure}

By comparing Theorems~\ref{thm:1.1} and \ref{thm:1.2}, one can observe
\begin{equation}
2\,A_1= \widehat{A}_1, \qquad A_2= \widehat{A}_2. 
\end{equation}
The agreement of the first two terms (up to a factor of 2 for the first term) turns out to be a universal feature of free energy expansions of determinantal and Pfaffian Coulomb gases. 
We refer the reader to \cite{BKS23}, where such the same phenomenon was established for partition functions with soft edges.

\subsection{Related works and further results} \label{Subsection_related work}

In this subsection, we shall discuss the progress made regarding large gap probabilities and related topics, providing further motivations for our study and elucidating our contributions to this direction. In addition, we present further consequences arising from our main results.

\subsubsection{Large gap probabilities for the Ginibre ensemble} \label{subsection_gap Ginibre}

As previously mentioned, Ginibre matrices are non-Hermitian random matrices with i.i.d. Gaussian entries, see \cite{BF22,BF23} for recent reviews. 
The joint probability distributions of their eigenvalues are of the form \eqref{Gibbs cplx} and \eqref{Gibbs symplectic}, where the external potential is instead given by $|z|^2$. Its two-parameter extension, the Mittag-Leffler ensemble, has a potential of the form 
\begin{equation}
Q^{ \rm ML }(z) = |z|^{2b}- \frac{2c}{n} \log |z|, \qquad (b>0,\, c>-1). 
\end{equation}
The associated limiting spectrum is a centred disc of radius $b^{ -\frac{1}{2b} }$ and the density is given by 
\begin{equation} \label{rho ML}
\rho^{ \rm ML }(z) = b^2 |z|^{2b-2}. 
\end{equation}
In particular, if $b=1$, this gives rise to the well-known circular law. 

An early study on the expansion of large gap probabilities of the complex Ginibre ensemble was due to Forrester \cite{Fo92}, where he derived the first \emph{four terms} in the expansion, i.e. $C_j$ ($j=1,2,3,4$) in \eqref{large gap Ginibre} below. (See also \cite{GHS88,JLM93} for the first two terms). 
Subsequently, using the theory of skew-orthogonal polynomials, the symplectic counterpart was addressed in \cite{APS09}, where the authors obtained the \emph{first two terms}, and partially conjectured the third one. (See also \cite{AS13} for the extension to the product of Ginibre matrices.) 

For a considerable period, obtaining a more precise expansion has been a longstanding open problem.
It is only now that the specialised methods of asymptotic analysis required for this task have been mastered in the work \cite{Ch23} of Charlier. 
Consequently, it has been established that the large gap probabilities of the Mittag-Leffler ensemble take the form:
\begin{equation} \label{large gap Ginibre}
\exp\Big( C_1 n^2 + C_2 n\log n + C_3 n + C_4 \sqrt n + C_5 \log n + C_6 + \OO(n^{-\frac1{12}}) \Big). 
\end{equation}
We mention that even though the large gap probabilities were stated in  \cite{Ch23} only for the complex ensembles, the symplectic counterparts also follow by combining the results in \cite{Ch23} with \cite[Proposition 5.10]{BF23}. 

The key ingredient for the analysis of Charlier is a uniform asymptotic expansion of the incomplete gamma function due to Temme, see \cite[Section 11.2.4]{Te96} and \cite[Section 8.12]{NIST}.  
The basic concept of this method involves dividing the required summation based on different asymptotic regimes of the incomplete gamma function. Each division is meticulously chosen to minimise error bounds during asymptotic analysis, ensuring the smallest possible cumulative errors. 
Even though some of these ideas trace back to Forrester’s work \cite{Fo92} more than 30 years ago, their implementation demands technical mastery and a very systematic approach, particularly to achieve precision up to the constants $C_5$ and $C_6$ \cite{Ch23}. This method was also recently used in e.g. \cite{Ch22,ACCL24} to obtain precise results on counting statistics.

Contrary to \cite{Ch23,Ch22,ACCL24}, the gap probability $\mathcal P_n$ is not expressed in terms of the incomplete gamma function but in terms of the \emph{incomplete beta function} (cf. Lemma~\ref{Lem_finite expression}). 
This difference with \cite{Ch23,Ch22,ACCL24} has far-reaching consequences in the proofs of Theorems~\ref{thm:1.1} and ~\ref{thm:1.2}.  
Indeed, our analysis requires very detailed asymptotics of the incomplete beta functions for various regimes of the parameters, and in particular we rely on some results in \cite[Section 11.3.3]{Te96}.  
We illustrate our proofs' strategy in Figures~\ref{Fig_sum division} and ~\ref{Fig_sum division H} below. 
Interestingly, after a very detailed and intricate analysis, the resulting error bounds in Theorems~\ref{thm:1.1} and ~\ref{thm:1.2} are $\OO(n^{-\frac{1}{12}})$, i.e. of the same order as the bounds for the Mittag-Leffler ensemble \eqref{large gap Ginibre}.
As a side note, one of the additional technical (or practical) difficulties in our work stems from a typo present in Temme's original book \cite[Section 11.3.3]{Te96}. 
This typo is also present in NIST \cite[Eq.(8.18.9)]{NIST}, and we are not aware of a literature where these typos have been reported, see Lemma~\ref{beta} for the corrected version.

\subsubsection{Leading order asymptotics and balayage measure} \label{Subsection_balayage}

The leading order asymptotic behaviour of the large gap probabilities can alternatively be derived using a potential theoretic method \cite{ST97} by investigating the associated balayage measure. 
For this purpose, let us recall that for a given probability measure $\mu$ on $\mathbb{C}$, the weighted logarithmic energy $I_V[\mu]$ associated with the external potential $V$ is given by 
\begin{equation} \label{energy}
I_V[\mu]:= \int_{ \mathbb{C}^2 } \log  \frac{1}{|z-w|} \, d\mu(z)\, d\mu(w) +\int_{ \mathbb{C} } V \,d\mu .
\end{equation}
Assuming that $V$ is lower semi-continuous and finite on some set of positive capacity, $I_V[\mu]$ has a unique minimiser $\sigma_V$ with a support $S$ called the droplet.  
Furthermore, if $V$ is $C^2$-smooth in a neighbourhood of $S$, then $\sigma_V$ is absolutely continuous with respect to $dA$, and takes the form
\begin{equation} \label{eq msr form}
d\sigma_V(z) = \Delta V(z) \,\mathbbm{1}_{S_V}(z) \, dA(z), \qquad \Delta =\partial \bar{\partial}.
\end{equation}
This property is often called the Laplacian growth, especially in the context of the Hele-Shaw flow.  
We now consider a gap $\Omega \subset \mathbb{C}$ and redefine the potential by 
\begin{equation} \label{V gap gen}
V_{\Omega}(z)= \begin{cases}
V(z) & \textup{if } z \not \in \Omega,
\smallskip 
\\
+\infty & \textup{if } z \in \Omega.
\end{cases}
\end{equation}
If $S \cap \Omega \neq \emptyset$, then the equilibrium measure associated with $V_\Omega$ is no longer absolutely continuous with respect to $dA$, but rather takes the form
\begin{equation}
d\sigma_{ V_{\Omega} }(z) =  \Delta V(z) \,\mathbbm{1}_{S_V \setminus \Omega }(z) \, dA(z) + \nu(z) \,\mathbbm{1}_{ \partial \Omega }(z) |dz|. 
\end{equation}

As the term \emph{balayage} means \emph{sweeping} in French, from an electrostatic perspective, the constraint that there are no particles in $\Omega$ pushes them to the boundary of $\Omega$ with a non-trivial distribution $\nu(z)$, cf. Figure~\ref{Fig_spherical_gap}.
The balayage measure can be used to derive the leading order of the large gap probabilities. In particular, the leading order of the (logarithm of the) large gap probabilities is proportional to 
\begin{equation} \label{large rate function}
I_{ V }[ \sigma_V ] - I_{ V_\Omega }[ \sigma_{ V_\Omega } ], 
\end{equation}
see e.g. \cite{Ad18,CMV16}. 
From a probabilistic point of view, the leading term \eqref{large rate function} is a large deviation rate function, see e.g. \cite{Se24}. An advantage of this approach is that it can be applied not only to the $\beta=2$ Coulomb gas ensembles but also to general $\beta>0$. However, for a general shape of domain, it is far from obvious to explicitly compute the balayage measure, and some non-trivial examples have been recently obtained in \cite{AR17,Ch23a}. 
On one hand, this approach usually has its limitations in deriving only the leading order asymptotic expansion.

In our present case, one can consider the balayage measure associated with the potential 
\begin{equation} \label{potential gap}
Q^R_n(z):= \begin{cases} 
+\infty & \textup{if } |z|\le R,
\smallskip 
\\
Q_n(z) & \textup{if }|z|>R, 
\end{cases}
\end{equation}
where $Q_n$ is given by \eqref{potential spherical}. In this case, due to the rotational symmetry, the balayage measure becomes uniform distribution on $|z|=R$, see Figure~\ref{Fig_spherical_gap}. 
Then the leading order of our results also follows from computing the associated energy, see \cite[Eq.(2.9)]{Ch23a}.

\begin{figure}[t]
	\begin{subfigure}{0.48\textwidth}
	\begin{center}	
		\includegraphics[width=0.8\textwidth]{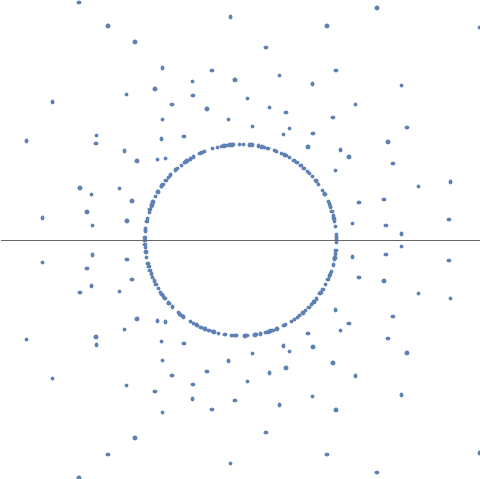}
	\end{center}
	\subcaption{ $\{z_j\}_{j=1}^n$, $z_j \in \C$ }
\end{subfigure}	
	\begin{subfigure}{0.48\textwidth}
	\begin{center}	
		\includegraphics[width=0.8\textwidth]{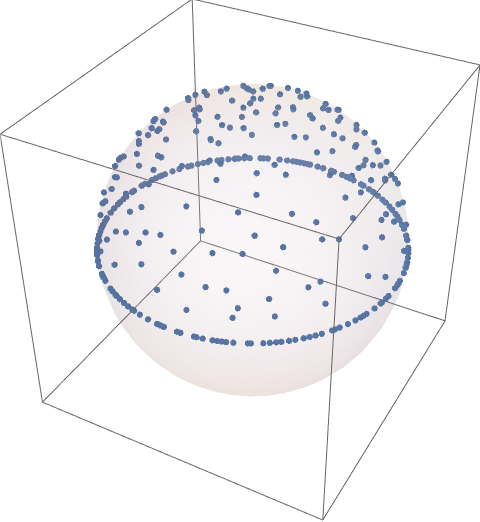}
	\end{center}
	\subcaption{ $\{x_j\}_{j=1}^n$, $x_j \in \mathbb{S}$ }
\end{subfigure}	
	\caption{The plot (A) displays eigenvalues of the spherical symplectic ensembles conditioned to have a gap $\{z\in \C : |z|<1\}$, where $n=200$. The plot (B) shows its projection to the sphere.} \label{Fig_spherical_gap}
\end{figure}

Let us also mention that beyond the two-dimensional Coulomb gases, the zeros of Gaussian analytic functions have been studied in the literature, see e.g. \cite{GN18} and references therein. A characteristic feature when conditioning on a hole event for this model is that it produces a macroscopic region outside the hole region \cite{GN19}. Such a forbidden region has been further investigated in recent literature \cite{NW24}.

\subsubsection{Partition functions for two-dimensional point processes} \label{Subsection_free energy}

For a given external potential $V$, the partition functions of determinantal and Pfaffian Coulomb gases are given by 
\begin{align} \label{ZN V general complex}
Z_{n,V}^{\mathbb C} &:= \int_{\C^n} \prod_{j>k=1}^n |z_j-z_k|^{2} \prod_{j=1}^{n}  e^{-n \, V(z_j) }  \,dA(z_j),
\\
Z_{n,V}^{\mathbb H} &:= \int_{\C^n} \prod_{j>k=1}^n |z_j-z_k|^{2} |z_j-\overline{z}_k|^2  \prod_{j=1}^{n}|z_j-\overline{z}_j|^2  e^{-2n V(z_j) }  \,dA(z_j),   \label{ZN V general symplectic}
\end{align} 
respectively. 
The asymptotic expansion of the partition function stands as one of the cornerstones in Coulomb gas theory. 
For a regular potential $V$, substantial progress has been made in this direction \cite{BBNY19,CFTW15,ZW06,LS17}. Furthermore, for comprehensive literature and the latest advancements, we refer the reader to \cite{ACC23,BKS23,BSY24,Se24}.

For the partition functions $ Z_n^{ \mathbb{C} }$ and $ Z_n^{ \mathbb{H} }$ of induced spherical ensembles, their asymptotic expansions can be derived from well-known formulas involving the Barnes $G$-function.

\begin{prop}[\textbf{Free energy expansions of the induced spherical ensembles}]  \label{Prop_free energy soft}
As $n \to \infty$, the following asymptotic expansion holds. 
\begin{itemize}
    \item \textbf{\textup{(Complex ensemble)}} We have 
\begin{align}
\log Z_n^{ \mathbb{C} } = \mathcal{A}_1 n^2 + \mathcal{A}_2 n\log n + \mathcal{A}_3 n + \mathcal{A}_4 \sqrt n + \mathcal{A}_5 \log n + \mathcal{A}_6 + \OO(n^{-1}),
\end{align}
where 
\begin{gather*}
\mathcal{A}_1 = -\frac12 , \qquad \mathcal{A}_2 = \frac12, \qquad \mathcal{A}_3= \frac{\log (2\pi)}{2} -1  -\alpha-c ,
\qquad \mathcal{A}_4 =0, \qquad \mathcal{A}_5=   \frac{\alpha^2+c^2}{2} + \frac{1}{3}, 
\\
\mathcal{A}_6 =  \frac{\log(2\pi)}{2} -\frac{1}{12} +2\zeta'(-1) -\frac12 (\alpha+c)(\alpha+c+1-\log(2\pi))  -\log \Big(G(\alpha+1)G(c+1) \Big). 
\end{gather*} 
\smallskip 
   \item \textbf{\textup{(Symplectic ensemble)}} We have 
\begin{align}
\log Z_n^{ \mathbb{H} } = \widehat{\mathcal{A}}_1 n^2 +  \widehat{\mathcal{A}}_2 n\log n +  \widehat{\mathcal{A}}_3 n +  \widehat{\mathcal{A}}_4 \sqrt n +  \widehat{\mathcal{A}}_5 \log n +  \widehat{\mathcal{A}}_6 + \OO(n^{-1}),
\end{align}
where 
\begin{gather*}
 \widehat{\mathcal{A}}_1 = -1 ,\qquad  \widehat{\mathcal{A}}_2= \frac12, \qquad  \widehat{\mathcal{A}}_3 = \frac{\log (4\pi)}{2}-2-2\alpha-2c, \qquad  \widehat{\mathcal{A}}_4= 0, \qquad  \widehat{\mathcal{A}}_5= \alpha^2+\frac{\alpha}{2}+c^2+\frac{c}{2} +\frac5{12},
\\
\begin{split}
 \widehat{\mathcal{A}}_6 &=  \log(2\pi)  -\frac{13}{24}+4\zeta'(-1)  
- (\alpha+c) (\alpha+c+\tfrac{3}{2}-\log(2\pi)) -\log \Big( G(\alpha+1) G(\alpha+\tfrac32) G(c+1)G(c+\tfrac32)  \Big).
\end{split}
\end{gather*}
\end{itemize}
\end{prop}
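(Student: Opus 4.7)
The plan is to reduce everything to the classical asymptotic expansions of the gamma and Barnes $G$-functions, starting from the explicit product formula \eqref{partition product expression}. For the complex ensemble, the Barnes recursion $G(z+1)=\Gamma(z)G(z)$ gives
\begin{equation*}
\prod_{k=0}^{n-1}\Gamma(k+c+1)=\frac{G(n+c+1)}{G(c+1)},\qquad \prod_{k=0}^{n-1}\Gamma(n+\alpha-k)=\prod_{j=1}^{n}\Gamma(\alpha+j)=\frac{G(n+\alpha+1)}{G(\alpha+1)},
\end{equation*}
so that
\begin{equation*}
\log Z_n^{\mathbb{C}}=\log n!+\log G(n+c+1)+\log G(n+\alpha+1)-\log G(c+1)-\log G(\alpha+1)-n\log\Gamma(n+\alpha+c+1).
\end{equation*}
I would then insert Stirling together with the Barnes asymptotic expansion
\begin{equation*}
\log G(z+1)=\tfrac12 z^{2}\log z-\tfrac34 z^{2}+\tfrac12 z\log(2\pi)-\tfrac{1}{12}\log z+\zeta'(-1)+O(z^{-2})
\end{equation*}
with $z=n+c$ and $z=n+\alpha$, expand $\log(n+a)=\log n+a/n-a^{2}/(2n^{2})+O(n^{-3})$, and read off the coefficients of $n^{2}$, $n\log n$, $n$, $\log n$ and the constant. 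Neither expansion contains a $\sqrt n$ term, which accounts for $\mathcal{A}_{4}=0$.

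For the symplectic ensemble, the same strategy applies after first using the Legendre duplication formula
\begin{equation*}
\Gamma(2z)=\frac{2^{2z-1}}{\sqrt\pi}\,\Gamma(z)\,\Gamma(z+\tfrac12)
\end{equation*}
to split each of $\Gamma(2k+2c+2)$, $\Gamma(2n+2\alpha-2k)$ and $\Gamma(2n+2\alpha+2c+2)$ into a product of $\Gamma$-values at half-integer shifts. Each product over $k=0,\dots,n-1$ then collapses via Barnes' recursion to a quotient of $G$-functions evaluated at the four shifts $c+1$, $c+\tfrac32$, $\alpha+1$, $\alpha+\tfrac32$ (with and without an $n$-shift), which explains the occurrence of $G(c+1)G(c+\tfrac32)G(\alpha+1)G(\alpha+\tfrac32)$ in $\widehat{\mathcal{A}}_{6}$. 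The accumulated powers of $2$ from the duplication formula, together with the prefactor $2^{n}$, produce the $\tfrac12\log(4\pi)$ contribution to $\widehat{\mathcal{A}}_{3}$ and the $(c+\tfrac14)\log 2$ term in $\widehat{\mathcal{A}}_{6}$. One then applies the Barnes and Stirling expansions as in the complex case.

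The main obstacle is not conceptual but careful bookkeeping to isolate the constant term. The leading $n^{2}\log n$ contribution from $\log G(n+c+1)+\log G(n+\alpha+1)$ must cancel exactly against that of $-n\log\Gamma(n+\alpha+c+1)$, and the $n\log n$ coefficient $\mathcal{A}_{2}=\tfrac12$ survives only as a residual of the Taylor corrections $\log(n+a)=\log n+a/n+\cdots$ picked up from the $\tfrac12 z^{2}\log z$ and $z\log z$ type pieces. A similarly delicate balance of the $\tfrac12 z^{2}\log z$ and $-\tfrac{1}{12}\log z$ pieces of Barnes with the $(z-\tfrac12)\log z$ piece of Stirling produces $\mathcal{A}_{5}=\tfrac12(\alpha^{2}+c^{2})+\tfrac13$. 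The $\zeta'(-1)$ terms and the Barnes $G$-factors at the fixed shifts $\alpha+1,\,c+1$ (and $\alpha+\tfrac32,\,c+\tfrac32$ in the symplectic case) enter respectively from the constant in the Barnes expansion and from the denominators of the telescoped products. Since both Barnes' and Stirling's expansions carry remainders $O(z^{-2})$, the overall remainder in $\log Z_{n}^{\mathbb{C}}$ and $\log Z_{n}^{\mathbb{H}}$ is $O(n^{-1})$, matching the statement.
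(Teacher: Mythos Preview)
Your approach is correct and essentially identical to the paper's: it rewrites $Z_n^{\mathbb{C}}$ and $Z_n^{\mathbb{H}}$ in closed form via the Barnes recursion (and, for the symplectic case, the Legendre duplication formula) and then applies the Stirling and Barnes $G$ asymptotic expansions. One minor slip: the ``$(c+\tfrac14)\log 2$ term in $\widehat{\mathcal{A}}_{6}$'' you refer to is not present in the stated $\widehat{\mathcal{A}}_{6}$ (you may be thinking of $\widehat{A}_6$ in Theorem~\ref{thm:1.2}); the powers of $2$ from duplication ultimately get absorbed into the $\log(4\pi)$ in $\widehat{\mathcal{A}}_3$ and the $\log(2\pi)$ terms in $\widehat{\mathcal{A}}_6$.
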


For the complex case, we also refer to \cite{FF11,Kl14}. See also \cite{BH19} and references therein for the asymptotic behaviours of the logarithmic energy of the configurations.

Let us stress here that $\mathcal{A}_4=\widehat{\mathcal{A}}_4=0$. Such an absence of the $\OO(\sqrt{n})$-term is a general feature of the free energy expansion for $\beta=2$ ensembles, see \cite{BKS23,ACC23} for the rotationally symmetric case and \cite{BSY24} for a non-rotationally symmetric example. 
Furthermore, note that for $\alpha=c=0$, we have
\begin{equation}
\mathcal{A}_5 \Big|_{ \alpha=c=0 } = \frac12-\frac{\chi}{12}, \qquad \widehat{\mathcal{A}}_5  \Big|_{ \alpha=c=0 }  = \frac{1}{2}- \frac{\chi}{24}, 
\end{equation}
where $\chi=2$ is the Euler index of the droplet, in our case, the Riemann sphere. This form of the coefficient of the $\log n$ term was introduced in the work of Jancovici et al. \cite{JMP94,TF99} and is expected to hold for more general droplets \cite{BKS23,ACC23,BSY24}.
Note further that for $\alpha=c=0$, we have  
\begin{equation}
\mathcal{A}_6 \Big|_{ \alpha=c=0 }= \frac{\log(2\pi)}{2} -\frac{1}{12} +\chi \,\zeta'(-1), \qquad  \widehat{\mathcal{A}}_6 \Big|_{ \alpha=c=0 } = \frac{\log(2\pi)}{2} -\frac{13}{24}+ \frac{\chi}{2} \Big( \frac{5 \log 2}{12}+ \zeta'(-1) \Big). 
\end{equation}
Here, again the $\chi$-dependent terms are expected to be universal, see \cite{BKS23}. Other than these $\chi$-dependent terms, the other parts of the $\OO(1)$-term are related to the conformal geometric properties of the equilibrium measure, see e.g. the introduction of \cite{BSY24}.
For the ensembles on the sphere with general radially symmetric potentials, the associated free energy expansions will be addressed in a forthcoming work. 

Contrary to the regular case, when $V$ contains certain singularities, there has been less understanding of the asymptotic expansions. 
Among the various types of singularities one may consider, the following types are particularly noteworthy due to various motivations and have been extensively investigated in the literature.

\begin{itemize}
    \item \textbf{(Jump type singularity)} This is the case where the weight function has a discontinuity. This type of singularity naturally arises in the context of the moment generating function of the disc counting function, see  \cite{Ch22,ACCL24,ACCL23,ABES23,CL23,ABE23} and references therein for recent progress. In particular, for the spherical ensembles, this has been addressed in a recent work \cite{Mo24}. 
    \smallskip 
    \item \textbf{(Root type singularity)} The pointwise root type singularity is equivalent to the point charge insertions. This type of singularities also finds application in the context of moments of characteristic polynomials \cite{DS22,WW19,BSY24}.   
    We also refer to \cite{BBLM15,KLY23,LY17,LY19,LY23,BKP23,BGM18,BGM17} for extensive studies on the associated orthogonal polynomials. 
  On the other hand, the circular (or non-isolated in general) root type singularity naturally arises in the context of the truncations of Haar unitary matrices \cite{ZS99,ACM23} or finite-rank perturbation \cite{FK99}.
      \smallskip 
    \item \textbf{(Hard wall type singularity)} This is the type we are exploring in our present work, namely the potential takes the form \eqref{V gap gen}. 
    While one might consider it as another jump type singularity, unlike the one seen in counting statistics, in the context of gap probabilities, the potential reveals an extreme discontinuity as it takes the value $+\infty$ in a certain region. 
    As far as we are aware, the only instance prior to our current work where the partition function expansion, up to the constant term, was achieved for two-dimensional point processes, is the work \cite{Ch23} of Charlier. 
\end{itemize}
   
In the context we have discussed, our main results can then be stated as free energy expansions of the partition functions \eqref{ZN V general complex} and \eqref{ZN V general symplectic}, associated with the potential \eqref{potential spherical}, which exhibits both root and hard wall type singularities.

\begin{cor}[\textbf{Free energy expansions with singularities}] \label{Cor_free}
Let 
\begin{equation}
\mathsf{A}_j=A_j+\mathcal{A}_j, \qquad \widehat{\mathsf{A}}_j=\widehat{A}_j+\widehat{\mathcal{A}}_j, \qquad (j=1,2,\dots,6), 
\end{equation}  
where $A_j$ and $\widehat{A}_j$ are given in Theorems~\ref{thm:1.1} and ~\ref{thm:1.2} respectively, and $\mathcal{A}_j$ and $\widehat{\mathcal{A}}_j$ are given in Proposition~\ref{Prop_free energy soft}. 
Let $V=Q^R$, where $Q^R$ is given by \eqref{potential gap}. Then as $n \to \infty$, we have 
\begin{align} 
\log Z_{n,V}^{\mathbb C} &= \mathsf{A}_1 n^2 + \mathsf{A}_2 n\log n +\mathsf{A}_3 n + \mathsf{A}_4 \sqrt n + \mathsf{A}_5 \log n + \mathsf{A}_6 + \OO(n^{-\frac1{12}}), 
\\
\log Z_{n,V}^{\mathbb H} &= \widehat{\mathsf{A}}_1 n^2 + \widehat{\mathsf{A}}_2 n\log n +\widehat{\mathsf{A}}_3 n + \widehat{\mathsf{A}}_4 \sqrt n + \widehat{\mathsf{A}}_5 \log n + \widehat{\mathsf{A}}_6 + \OO(n^{-\frac1{12}}).
 \end{align}
\end{cor}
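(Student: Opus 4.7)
The plan is to derive Corollary~\ref{Cor_free} as a direct consequence of Theorems~\ref{thm:1.1}, \ref{thm:1.2} and Proposition~\ref{Prop_free energy soft} by factoring the constrained partition function into the product of the unconstrained partition function and the gap probability.

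The first step is to establish the identity
\begin{equation*}
Z_{n,V}^{\mathbb C} = Z_n^{\mathbb C} \cdot \PPP_n^\C(R;\al,c), \qquad Z_{n,V}^{\mathbb H} = Z_n^{\mathbb H} \cdot \PPP_n^{\mathbb H}(R;\al,c),
\end{equation*}
valid when $V=Q^R$. This follows directly from the definitions: since $Q^R(z)=+\infty$ for $|z|\le R$ and $Q^R(z)=Q_n(z)$ otherwise, the integrals defining $Z_{n,V}^{\mathbb C}$ and $Z_{n,V}^{\mathbb H}$ in \eqref{ZN V general complex} and \eqref{ZN V general symplectic} collapse to integrals over $(\mathbb{D}(0,R)^c)^n$ against the unconstrained Boltzmann factor $\prod_{j>k}|z_j-z_k|^2 \prod_j e^{-nQ_n(z_j)}$ (respectively, the Pfaffian analogue). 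By the very definition \eqref{Gap prob def inner} of the gap probability as the ratio of this restricted integral to the full normalising constant, the factorisation is immediate.

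The second step is to take logarithms and substitute. For the complex case, this gives
\begin{equation*}
\log Z_{n,V}^{\mathbb C} = \log Z_n^{\mathbb C} + \log \PPP_n^\C(R;\al,c),
\end{equation*}
and inserting the expansions from Proposition~\ref{Prop_free energy soft} (which carries an $\OO(n^{-1})$ remainder) and Theorem~\ref{thm:1.1} (which carries an $\OO(n^{-\frac{1}{12}})$ remainder) yields the claimed expansion with coefficients $\mathsf{A}_j = A_j + \mathcal{A}_j$. The overall error is dictated by the weaker of the two bounds, namely $\OO(n^{-\frac{1}{12}})$. The symplectic case is handled in exactly the same way using Theorem~\ref{thm:1.2} together with the symplectic part of Proposition~\ref{Prop_free energy soft}.

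There is no real obstacle: the statement is a bookkeeping corollary rather than a substantive result, and the only subtlety worth flagging is making sure the factorisation identity is written down cleanly with the correct conventions for $Z_n^\C$, $Z_n^{\mathbb H}$ in \eqref{partition product expression} versus $Z_{n,V}^\C$, $Z_{n,V}^{\mathbb H}$ in \eqref{ZN V general complex}--\eqref{ZN V general symplectic}, so that the normalisations match. Once the identity is in place, the error term is inherited verbatim from the main theorems, and there is nothing further to verify.
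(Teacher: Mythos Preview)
Your proposal is correct and matches the paper's own proof essentially verbatim: the paper records the identity $\PPP_n^\C(R;\al,c) = Z_{n,V}^{\mathbb C}/Z_n^{\mathbb C}$ (and its symplectic analogue) and then states that the corollary follows immediately from Theorems~\ref{thm:1.1}, \ref{thm:1.2} and Proposition~\ref{Prop_free energy soft}. If anything, your write-up is slightly more explicit about why the factorisation holds and about the error-term bookkeeping.
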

\begin{proof}
Note that by the definition of $Q^R$ in \eqref{potential gap}, we have
\begin{equation} \label{gap partition rel}
\PPP_n^\C(R;\al,c) = \frac{ Z_{n,V}^{\mathbb C} }{ Z_n^{ \mathbb C } }, \qquad \PPP_n^\H(R;\al,c) = \frac{ Z_{n,V}^{\mathbb H} }{ Z_n^{ \mathbb H } }.
\end{equation}
Then the corollary immediately follows from Theorems~\ref{thm:1.1}, ~\ref{thm:1.2} and Proposition~\ref{Prop_free energy soft}. 
\end{proof}

Contrary to the regular case stated in Proposition~\ref{Prop_free energy soft} or in e.g. \cite{BKS23,BSY24,ACC23}, in Corollary~\ref{Cor_free}, a non-trivial coefficient of the $\OO(\sqrt{n})$-term does arise when considering the hard wall constraints of the potential. This is due to the hard wall constraints of the potential.

\subsubsection{Large gap probabilities for one-dimensional point processes} \label{subsection_1D process}

Comparing with two-dimensional point processes, there have been far more investigations into the gap probabilities for one-dimensional point processes. For the reader's convenience, let us list some of the literature. 

Some early works where the leading order term of large gap probabilities for the eigenvalues of classical Hermitian random matrices include \cite{BDG01,DM06,DM08,KC10,MS14,MV09,VMB07,FW12}. Some of these works also find interesting connections with spin glass model as well as free fermionic systems.
As in the two-dimensional cases, the leading order can be obtained using potential-theoretic computations, sometimes referred to as the Coulomb gas method. 

For determinantal point processes, large gap problems can be interpreted as asymptotic expansions for structured determinants. 
For one-dimensional point processes, the Riemann-Hilbert approach has been particularly successful in obtaining precise asymptotic behaviours. (On the other hand, in dimension two, developing a Riemann-Hilbert approach to large gap problems breaking the radial symmetry is still an outstanding open problem.)
However, multiplicative constants of large gap problems are in general very difficult to obtain and are still open in some cases (see below).

\begin{itemize}
    \item \textbf{(Hankel determinants)} The form of the asymptotic expansion depends on the external potential of the Hermitian random unitary matrices, particularly on their edge behaviours of the associated equilibrium measure. For the generalised Gaussian unitary ensembles, a general result can be found in \cite{DS17,CD18}, and for the Laguerre and Jacobi types, these can be found in \cite{CG21}. Furthermore, the large gap asymptotics of the Muttalib-Borodin model with constant potential involve a generalisation of Hankel determinants and were established in \cite{Ch22a}.
    \smallskip 
    \item \textbf{(Toeplitz determinants)} For the circular unitary ensemble, the probability of an arc being empty of eigenvalues, up to a multiplicative constant, was obtained in \cite{Wi71}. This was later extended in \cite[Proposition 1.1.3]{Fa17} to situations where the gap region consists of multiple arcs of the same length symmetrically located on the unit circle, and higher order correction terms were obtained in \cite{Ma23}.   
    \smallskip 
    \item \textbf{(Fredholm determinants)} In this case, the asymptotic expansions have been investigated for different limiting point processes.
    \begin{itemize}
        \item \emph{(Sine point process)} After the asymptotics of gap probabilities on a single interval were conjectured in \cite{Dy76}, they were proved simultaneously and independently in \cite{Eh06,Kra04}. The case of several interval gaps was investigated in \cite{Wi95}, where later the results were made more explicit in \cite{DIZ97}. In particular, the oscillations were expressed in terms of Jacobi theta functions in \cite{DIZ97} for the first time. Nonetheless, the multiplicative constant was still missing in \cite{DIZ97}. For the two-interval case, this constant was recently obtained in \cite{FK24}.
        \smallskip 
        \item \emph{(Airy point process)} The conjecture on the large gap asymptotics made in \cite{TW94} was shown in \cite{DIK08, BBD08} up to and including the multiplicative constant. For the case of two intervals, this was obtained in \cite{BCL22} without a multiplicative constant, which was later obtained in \cite{KM24}. The gap probabilities in the bulk were also obtained in \cite{BCL22a}.
           \smallskip 
        \item \emph{(Bessel point process)} The large gap asymptotics on a single interval were obtained in \cite{TW94a} up to, but not including, the multiplicative constant. The multiplicative constant conjectured in \cite{TW94a} was later established in \cite{Eh10,DKV11} using different approaches. Recently, the case of several intervals has also been addressed in \cite{BCL23} with the multiplicative constants left undetermined.
    \end{itemize}
  Beyond the standard limiting point processes, there have been further developments on other types of point processes. These include the Wright's generalised Bessel and Meijer-$G$ point processes \cite{CGS19,CLS21,CLS21a}; the Pearcey point process \cite{DXZ21}; the hard edge Pearcey point process  \cite{YZ24a}; the tacnode point process \cite{YZ24}; and the Freud point process \cite{CKM23}.
\end{itemize}

\subsection*{Organisation of the paper} The rest of this paper is organised as follows. 
In Section~\ref{Section_Outline}, we provide an outline of proofs together with necessary preliminaries. Section~\ref{Section_complex} is devoted to the analysis of the complex case, where we prove Theorem~\ref{thm:1.1}. Section~\ref{Section_symplectic} is for the symplectic counterpart, where we provide the proof of Theorem~\ref{thm:1.2}. Additionally, this article contains an appendix where we compile some tail asymptotic behaviours of the binomial distributions.

\section{Outline of proofs}  \label{Section_Outline}

In this section, we provide the preliminaries and outline of proofs of our main results. 

The key ingredients for the proofs are given as follows. 

\begin{itemize}
    \item We first obtain the expression of gap probabilities in terms of products of incomplete beta functions (Lemma~\ref{Lem_finite expression}). 
    This follows from the determinantal/Pfaffian structures of the underlying models and recent development \cite{AEP22} on the planar skew-orthogonal polynomials.  
    \smallskip 
    \item We make use of the uniform asymptotic behaviours of the incomplete beta function (Lemma~\ref{beta}). Although this formula is provided in Temme's book \cite[Section 11.3.3.2]{Te96}, it turns out that there is a typo that can be corrected. We note that the same typo appears in \cite[Eq.(8.18.9)]{NIST}. 
    \smallskip 
    \item The remainder of the proof entails the asymptotic analysis of the finite-$n$ expression utilising the uniform asymptotic behaviours of the incomplete beta function, see Figures~\ref{Fig_sum division} and ~\ref{Fig_sum division H}. This constitutes the main aspect of our proofs, and we apply the method developed by Charlier \cite{Ch23,Ch22} to our model.
\end{itemize}


\begin{figure}[h!]

\begin{center}
\begin{tikzpicture}[scale = 3.75]
\draw[purple, dotted] (0,0) to [out=35,in=145] node[below] {$\log \PPP_n^{ \mathbb{C} }=\sum_{k=0}^{n-1}(*)$} (4,0);

\filldraw[blue] (0,0) circle (0.5pt) node[above left]{$0$};
\filldraw[red] (1,0) circle (0.5pt) node[above]{$k_-$};
\filldraw[red] (2,0) circle (0.5pt) node[above]{$k_+$};
\filldraw[blue] (3,0) circle (0.5pt) node[above]{$k_M$};
\filldraw[blue] (4,0) circle (0.5pt) node[above right]{$n-1$};
 
\filldraw[red] (0,-0.5) circle (0.5pt) node[below left]{$k_-$};
\filldraw[red] (4/3,-0.5) circle (0.5pt) node[below]{$g_-$};
\filldraw[red] (8/3,-0.5) circle (0.5pt) node[below]{$g_+$};
\filldraw[red] (4,-0.5) circle (0.5pt) node[below right]{$k_+$};

\draw[red, dashed] (1,0) -- (0,-0.5);
\draw[red, dashed] (2,0) -- (4,-0.5);

\draw (0,0) -- ++(1,0)  node[midway,above] {$S_0$ (Lem.~\ref{S0})};
\draw (1,0) -- ++(1,0)  node[midway,above] {$S_1$ (Lem.~\ref{S1})};
\draw (2,0) -- ++(1,0)  node[midway,above] {$S_2$ (Lem.~\ref{S2})};
\draw (3,0) -- ++(1,0)  node[midway,above] {$S_3$ (Lem.~\ref{S3})};

\draw (0,-0.5) -- ++(4/3,0)  node[midway,below] {$S_1^{(1)}$ (Lem.~\ref{S1^(1)})};
\draw (4/3,-0.5) -- ++(4/3,0)  node[midway,below] {$S_1^{(2)}$ (Lem.~\ref{S1^(2)})};
\draw (8/3,-0.5) -- ++(4/3,0)  node[midway,below] {$S_1^{(3)}$ (Lem.~\ref{S1^(3)})}; 
\end{tikzpicture}
\end{center}
    \caption{Illustration of the divisions of the summations in \eqref{sum division S0123} and \eqref{sum division S1 123}, together with lemmas on the asymptotic expansions in each sub-summation.}
    \label{Fig_sum division}
\end{figure}
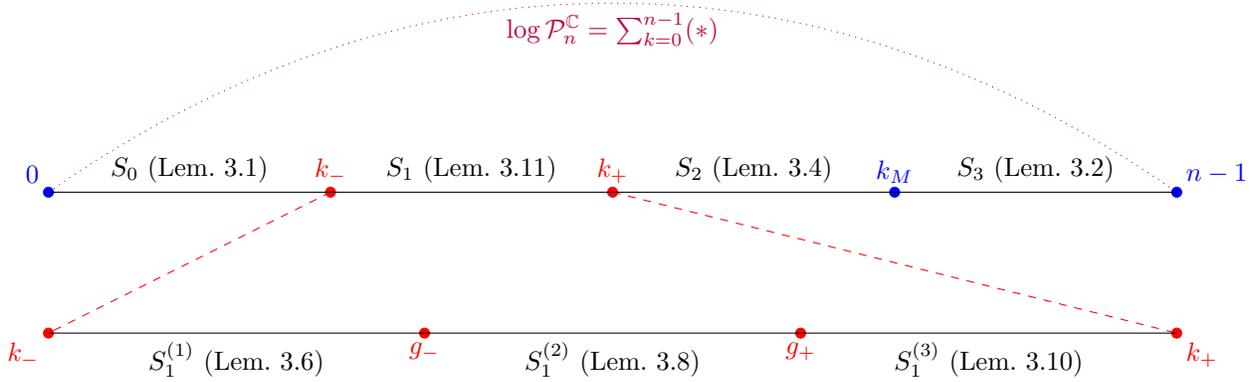

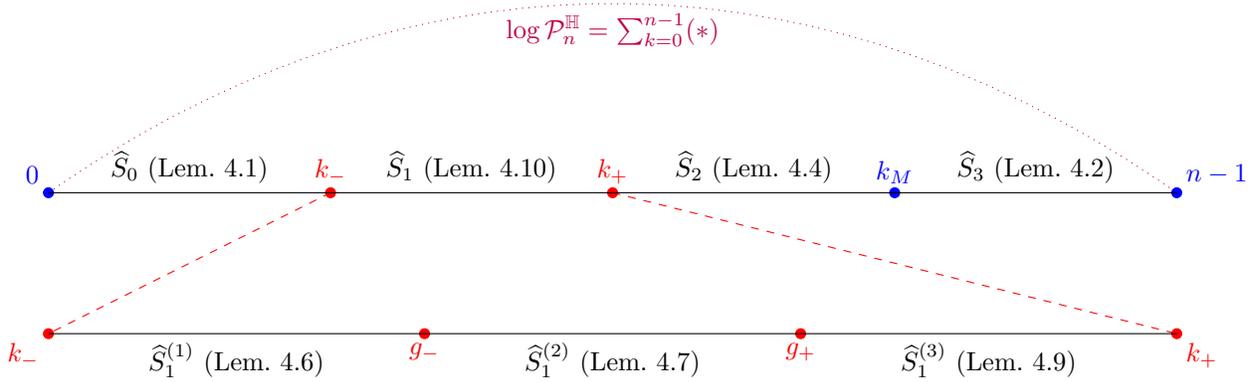
\begin{figure}[h!]

\begin{center}
\begin{tikzpicture}[scale = 3.75]
\draw[purple, dotted] (0,0) to [out=35,in=145] node[below] {$\log \PPP_n^{ \mathbb{H} }=\sum_{k=0}^{n-1}(*)$} (4,0);

\filldraw[blue] (0,0) circle (0.5pt) node[above left]{$0$};
\filldraw[red] (1,0) circle (0.5pt) node[above]{$k_-$};
\filldraw[red] (2,0) circle (0.5pt) node[above]{$k_+$};
\filldraw[blue] (3,0) circle (0.5pt) node[above]{$k_M$};
\filldraw[blue] (4,0) circle (0.5pt) node[above right]{$n-1$};
 
\filldraw[red] (0,-0.5) circle (0.5pt) node[below left]{$k_-$};
\filldraw[red] (4/3,-0.5) circle (0.5pt) node[below]{$g_-$};
\filldraw[red] (8/3,-0.5) circle (0.5pt) node[below]{$g_+$};
\filldraw[red] (4,-0.5) circle (0.5pt) node[below right]{$k_+$};

\draw[red, dashed] (1,0) -- (0,-0.5);
\draw[red, dashed] (2,0) -- (4,-0.5);

\draw (0,0) -- ++(1,0)  node[midway,above] {$\wh S_0$ (Lem.~\ref{H:S0})};
\draw (1,0) -- ++(1,0)  node[midway,above] {$\wh S_1$ (Lem.~\ref{H:S1})};
\draw (2,0) -- ++(1,0)  node[midway,above] {$\wh S_2$ (Lem.~\ref{H:S2})};
\draw (3,0) -- ++(1,0)  node[midway,above] {$\wh S_3$ (Lem.~\ref{H:S3})};

\draw (0,-0.5) -- ++(4/3,0)  node[midway,below] {$\wh S_1^{(1)}$ (Lem.~\ref{H:S1^(1)})};
\draw (4/3,-0.5) -- ++(4/3,0)  node[midway,below] {$\wh S_1^{(2)}$ (Lem.~\ref{H:S1^(2)})};
\draw (8/3,-0.5) -- ++(4/3,0)  node[midway,below] {$\wh S_1^{(3)}$ (Lem.~\ref{H:S1^(3)})}; 
\end{tikzpicture}
\end{center}
    \caption{The analogue of Figure~\ref{Fig_sum division} for the symplectic counterpart, i.e. illustration of the summations in \eqref{H:sum division S0123} and \eqref{H:sum division S1 123}. }
    \label{Fig_sum division H}
\end{figure}

Before introducing the strategy in more detail, let us first provide the proof of Proposition~\ref{Prop_free energy soft}.

\begin{proof}[Proof of Proposition~\ref{Prop_free energy soft}]
By using the definition of the Barnes $G$-function, one can rewrite \eqref{partition product expression} as 
\begin{align}
 Z_n^{ \mathbb{C} } &=   \frac{ n! }{  \Gamma(n+\alpha+c+1)^n  }  \frac{ G(n+c+1) }{ G(c+1) } \frac{ G(n+\alpha+1) }{ G(\alpha+1) },
 \\
 Z_n^{ \mathbb{H} } &=  n! \Big( \frac{2^{2n+2\alpha+2c+1} }{  \pi \Gamma(2n+2\alpha+2c+2)  } \Big)^n \frac{ G(n+c+1) }{ G(c+1) } \frac{ G(n+c+3/2) }{ G(c+3/2) } \frac{ G(n+\alpha+1) }{ G(\alpha+1) } \frac{ G(n+\alpha+3/2) }{ G(\alpha+3/2) }. 
\end{align}
Then the desired asymptotic expansion follows from the well-known asymptotic behaviours of the gamma function \cite[Eq.(5.11.1)]{NIST} 
\begin{equation} \label{log N!}
\log N!= N \log N-N+\frac12 \log N+\frac12 \log(2\pi)+\frac1{12N}+\mathcal{O}\Big(\frac{1}{N^3}\Big), \qquad (N \to \infty)
\end{equation}
and of the Barnes $G$-function \cite[Eq.(5.17.5)]{NIST}
\begin{align}
\begin{split} \label{Barnes G asymp}
\log G(z+1) =\frac{z^2 \log z}{2} -\frac34 z^2+\frac{ \log(2\pi) z}{2}-\frac{\log z}{12}+\zeta'(-1)-\frac{1}{240z^2}+\mathcal{O}\Big( \frac{1}{z^4} \Big), \qquad (z \to \infty). 
\end{split}
\end{align}
\end{proof}

Recall that the regularised incomplete beta function $I_x$ is given by 
\begin{equation}
I_x(a,b):= \frac{B_x(a,b)}{B(a,b)}, \qquad   B_x(a,b):=\int_0^x t^{a-1}(1-t)^{b-1}\,dt 
\end{equation}
where 
\begin{equation}
B(a,b)= B_1(a,b):= \frac{\Gamma(a) \Gamma(b) }{ \Gamma(a+b) }
\end{equation}
is the beta function, see \cite[Chapter 8]{NIST}.  
Note that by definition, we have  
\begin{equation}
I_x(a,b)=1- I_{1-x}(b,a). 
\end{equation}
We also note that for integer values $n \ge m$, we have 
\begin{equation}
I_x(m,n-m+1)= \sum_{j=m}^n \binom{n}{j} x^j (1-x)^{n-j}. 
\end{equation}
This in turn implies that the regularised incomplete beta function is the cumulative distribution function of the binomial distribution, i.e. for $X \sim B(n,p)$, 
\begin{equation} \label{beta binomial cdf}
\mathbb{P}(X \le k) = I_{1-p}(n-k,k+1)= 1-I_p(k+1,n-k). 
\end{equation}

In the following lemma, we express the gap probabilities in terms of the incomplete beta functions. 

\begin{lem}[\textbf{Evaluation of the gap probabilities at finite-$n$}] \label{Lem_finite expression}
Let 
\begin{equation} \label{def of x}
x= \frac{1}{1+R^2}. 
\end{equation}
Then we have 
\begin{align}
\mathcal{P}_n^\C(R; \alpha,c) &= \prod_{k=0}^{n-1} I_x(k+\alpha+1,n+c-k)=  \prod_{k=0}^{n-1} I_x(n+\alpha-k,k+c+1) ,  \label{C:gap_prob}
\\
\mathcal{P}_n^{ \mathbb{H} }(R; \alpha,c) &= \prod_{k=0}^{n-1}  I_{x}(2k+2\alpha+2,2n+2c-2k) = \prod_{k=0}^{n-1}  I_{x}(2n+2\alpha-2k,2k+2c+2). \label{H:gap_prob}
\end{align}
\end{lem}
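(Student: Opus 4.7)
The plan is to exploit the rotational invariance of the weight $e^{-nQ_n(|z|)}$ from \eqref{potential spherical} together with the (Pfaffian) determinantal structures of \eqref{Gibbs cplx} and \eqref{Gibbs symplectic}, which reduce the $n$-dimensional gap integrals over $\{|z_j|>R\}_{j=1}^n$ to products of one-dimensional radial integrals; the substitution $t = r^2/(1+r^2)$, which sends $r=R$ to $t=1-x$ and $r=\infty$ to $t=1$, then converts each radial integral into an incomplete beta function.

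\emph{Complex case.} Rotational invariance implies that the monic orthogonal polynomials on $\C$ with respect to the weight $w(z):=|z|^{2c}(1+|z|^2)^{-(n+\alpha+c+1)}$ are the monomials $p_k(z)=z^k$, with squared norms $h_k=\int_\C|z|^{2k}w(z)\,dA(z)$. By the Andreief identity, the total partition function is $Z_n^\C = n!\prod_{k=0}^{n-1}h_k$, consistent with \eqref{partition product expression}, and the same computation restricted to $\{|z_j|>R\}_j$ produces $n!\prod_{k=0}^{n-1}h_k^R$ with $h_k^R := \int_{|z|>R}|z|^{2k}w(z)\,dA(z)$, so that $\mathcal{P}_n^\C(R;\alpha,c)=\prod_{k=0}^{n-1}h_k^R/h_k$. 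Passing to polar coordinates and applying $t=r^2/(1+r^2)$ yields $h_k = B(k+c+1,n+\alpha-k)$ and $h_k^R = B_x(n+\alpha-k,k+c+1)$, whence $h_k^R/h_k = I_x(n+\alpha-k,k+c+1)$. The other form in \eqref{C:gap_prob} then follows from the reindexing $k\mapsto n-1-k$, which swaps $(n+\alpha-k,\,k+c+1)$ with $(k+\alpha+1,\,n+c-k)$.

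\emph{Symplectic case.} I would invoke \cite{AEP22} on planar skew-orthogonal polynomials for rotationally symmetric weights: with respect to the skew inner product $\langle f,g\rangle_s := \int_\C(f(z)g(\bar z)-f(\bar z)g(z))(z-\bar z)\,e^{-2nQ_n(z)}\,dA(z)$, the monic skew-orthogonal polynomials are simply $q_{2k}(z)=z^{2k}$ and $q_{2k+1}(z)=z^{2k+1}$ (the $|z|^{4c}$ insertion in $e^{-2nQ_n}$ preserves rotational symmetry, so the argument of \cite{AEP22} applies verbatim), with skew-norms $r_k = \langle q_{2k},q_{2k+1}\rangle_s$. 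A short computation in polar coordinates converts the integrand into $4r^{4k+2}\sin^2\theta\cdot e^{-2nQ_n(r)}$; then performing the angular integration and the substitution $t=r^2/(1+r^2)$ gives $r_k = 2\,B(2k+2c+2,\,2n+2\alpha-2k)$, which matches $Z_n^\H=n!\prod_{k=0}^{n-1}r_k$ from \eqref{partition product expression}. The de Bruijn identity analogously produces $\mathcal{P}_n^\H(R;\alpha,c)=\prod_{k=0}^{n-1}r_k^R/r_k$, with $r_k^R=2\,B_x(2n+2\alpha-2k,\,2k+2c+2)$, yielding the second product in \eqref{H:gap_prob}; the first form follows again by the reindexing $k\mapsto n-1-k$.

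The main subtle step is the symplectic one: verifying that the point-charge factors $|z|^{4c}$ and $(1+|z|^2)^{-2(n+\alpha+c+1)}$ do not destroy the skew-orthogonality of the even/odd monomials. This is immediate from the fact that both factors are purely radial, so the angular identity $\int_0^{2\pi}(z^{2k}\bar z^{2l+1}-\bar z^{2k}z^{2l+1})(z-\bar z)\,d\theta \propto \delta_{kl}$ carries over unchanged. Once the skew-norms $r_k$ have been identified, the remainder of the proof is a direct change of variables converting the beta integral $B(\cdot,\cdot)$ and its truncation $B_x(\cdot,\cdot)$ into the regularised incomplete beta function $I_x(\cdot,\cdot)$.
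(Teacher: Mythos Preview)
Your approach coincides with the paper's: express $\mathcal P_n$ as a product of ratios of (skew-)orthogonal polynomial norms via the determinantal/Pfaffian structure, then evaluate each radial integral with the substitution $t=r^2/(1+r^2)$. The complex case is correct as written.

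In the symplectic case there is a genuine error. The monomials $q_{2k}(z)=z^{2k}$ and $q_{2k+1}(z)=z^{2k+1}$ are \emph{not} skew-orthogonal, and the angular identity you state is false. Expanding
\[
(z^{2k}\bar z^{\,2l+1}-\bar z^{\,2k}z^{2l+1})(z-\bar z)
= z^{2k+1}\bar z^{\,2l+1}-z^{2k}\bar z^{\,2l+2}-\bar z^{\,2k}z^{2l+2}+\bar z^{\,2k+1}z^{2l+1}
\]
and integrating in $\theta$ also produces a nonzero contribution when $k=l+1$ (from the second and third terms), so $\langle z^{2l+2},z^{2l+1}\rangle_s\neq 0$. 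Hence the de Bruijn identity does not reduce to a product if you insert the bare monomials. The correct input from \cite[Corollary~3.3]{AEP22}, which the paper invokes, is that for a radially symmetric weight one may take $q_{2k+1}(z)=z^{2k+1}$ while $q_{2k}(z)$ is a specific linear combination of $1,z^2,\dots,z^{2k}$; the skew-norm then still satisfies $r_k=2h_{2k+1,2n}$, because only the leading monomial $z^{2k}$ in $q_{2k}$ pairs nontrivially with $z^{2k+1}$. Since the hard-wall weight $e^{-2nQ_n(z)}\mathbbm{1}_{|z|>R}$ is again radial, the same construction gives $r_k^R=2h_{2k+1,2n}(R)$, and the paper writes $\mathcal P_n^{\mathbb H}(R)=\prod_{k=0}^{n-1} h_{2k+1,2n}(R)/h_{2k+1,2n}$ directly. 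Your final formula and the beta-integral computation are correct once this point is fixed.
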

\begin{proof}
Recall that $Q_n^R$ is given by \eqref{potential gap}. 
Let 
\begin{align}
h_{k,m}:= \int_\C |z|^{2k}\,e^{-m Q_n(z)}\,dA(z),
\qquad 
h_{k,m}(R) := \int_{ \mathbb{D}(0,R)^c } |z|^{2k}\,e^{-m Q_n(z)}\,dA(z).
\end{align}
It follows from the general theory of determinantal and Pfaffian point processes that the associated partition functions \eqref{ZN V general complex} and \eqref{ZN V general symplectic} can be expressed in terms of (skew)-norms of the associated (skew)-orthogonal polynomials.
Combining this with \eqref{gap partition rel}, we have 
\begin{equation}
\mathcal{P}^\C_n(R) = \prod_{k=0}^{n-1} \frac{ h_{k,n}(R) }{ h_{k,n} }, \qquad \mathcal{P}^{ \mathbb{H} }_n(R) = \prod_{k=0}^{n-1} \frac{ h_{2k+1,2n}(R) }{ h_{2k+1,2n} },
\end{equation}
see e.g. \cite[Eq.(1.37)]{BKS23}. In particular, we have used the construction of the skew-orthogonal polynomials \cite[Corollary 3.3]{AEP22}. On the other hand, by using the Euler beta integrals, we have 
\begin{align*}
h_{k,m}(R)  & = 2 \int_R^\infty  \frac{ r^{2k+1+2c \frac{m}{n}} }{ (1+r^2)^{ (n+\alpha+c+1) \frac{m}{n} } } \,dr
=    \int_R^\infty   \Big( \frac{r^2}{1+r^2} \Big)^{k+c \,\frac{m}{n}} \Big( \frac{1}{1+r^2} \Big)^{ (n+\alpha+1)\,\frac{m}{n}-k-2 }    \frac{2r}{(1+r^2)^2} \,dr 
\\
&= \int_0^x t^{ (n+\alpha+1)\,\frac{m}{n}-k-2 }  (1-t)^{k+c \,\frac{m}{n}} \,dt = B_x( (n+\alpha+1)\,\tfrac{m}{n}-k-1, k+c \,\tfrac{m}{n}+1 ), 
\end{align*}
which gives 
\begin{align*}
\frac{ h_{k,n}(R) }{ h_{k,n}  } =  I_x( n+\alpha-k, k+c +1 ), \qquad \frac{ h_{2k+1,2n}(R) }{ h_{2k+1,2n}  } =  I_x( 2n+2\alpha-2k, 2k+2c+2 ). 
\end{align*}
This completes the proof. 
\end{proof}

To analyse the expressions in Lemma~\ref{Lem_finite expression}, it is convenient to introduce the notation 
\begin{equation} \label{def of N}
N  = \begin{cases}
n+\al+c &\textup{ for the complex case,}
\smallskip 
\\
n+\al+c+\frac12 & \textup{ for the symplectic case.}
\end{cases}
\end{equation}
We further define 
\begin{equation} \label{k_pm}
    k_- = \floor{\del N}-\al-1, \qquad k_+ = \ceil{(1-\del)N}-\al-1, 
\end{equation}
and 
\begin{equation} \label{def of kM}
k_M = 
\begin{cases}
N-\ceil{M}-\al - 1  &\textup{ for the complex case,}
\smallskip 
\\ 
 N-\ceil{M}-\al - \frac 32 & \textup{ for the symplectic case.} 
\end{cases}
\end{equation}
Here, we set 
\begin{equation} \label{def of M}
M = N^{\frac13} 
\end{equation}
and choose $\del \in (0,\min\{x,1-x\})$ small enough so that $0<k_-<k_+<k_M<N$ holds for large $N$.
 
By using \eqref{k_pm} and \eqref{def of kM}, we divide the sum in \eqref{C:gap_prob} by the following four parts:
\begin{equation} \label{sum division S0123}
\log \PPP_n^\CC(R;\al,c) = S_0 + S_1 + S_2 + S_3,
\end{equation}
where
\begin{equation} \label{def of S0123}
\begin{alignedat}{2}
        S_0 &= \sum_{k=0}^{k_-} \log I_x(k+\al+1,n+c-k), \qquad &&S_1 = \sum_{k=k_-+1}^{k_+-1} \log I_x(k+\al+1,n+c-k), 
        \\
        S_2 &= \sum_{k=k_+}^{k_M} \log I_x(k+\al+1,n+c-k), \qquad &&S_3 = \sum_{k=k_M+1}^{n-1} \log I_x(k+\al+1,n+c-k).
\end{alignedat}
\end{equation}
Similarly, we divide the sum in \eqref{H:gap_prob} by  
\begin{equation} \label{H:sum division S0123}
    \log \PPP_n^{\mathbb H}(R;\al,c) = \wh{S}_0 + \wh S_1 + \wh S_2 + \wh S_3,
\end{equation}
where
\begin{equation} \label{H:def of S0123}
    \begin{alignedat}{2}
        \wh S_0 &= \sum_{k=0}^{k_-} \log I_x(2k+2\al+2,2n+2c-2k), \quad &&\wh S_1 = \sum_{k=k_-+1}^{k_+-1} \log I_x(2k+2\al+2,2n+2c-2k), \\
        \wh S_2 &= \sum_{k=k_+}^{k_M} \log I_x(2k+2\al+2,2n+2c-2k), \quad &&\wh S_3 = \sum_{k=k_M+1}^{n-1} \log I_x(2k+2\al+2,2n+2c-2k).
\end{alignedat}
\end{equation}
See Figures~\ref{Fig_sum division} and ~\ref{Fig_sum division H}. 
As previously mentioned, the divisions in \eqref{def of S0123} and \eqref{H:def of S0123} are chosen in a way to minimise cumulative errors. For the asymptotic behaviours of $S_0$, $S_2$, and $S_3$, as well as $\wh{S}_0$, $\wh{S}_2$, and $\wh{S}_3$, we can apply various tail behaviours of the binomial distribution (see Appendix~\ref{Appendix_binomial tails}) together with a version of the Euler-Maclaurin formula (Lemma~\ref{lem:3.3}).
The most technical part involves $S_1$ and $\wh{S}_1$, which, from the viewpoint of the asymptotic behaviour of the incomplete beta function, are referred to as the \emph{symmetric} regime. In this case, it is necessary to further divide the summations as in \eqref{sum division S1 123} and \eqref{H:sum division S1 123}. For this regime, the following uniform asymptotics of the incomplete beta function play a key role.

\begin{lem}[\textbf{Uniform asymptotics of the incomplete beta function}] \label{beta} 
Let $x_0 = a/N$. Then as $N \to \infty$, we have
\begin{equation} \label{Incomplete beta Temme}
I_x(a,N-a) = \frac12 \erfc\Big(-\eta \sqrt{\frac{N}{2}}\Big)+R_a(\eta), 
\end{equation}
where 
\begin{equation}
R_a(\eta) = e^{-\frac12 \eta^2 N} \frac{1}{2\pi i}\int_{-\infty}^\infty e^{-\frac12 N u^2} g(u) \, du.
\end{equation}
Here, $\eta$ is given by
\begin{equation} \label{def of eta in lemma}
-\frac12 \eta^2 = x_0 \log\Big(\frac{x}{x_0}\Big) + (1-x_0) \log\Big(\frac{1-x}{1-x_0}\Big), \qquad \frac{\eta}{x-x_0}>0,
\end{equation}
and $g$ is given by 
\begin{equation}
g(u) = \frac{dt}{du}\frac{1}{t-x} - \frac{1}{u+i\eta},
\end{equation}
where the variables $t$ and $u$ are correlated by the bijection 
\begin{equation}
\frac12 u^2 = x_0 \log \Big( \frac{t}{x_0} \Big) +(1-x_0)\log \Big( \frac{1-t}{1-x_0} \Big), \qquad \frac{\Im t}{u}>0, \qquad t \in \mc L,
\end{equation}
on  
\begin{equation}
\mc L= \Big\{ t =  \frac{\sin(\rho\sig)}{\sin[(1+\rho)\sig]}e^{i\sig}: |\sig|<\frac{\pi}{1+\rho} \Big\} \mapsto u \in \RR, \qquad \Big( \rho = \frac{x_0}{1-x_0} \Big). 
\end{equation}

Furthermore, we have
\begin{equation}
R_a(\eta) \sim \frac1{\sqrt{2\pi N}} e^{-\frac12\eta^2N}\sum_{k=0}^\infty \frac{(-1)^k c_k(\eta)}{N^k}
\end{equation}
uniformly for $x \in (0,1)$ and $x_0 \in [\del,1-\del]$, where $0<\del < \frac 12$ is an arbitrarily fixed small constant. 
Here, the coefficients $c_n(\eta)$ is given by 
\begin{equation}
c_n(\eta) = -i (2n-1)!! g_{2n}, \qquad g_n= \frac{1}{n!} \frac{d^n}{du^n} g(u) \Big|_{u=0}.
\end{equation}
In particular, the first two coefficients $c_0$ and $c_1$ are given by
\begin{align}
    c_0(\eta) &= \frac1\eta - \frac{\sqrt{x_0(1-x_0)}}{x-x_0}, \label{def of c0}
    \\
    c_1(\eta) &= -\frac{1}{\eta^3}+ \frac{1-13x_0+13x_0^2}{12\sqrt{x_0(1-x_0)}}\frac{1}{x-x_0}+\frac{(1-2x_0)\sqrt{x_0(1-x_0)}}{(x-x_0)^2}+\frac{(x_0(1-x_0))^{3/2}}{(x-x_0)^3}. \label{def of c1}
\end{align}
All of the $c_k(\eta)$ are analytic at $\eta = 0$.
\end{lem}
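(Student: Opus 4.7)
The plan is to follow Temme's saddle-point derivation \cite[Section 11.3.3.2]{Te96} while independently verifying the coefficient formulas and correcting the typo in $c_0(\eta)$ and $c_1(\eta)$ that is present also in \cite[Eq.~(8.18.9)]{NIST}. Starting from the Euler integral
\[
I_x(a,N-a) = \frac{1}{B(a,N-a)} \int_0^x t^{a-1} (1-t)^{N-a-1} \, dt,
\]
I would rewrite the integrand as $e^{N\phi_{x_0}(t)}/(t(1-t))$, where $\phi_{x_0}(t) := x_0 \log t + (1-x_0) \log(1-t)$ has its unique critical point on $(0,1)$ at the saddle $t = x_0 \in [\delta, 1-\delta]$. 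The Temme curve $\mathcal L$ appearing in the statement is precisely the constant-phase contour through $x_0$ on which the substitution $\tfrac12 u^2 = \phi_{x_0}(t) - \phi_{x_0}(x_0)$, with the branch fixed by $\Im t / u > 0$, is a biholomorphism onto $\mathbb R$ sending $t = x_0 \mapsto u = 0$ and $t = x \mapsto u = -i\eta$, with $\eta$ given by \eqref{def of eta in lemma}. The original contour $[0,x]$ can be deformed to the relevant portion of $\mathcal L$ since the integrand is holomorphic in the region swept out.

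After this change of variables and absorbing the Stirling expansion of $1/B(a,N-a)$, the integral reduces to a Laplace-type integral whose non-exponential factor $\frac{dt}{du}\frac{1}{t-x}$ has a simple pole at $u = -i\eta$. Splitting this factor as $\frac{1}{u+i\eta} + g(u)$ with $g$ regular, the pole piece evaluates in closed form to the complementary error function $\tfrac12 \erfc(-\eta\sqrt{N/2})$, while the regular piece $g$ yields the remainder $R_a(\eta)$. Watson's lemma applied to the Gaussian moment identity $\int_{-\infty}^\infty u^{2k} e^{-Nu^2/2}\,du = \sqrt{2\pi/N}\,(2k-1)!!\,N^{-k}$ then produces the stated asymptotic expansion with $c_k(\eta) = -i(2k-1)!!\,g_{2k}$, the $-i$ being absorbed together with the $\frac{1}{2\pi i}$ prefactor in the integral representation of $R_a(\eta)$. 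Uniformity in $x \in (0,1)$ and $x_0 \in [\delta, 1-\delta]$ follows from standard uniform bounds on the Laplace-type remainders over this compact parameter range, together with the fact that all objects in the construction depend smoothly on $(x,x_0)$.

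The main technical obstacle is the explicit identification of $c_0(\eta)$ and $c_1(\eta)$ in the corrected forms \eqref{def of c0}--\eqref{def of c1}. This amounts to inverting $\tfrac12 u^2 = \phi_{x_0}(t) - \phi_{x_0}(x_0)$ as a power series in $u$ near $u = 0$ to order four or five, substituting into $\frac{dt}{du}\frac{1}{t-x}$, subtracting the Taylor expansion of $\frac{1}{u+i\eta}$ about $u = 0$, and reading off the even Taylor coefficients $g_0$ and $g_2$. This is precisely where the misprint occurs in \cite{Te96,NIST}, so I would redo the algebra from scratch. The cleanest built-in consistency check is that each $c_k(\eta)$ must be analytic at $\eta = 0$ (equivalently $x \to x_0$), since the apparent poles of $\frac{dt}{du}\frac{1}{t-x}$ and of $\frac{1}{u+i\eta}$ at $u = -i\eta$ must cancel exactly; this analyticity holds for the corrected formulas but fails for the misprinted version, which would pin down and confirm the correction.
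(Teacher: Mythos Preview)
Your overall approach is correct and is exactly Temme's saddle-point derivation that the paper cites without reproducing; the paper does not give an independent proof of this lemma but simply refers to \cite[Section~11.3.3.2]{Te96} together with Remark~\ref{Rem_typo}. So in that sense you are supplying more detail than the paper itself.

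However, you have misidentified the location of the typo. You write that the misprint is in $c_0(\eta)$ and $c_1(\eta)$ and that you will redo that algebra from scratch, using analyticity at $\eta=0$ as a consistency check to ``pin down and confirm the correction.'' But as the paper's Remark~\ref{Rem_typo} states explicitly, the subleading coefficients $c_k$ in \cite{Te96,NIST} are \emph{correct}; the typo is in the \emph{leading} term, where \cite{Te96} writes $\tfrac12\erfc\bigl(-\eta\sqrt{(N-a)/2}\bigr)$ instead of $\tfrac12\erfc\bigl(-\eta\sqrt{N/2}\bigr)$. Your own derivation of the pole contribution already gives the correct argument $\sqrt{N/2}$, so you have in fact made the correction without noticing it. Conversely, your proposed consistency check on $c_0,c_1$ will not detect any discrepancy with the literature, because there is none there; if you carry out the power-series inversion you will recover exactly the formulas in \eqref{def of c0}--\eqref{def of c1}, which agree with Temme. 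The analyticity check is still a good sanity check on your own algebra, but it will not ``pin down'' a literature error in the coefficients.
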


\begin{rem} \label{Rem_typo}
In \cite[Section 11.3.3.2]{Te96}, the leading term of the right-hand side of \eqref{Incomplete beta Temme} is incorrectly written as $\frac12 \erfc (-\eta \sqrt{\frac{N-a}{2}} )$. 
The corrected version follows from the integral
    \[
    e^{-\frac12 \eta^2 N}\frac{1}{2\pi i}\int_{-\infty}^\infty e^{-\frac12 N u^2}\frac{1}{u+i\eta}\,du = \frac12 \erfc\Big(-\eta \sqrt{\frac N2}\Big). 
    \] 
This typo still appears in \cite[Eq.(8.18.8)]{NIST}, where the first term of the right-hand side there should be corrected as 
\begin{equation}
I_x(a,b) \sim \frac12 \erfc\Big(- \eta \sqrt{ \frac{a+b}{2} } \Big). 
\end{equation}
The subleading terms in \cite{Te96,NIST} are correct. 
We note that the corrected version \eqref{Incomplete beta Temme} can also be easily verified numerically.
\end{rem}

\section{Gap probabilities of the complex ensemble} \label{Section_complex}

In this section, we prove Theorem~\ref{thm:1.1}. As previously mentioned, we shall use the splitting \eqref{sum division S0123}. In Subsection~\ref{Subsection_S023}, we perform the analysis of $S_0$, $S_2$ and $S_3$. Subsection~\ref{Subsection_S1} is devoted to the analysis of a more complicated part $S_1$, where we further need the division \eqref{sum division S1 123}.

\subsection{Analysis of $S_0$, $S_2$ and $S_3$}  \label{Subsection_S023}

We provide the asymptotic formulas of the $S_j \, (j=0,2,3)$ in increasing order of difficulty: $S_0$ (Lemma~\ref{S0}), $S_3$ (Lemma~\ref{S3}), and then $S_2$ (Lemma~\ref{S2}).

To simplify the notations, let us write
\begin{equation} \label{def of m_k}
    m_k := k+\al+1 
\end{equation}
so that the summand of \eqref{def of S0123} can be rewritten as
\[
I_x(k+\al+1,n+c-k) = I_x(m_k,N+1-m_k).
\]

Recall that $S_0$ is given by \eqref{def of S0123}. 

\begin{lem}[\textbf{Asymptotic behaviour of $S_0$}] \label{S0}
    There exists $C>0$ such that
    \begin{equation}
    S_0 = \OO(e^{-CN}), \qquad \text{as } N\to \infty.
    \end{equation}
\end{lem}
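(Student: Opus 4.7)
The plan is to identify each term $\log I_x(m_k, N+1-m_k)$ in $S_0$ as the logarithm of a binomial upper-tail probability that is exponentially close to $1$, and then to sum the resulting exponentially small bound over $O(N)$ indices.

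First I would use the identity \eqref{beta binomial cdf} to rewrite
\begin{equation*}
I_x(m_k, N+1-m_k) = \mathbb{P}\bigl(B(N,x) \geq m_k\bigr),
\end{equation*}
where $B(N,x)$ is a binomial random variable with parameters $N$ and $x$. For the range $0 \leq k \leq k_-$ we have $m_k = k + \alpha + 1 \leq \lfloor \delta N\rfloor \leq \delta N$, and since $\delta < x$ was chosen strictly below the binomial mean fraction, every $m_k$ lies below $Nx$ by a macroscopic gap. Hence $\mathbb{P}(B(N,x) \geq m_k)$ is close to $1$, and its complement is a lower-tail probability.

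Next I would invoke the exponential lower-tail bound for the binomial distribution (available from the Chernoff/relative-entropy estimate, which is collected in the appendix on binomial tails): there exists $C = C(x,\delta,\alpha,c) > 0$ such that, uniformly in $0 \leq k \leq k_-$,
\begin{equation*}
1 - I_x(m_k, N+1-m_k) = \mathbb{P}\bigl(B(N,x) \leq m_k - 1\bigr) \leq e^{-2CN}
\end{equation*}
for all sufficiently large $N$. Combining this with the elementary inequality $|\log(1-\eps)| \leq 2\eps$ valid for $\eps \in (0, 1/2)$ gives $|\log I_x(m_k, N+1-m_k)| \leq 2 e^{-2CN}$ uniformly in $k$.

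Finally, summing over the at most $N$ indices $0 \leq k \leq k_-$ yields
\begin{equation*}
|S_0| \leq 2N\, e^{-2CN} = \OO(e^{-CN}),
\end{equation*}
which is the claimed bound. The only nontrivial ingredient is the uniform Chernoff-type tail bound, and that is precisely what is compiled in Appendix~\ref{Appendix_binomial tails}; the rest is bookkeeping, so I do not anticipate a serious obstacle here.
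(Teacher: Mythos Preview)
Your proposal is correct and follows essentially the same route as the paper: rewrite $I_x(m_k,N+1-m_k)$ as a binomial upper-tail probability, use an exponential lower-tail bound (the paper invokes Lemma~\ref{Lemma:2.1}, which gives an $\erfc$-based estimate rather than a Chernoff bound, but to the same effect), and sum the resulting $\OO(e^{-CN})$ over $O(N)$ terms.
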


\begin{proof} 
    Recall that $k_-$ is given by \eqref{k_pm}. Thus for $0 \leq k \leq k_-$, we have $\al+1 \leq m_k \leq \floor{\del N}$. Let $X \sim B(N,x)$, where $B$ is the binomial distribution. Here $N$ and $x$ are given by \eqref{def of N} and \eqref{def of x}.
    Note that by \eqref{beta binomial cdf}, we have
    \begin{equation}
        \log I_x(m_k,N+1-m_k) = \log\Big(1- \PP(X\leq m_k-1)\Big) = \log\Big(1-\PP(X<m_k)\Big).
    \end{equation}
    For every $\al+1 \leq m \leq \floor{\del N}$, by using Lemma \ref{Lemma:2.1}, we have
    \begin{align*}
        \PP(X<m) \leq \PP(X<\del N) \le \frac12 \erfc\Big(\frac{x-\del}{\sqrt{2x}}\sqrt N\Big) \leq \frac 12 e^{-\frac{(x-\del)^2}{2x}N} = \OO(e^{-CN}), 
    \end{align*}
    where $C$ is independent of $m$. Hence, 
    the summands of $S_0$ in \eqref{def of S0123} are of order $\OO(e^{-CN})$ uniformly, which completes the proof.
\end{proof}

Next, we analyse the sum $S_3$ in \eqref{def of S0123}. 
For this purpose, let us write 
\begin{equation} \label{theta_Mdel}
    \theta_M := \lceil M \rceil - M, \qquad \theta_{\del} := \del N - \lfloor \del N \rfloor ,
\end{equation}
where $M$ is given by \eqref{def of M}. 

\begin{lem}[\textbf{Asymptotic behaviour of $S_3$}] \label{S3}
    As $N\to \infty$, we have
    \begin{equation} \label{S3 asymptotics}
    S_3 = D_3N + D_5\log N + D_6 - \frac{1}{6} \frac{M^3}{N} + \OO\Big(\frac1M+\frac{M^2}{N}+\frac{M^4}{N^2}\Big),
    \end{equation}
    where
    \begin{align*}
        D_3 &= (M + \theta_M - c) \log x,\qquad D_5 = \frac12 (M^2 + (2\theta_M -1) M + \theta_M^2 - \theta_M -c^2 + c), \\
        D_6 &= \frac12\bigg(M^2+(2\theta_M -1)M + \theta_M^2 -\theta_M - c^2 + c\bigg)\log\Big(\frac{1-x}{x}\Big) - \Big(\frac12 M^2 +M\theta_M + \frac12 \theta_M^2 - \frac1{12}\Big) \log M\\
        & \quad  + \frac34 M^2 + \Big(\theta_M - \frac12 \log 2\pi\Big) M - \Big(\frac12 \log 2\pi \Big) \theta_M - \zeta'(-1) + \log G(c+1). 
    \end{align*}
    Here, $\theta_M$ is given by \eqref{theta_Mdel}, $\zeta$ is the Riemann zeta function and $G$ is the Barnes $G$-function.
\end{lem}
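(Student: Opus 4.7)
\emph{Proof plan.} Introducing the substitution $j = n+c-k-1$, the sum $S_3$ defined in \eqref{def of S0123} becomes
\[
S_3 \;=\; \sum_{j=c}^{\lceil M \rceil - 1} \log I_x(N-j,\, j+1),
\]
where the new index ranges over $c \le j \le \lceil M \rceil - 1$, so in particular $j = \OO(M) = \OO(N^{1/3})$. In this regime the parameter $x_0 = (N-j)/N$ tends to $1$, lying outside the validity window $[\delta,1-\delta]$ of Lemma~\ref{beta}. I would therefore not invoke that uniform asymptotic, but instead exploit the explicit binomial tail
\[
I_x(N-j,\, j+1) \;=\; \sum_{s=0}^{j}\binom{N}{s} x^{N-s}(1-x)^s \;=\; x^N \binom{N}{j} y^j \,(1+E_j), \qquad y = \tfrac{1-x}{x},
\]
obtained by factoring out the dominant $s=j$ term. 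Writing $E_j = \sum_{r=1}^{j}\frac{j!/(j-r)!}{(N-j+1)_r} y^{-r}$, the successive terms decrease geometrically with ratio $\OO(M/N)$, so that $E_j = \OO(j/N)$ and $\log(1+E_j) = \OO(M/N)$ uniformly for $c \le j \le \lceil M \rceil - 1$, contributing at most $\OO(M^2/N)$ to $S_3$ after summation, which is absorbed by the error.

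The asymptotic evaluation of $S_3$ thus reduces to the three sums
\[
\sum_{j=c}^{\lceil M \rceil - 1} N\log x, \qquad \sum_{j=c}^{\lceil M \rceil - 1} j\log y, \qquad \sum_{j=c}^{\lceil M \rceil - 1} \log\binom{N}{j}.
\]
The first two are elementary closed-form arithmetic sums: the $N\log x$-sum produces $D_3 N = (M + \theta_M - c)N \log x$, while using $\sum_{j=c}^{\lceil M \rceil - 1} j = \tfrac12\bigl(M^2 + (2\theta_M-1)M + \theta_M^2 - \theta_M - c^2 + c\bigr) = D_5$, the $j \log y$-sum produces $D_5 \log\bigl((1-x)/x\bigr)$, namely the first line of $D_6$.

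For the third sum, I would write $\log \binom{N}{j} = \bigl(\log N! - \log(N-j)!\bigr) - \log j!$. Stirling's formula \eqref{log N!}, combined with a Taylor expansion of $\log(1 - j/N)$ to order $j^3/N^3$, yields uniformly in $j \le M$
\[
\log N! - \log(N-j)! \;=\; j\log N \;-\; \tfrac{j^2}{2N} \;+\; \tfrac{j}{2N} \;-\; \tfrac{j^3}{6N^2} \;+\; \OO\!\Bigl(\tfrac{j^4}{N^3}+\tfrac{1}{N}\Bigr).
\]
Summing, the $j\log N$ term provides $D_5 \log N$, the $-j^2/(2N)$ term yields the explicit $-M^3/(6N)$ (since $\sum_{j=c}^{\lceil M \rceil - 1} j^2 = M^3/3 + \OO(M^2)$), and the remaining contributions fit within the stated error $\OO(1/M + M^2/N + M^4/N^2)$. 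The remaining piece $\sum_{j=c}^{\lceil M \rceil - 1} \log j!$ I would rewrite as $\log G(\lceil M \rceil + 1) - \log G(c+1)$ using the identity $\log G(m+2) = \sum_{k=1}^{m} \log k!$ (which follows by iterating $G(z+1) = \Gamma(z) G(z)$); then the Barnes-$G$ asymptotic \eqref{Barnes G asymp}, applied at $\lceil M \rceil + 1 = M + \theta_M + 1$ with $\log \lceil M \rceil = \log M + \theta_M/M + \OO(M^{-2})$ and $\lceil M \rceil^2 = M^2 + 2M \theta_M + \theta_M^2$ expanded carefully, delivers the remaining $\log M$, $M^2$, $M$, $\theta_M$, $\zeta'(-1)$ and $\log G(c+1)$ pieces of $D_6$.

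\emph{Main obstacle.} The principal technical difficulty is the error bookkeeping: each of the $\OO(N^{1/3})$ summands must be expanded to sufficiently many orders so that the accumulated error remains $\OO(N^{-1/3})$, and the integer-ceiling discrepancy $\theta_M$ must be tracked coherently through both the Stirling expansion of $\log\binom{N}{j}$ and the Barnes-$G$ expansion at $\lceil M \rceil + 1$, so that all $\theta_M$-dependent subleading contributions recombine precisely into the explicit form of $D_6$.
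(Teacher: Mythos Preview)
Your proposal is correct and follows essentially the same route as the paper: rewrite $S_3$ via $j=n+c-k-1$ as a sum over $c\le j\le \lceil M\rceil-1$, factor the binomial tail $I_x(N-j,j+1)$ so that the dominant term $\binom{N}{j}x^{N-j}(1-x)^j$ is isolated with a correction $E_j=\OO(j/N)$, then split into the three elementary sums and handle $\sum_j\log j!$ via $\log G(\lceil M\rceil+1)-\log G(c+1)$ together with the Barnes-$G$ asymptotic \eqref{Barnes G asymp}. The only cosmetic difference is that the paper expands $\log\binom{N}{j}$ directly from the product $\prod_{i=0}^{j-1}(1-i/N)$ rather than applying Stirling to $N!$ and $(N-j)!$ separately; both give the same expansion with the same error control.
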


\begin{proof}
    By the definitions \eqref{def of m_k}, \eqref{def of kM}, and \eqref{def of N}, as well as \eqref{beta binomial cdf},  
    one can write 
    \begin{equation}
      S_3 = \sum_{\ell = c}^{\ceil{M}-1}\log \PP(Y \leq \ell), \qquad Y \sim B(N,1-x). 
    \end{equation}
    Observe that
    \begin{align}\label{eq:3.2.1}
        \begin{split}
      &\quad  \log \PP(Y \leq \ell) = (N-\ell) \log x + \log \bigg[\sum_{j=0}^\ell \binom{N}{j} (1-x)^jx^{\ell-j}\bigg] 
      \\
        &= (N-\ell)\log x + \log \bigg[\binom N\ell (1-x)^\ell\bigg] +\log\bigg(1+\sum_{j=0}^{\ell-1}\Big(\frac{x}{1-x}\Big)^{\ell-j} \frac{(j+1)\cdots \ell}{(N-\ell+1)\cdots(N-j)}\bigg).
        \end{split}
    \end{align}
    Notice here that the last term in the second line is of order $\OO(\frac \ell N)$. 
    Note also that 
    \begin{equation}\label{eq:3.2.2}
        \log \binom N\ell = \ell \log N + \sum_{j=0}^{\ell-1} \log\Big(1-\frac jN\Big) - \log \ell! = \ell \log N -\frac{1}{N}\frac{\ell(\ell-1)}{2}- \log \ell! + \OO\Big(\frac{\ell^3}{N^2}\Big).
    \end{equation}
   Then, by combining \eqref{eq:3.2.1} and \eqref{eq:3.2.2}, we obtain
    \begin{equation} \label{eq:3.2.3}
        \log \PP(Y \leq \ell) = N \log x + \ell \log N + \ell \log \Big( \frac{1-x}{x} \Big)- \log \ell! - \frac1N \frac{\ell^2}{2} + \OO\Big(\frac{\ell}{N}+\frac{\ell^3}{N^2}\Big).
    \end{equation}
    By taking the sum of \eqref{eq:3.2.3} over $c \leq \ell \leq \ceil M-1$, it follows that
    \begin{align} \label{C:S3_with_ceilM}
    \begin{split}
        S_3 &= N(\ceil M - c) \log x + \Big(\frac{\ceil M^2-\ceil M}{2}-\frac{c^2-c}{2}\Big) \log\Big(\frac{1-x}{x}N\Big)
        \\
        &\quad - \log G(\ceil M +1) + \log G(c+1) - \frac{1}{N}\frac{\ceil{M}^3}{6} + \OO\Big(\frac{M^2}{N}+\frac{M^4}{N^2}\Big).
    \end{split}
    \end{align}
 Then the desired asymptotic formula \eqref{S3 asymptotics} follows from  \eqref{Barnes G asymp} and $\lceil M \rceil = M + \theta_M$. 
\end{proof}

Next, we analyse $S_2$. For this, we need the following version of the Euler-Maclaurin formula taken from \cite[Lemma 3.4]{Ch23}.

\begin{lem}[cf. Lemma 3.4 in \cite{Ch23}] \label{lem:3.3}
    Let $A$, $a_0$, $B$ and $b_0$ be bounded functions of $n \in \ZZ^+$ such that $$ a_n:=An+a_0, \qquad b_n:=Bn+b_0$$ 
    are integer-valued. 
    Furthermore, assume that $B-A$ is positive and bounded away from 0. Let $f$ be a function independent of $n$, and which is $C^3([\min\{\frac{a_n}{n},A\}, \max\{\frac{b_n}{n},B\}])$ for all $n \in \ZZ^+$. Then as $n \to \infty$, we have
    \begin{align}
    \begin{split}
        \sum_{j=a_n}^{b_n} f\Big(\frac jn\Big) &= n\int_A^B f(x) \, dx + \frac{(1-2a_0)f(A)+(1+2b_0)f(B)}{2} \\
        &\quad + \frac{(-1+6a_0-6a_0^2)f'(A)+(1+6b_0+6b_0^2)f'(B)}{12n} \\
        &\quad + \OO\bigg(\frac{\mm_{A,n}(f'')+\mm_{B,n}(f'')}{n^2}+\sum_{j=a_n}^{b_n-1}\frac{\mm_{j,n}(f''')}{n^3}\bigg).    
    \end{split}
    \end{align}
    Here, for any continuous function $g$ on $[\min\{\frac{a_n}{n},A\}, \max\{\frac{b_n}{n},B\}]$,  
    \[\mm_{A,n}(g) := \max_{x \in [\min\{\frac{a_n}{n},A\},\max\{\frac{a_n}{n},A\}]}|g(x)|, \qquad \mm_{B,n}(g) := \max_{x \in [\min\{\frac{b_n}{n},B\},\max\{\frac{b_n}{n},B\}]}|g(x)| .\]
\end{lem}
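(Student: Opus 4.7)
The statement is an Euler–Maclaurin estimate where the endpoints $a_n/n = A + a_0/n$ and $b_n/n = B + b_0/n$ are perturbations of the fixed reals $A, B$ by an $O(n^{-1})$ amount. The plan is to reduce it to the classical Euler–Maclaurin formula applied to the integer-indexed function $g(j) := f(j/n)$ on $[a_n,b_n]$, and then carefully Taylor-expand to absorb the endpoint shifts into the prescribed explicit coefficients.

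Concretely, first I would apply the standard Euler–Maclaurin formula with remainder of order $\sum g'''$:
\[
\sum_{j=a_n}^{b_n} g(j) = \int_{a_n}^{b_n} g(x)\,dx + \frac{g(a_n)+g(b_n)}{2} + \frac{g'(b_n)-g'(a_n)}{12} + O\Big(\textstyle\sum_{j=a_n}^{b_n-1} \max_{[j,j+1]}|g'''|\Big).
\]
Since $g^{(k)}(j) = n^{-k} f^{(k)}(j/n)$, the remainder matches the $\mathfrak m_{j,n}(f''')/n^3$ term in the statement. Next, the integral becomes $n\int_{A+a_0/n}^{B+b_0/n} f(u)\,du$ after the substitution $x=nu$, and I split this as $n\int_A^B f\,du \;+\; n\int_B^{B+b_0/n} f\,du \;-\; n\int_A^{A+a_0/n} f\,du$, Taylor expanding each boundary integral to third order:
\[
n\int_B^{B+b_0/n} f(u)\,du = b_0\,f(B) + \frac{b_0^2}{2n}f'(B) + O\!\left(\frac{\mathfrak m_{B,n}(f'')}{n^2}\right),
\]
and symmetrically at $A$. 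Meanwhile, Taylor expanding $g(a_n), g(b_n), g'(a_n), g'(b_n)$ about $A, B$ yields
\[
\tfrac{1}{2}\bigl[f(A)+\tfrac{a_0}{n}f'(A)+f(B)+\tfrac{b_0}{n}f'(B)\bigr] + \tfrac{1}{12n}[f'(B)-f'(A)] + O\!\left(\tfrac{\mathfrak m_{A,n}(f'')+\mathfrak m_{B,n}(f'')}{n^2}\right).
\]

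Finally I would collect like terms. The coefficient of $f(A)$ assembles from $-a_0$ (boundary integral) and $+\tfrac12$ (from $g(a_n)/2$), giving $(1-2a_0)/2$, and likewise $b_0+\tfrac12=(1+2b_0)/2$ for $f(B)$. For $f'(A)/n$ I combine $-a_0^2/2$ (second-order boundary integral), $+a_0/2$ (from the $g(a_n)/2$ Taylor term) and $-1/12$ (from the $g'(a_n)/12$ Euler–Maclaurin term), producing $(-1+6a_0-6a_0^2)/12$; the symmetric combination for $B$ gives $(1+6b_0+6b_0^2)/12$. All omitted remainders are already covered by the two error classes $\mathfrak m_{A,n}(f'')/n^2, \mathfrak m_{B,n}(f'')/n^2$ (from the endpoint Taylor expansions, whose intervals must be taken as the convex hulls of $\{A,a_n/n\}$ and $\{B,b_n/n\}$ to handle both signs of $a_0,b_0$) and $\sum \mathfrak m_{j,n}(f''')/n^3$ (from the global Euler–Maclaurin remainder).

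The only subtle point, and the one I would take care of explicitly, is the book-keeping of the neighborhoods in which the $f''$-maxima are evaluated, since $a_0, b_0$ may be positive or negative and $a_n/n$ may lie on either side of $A$; this is precisely why the definition of $\mathfrak m_{A,n}(g)$ is written as a maximum over $[\min\{A,a_n/n\},\max\{A,a_n/n\}]$. Once that is kept straight, the proof is a direct computation and matches \cite[Lemma 3.4]{Ch23} verbatim up to notation.
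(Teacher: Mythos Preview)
Your proposal is correct. Note that the paper does not actually prove this lemma: it simply imports it verbatim from \cite[Lemma~3.4]{Ch23} and uses it as a black box, so there is no ``paper's own proof'' to compare against. Your argument via the classical Euler--Maclaurin formula for $g(j)=f(j/n)$ on $[a_n,b_n]$, followed by Taylor-expanding the boundary contributions $n\int_A^{A+a_0/n}f$, $n\int_B^{B+b_0/n}f$, $g(a_n)$, $g(b_n)$, $g'(a_n)$, $g'(b_n)$ to the requisite order, is exactly the standard derivation and the coefficient bookkeeping you give is accurate.
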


We now derive the asymptotic behaviour of $S_2$. Compared to $S_0$ and $S_3$, the asymptotics of $S_2$ contribute to the leading order of the total summation \eqref{sum division S0123}.

\begin{lem}[\textbf{Asymptotic behaviour of $S_2$}] \label{S2}
    As $N \to \infty$, we have
    \begin{equation}
        S_2 = C_1N^2 + C_2N\log N +C_3N + C_5\log N + C_6 + \OO\Big(\frac1M+\frac{M}{N}\Big),
    \end{equation}
    where 
    \begin{align*}
        C_1 &= -\int_{M/N}^\del f_1(t) \,dt, \qquad C_2 = -\frac{\del}{2}, \qquad C_3 = \Big(\theta_M - \frac12\Big) f_1\Big(\frac MN \Big) + \Big(\theta_{\del}-\frac12\Big) f_1(\del) - \int_{M/N}^\del f_2(t) \,dt, \\
        C_5 &= \frac12(M+\theta_{\del}+\theta_M -1) -\frac1{12}, \\
        C_6 &= \frac{1-6\theta_M+6\theta_M^2}{12} f_1'\Big(\frac MN\Big) - \frac{1-6\theta_\del+6\theta_\del^2}{12}f_1'(\del) + \Big(\theta_M - \frac12 \Big) f_2\Big(\frac MN\Big) + \Big(\theta_{\del}-\frac12\Big) f_2(\del) \\
        &\quad + \frac1{12}\log M+ \Big[\frac1{12}\Big(\del - \log \Big(\frac{\del}{1-\del}\Big)\Big) - \frac{\del x}{1-x-\del}-x\log\Big(\frac{1-x-\del}{1-x}\Big)\Big],
    \end{align*}
    where $\theta_M$ and $\theta_{\del}$ are given by \eqref{theta_Mdel}, and the functions $f_1$, $f_2$ are defined as
    \begin{equation} \label{def:f1f2}
        f_1(t) := t \log \Big(  \frac t{1-x} \Big)+ (1-t) \log \Big(\frac{1-t}{x}\Big), \qquad f_2(t) := \log \Big(\frac{1-x-t}{(1-x)(1-t)}\Big) + \frac12 \log \Big(2\pi t(1-t) \Big), 
    \end{equation}
    for $0<t<1-x$.
\end{lem}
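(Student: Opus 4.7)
My proof applies the uniform asymptotic expansion of the incomplete beta function (Lemma~\ref{beta}) to each summand in $S_2$, and then carries out the summation via the Euler-Maclaurin formula (Lemma~\ref{lem:3.3}). In the range $k\in[k_+,k_M]$, the ratio $m_k/(N+1)$ lies in the upper-tail regime beyond the critical value $x$, so I first invoke the symmetry $I_x(a,b)=1-I_{1-x}(b,a)$ and apply the lemma with $a'=N+1-m_k$, total $N+1$, and base $x\mapsto 1-x$. The corresponding parameter $x_0':=a'/(N+1)$ then ranges over $[M/(N+1),\,\delta+O(1/N)]$, staying uniformly away from the critical value $1-x$ (since $\delta<1-x$). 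A direct calculation shows $\eta'>0$ and $(\eta')^2/2=f_1(x_0')$, and invoking the identity $1/\eta'-c_0(\eta')=\sqrt{t(1-t)}/((1-x)-t)$ with $t:=x_0'$ (from \eqref{def of c0}) together with the asymptotic expansions of $\tfrac12\erfc(y)$ as $y\to+\infty$ and $R_{a'}(\eta')$, one obtains
\[
I_x(m_k,N+1-m_k)=\frac{e^{-(N+1)f_1(t)}}{\sqrt{2\pi(N+1)}}\cdot\frac{\sqrt{t(1-t)}}{(1-x)-t}\cdot\Big[1+\frac{P(t)}{N+1}+O(N^{-2})\Big],
\]
where $P(t)=(-1/\eta'^3+c_1(\eta'))/(1/\eta'-c_0(\eta'))$ is explicit via \eqref{def of c0}--\eqref{def of c1}.

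The key algebraic step is to reindex by $j=N-m_k$, so that $w:=j/N\in[M/N,\delta]$ and $t=w+(1-w)/(N+1)$. Taylor expanding each factor around $w$ and using the identity $f_1(w)+(1-w)f_1'(w)=\log(w/(1-x))$ (an immediate consequence of the definition of $f_1$), the $O(1)$ subleading terms collapse exactly to $-f_2(w)$, while the $O(1/N)$ subleading contributions combine into an explicit function $\widetilde P(w)$. Thus
\[
\log I_x(m_k,N+1-m_k)=-Nf_1(w)-f_2(w)-\tfrac12\log(N+1)+\frac{\widetilde P(w)}{N+1}+O(N^{-2}),
\]
with $\widetilde P(w)\sim 1/(12w)$ as $w\downarrow 0$ (the $\widetilde P$ arising by cancellation between the Taylor correction $-\tfrac12(1-w)^2f_1''(w)+(1-w)g_0'(w)$ of the Gaussian factor and the $P(w)$ from the $1/N$ term in the incomplete-beta expansion).

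Finally, I apply Lemma~\ref{lem:3.3} with $A=M/N$, $B=\delta$, $a_0=\theta_M$, $b_0=-\theta_\delta$ to the four sums arising from this expansion. The leading integrals produce $C_1N^2=-N^2\int_{M/N}^\delta f_1\,dt$, the $-N\int_{M/N}^\delta f_2\,dt$ piece of $C_3N$, and $C_2N\log N=-\tfrac\delta2 N\log N$ from the count of terms; the boundary values of $f_1,f_2$ at $A,B$ supply the remaining $(\theta_{M}-1/2)f_j(M/N)+(\theta_{\delta}-1/2)f_j(\delta)$ contributions to $C_3$ and $C_6$; the $1/(12n)$-level Euler-Maclaurin correction applied to $-Nf_1$ furnishes the $f_1'(M/N),f_1'(\delta)$ pieces of $C_6$; and the count $\#\{j\}=\delta N-M-\theta_M-\theta_\delta+1$ multiplied by $-\tfrac12\log(N+1)$, together with the logarithmically divergent integral $\int_{M/N}^\delta\widetilde P(t)\,dt$, assemble the full $C_5\log N$ coefficient and the $\log M$-terms inside $C_6$. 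The main obstacle is a delicate error analysis near the left endpoint: since $f_1''(t)\asymp 1/t$, $f_1'''(t)\asymp 1/t^2$ and $\widetilde P(t)\asymp 1/t$, the Euler-Maclaurin remainders $\mm_{A,n}(f_j'')/n^2$ and $\sum_j\mm_{j,n}(f_j''')/n^3$ must be tracked carefully as $t\downarrow 0$, and the choice $M=N^{1/3}$ is precisely calibrated to balance the competing error sources, yielding the claimed bound $O(1/M+M/N)=O(N^{-1/3})$.
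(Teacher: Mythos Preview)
Your overall architecture---reduce each summand to the form $-Nf_1(w)-\tfrac12\log N-f_2(w)+(\text{explicit})/N+\cdots$ and then apply Lemma~\ref{lem:3.3} with $A=M/N$, $B=\delta$, $a_0=\theta_M$, $b_0=-\theta_\delta$---is exactly what the paper does, and the Euler--Maclaurin bookkeeping you describe in the final paragraph is correct. The discrepancy is entirely in how you obtain the per-term expansion.

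You propose to derive it from Lemma~\ref{beta} applied to $I_{1-x}(a',N+1-a')$ with $a'=N{+}1{-}m_k$. But Lemma~\ref{beta} is stated to hold uniformly only for $x_0\in[\delta,1-\delta]$, whereas your parameter $x_0'=a'/(N{+}1)$ ranges down to $M/(N{+}1)\sim N^{-2/3}\to 0$ at the endpoint $k=k_M$. You verify only that $x_0'$ stays away from the critical value $1-x$; that is not sufficient. Concretely, the coefficient $c_1(\eta')$ in \eqref{def of c1} contains the factor $(12\sqrt{x_0'(1-x_0')})^{-1}$, which blows up like $\sqrt{N/M}$ in this regime, and the higher $c_k$ degenerate more severely. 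So the expansion you quote, with uniform $O(N^{-2})$ remainder, is not justified by the lemma as stated; making this route work would require an extension of Temme's asymptotics to $x_0\to 0$ together with a fresh error analysis that you have not supplied.

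The paper avoids this issue entirely. For $S_2$ it does not use Lemma~\ref{beta} at all; instead it invokes the elementary binomial-tail expansion of Lemma~\ref{lem:2.7} (proved directly from Stirling and a bound on the tail series), which is designed precisely for the range $\phi(N)\le tN\le\delta N$ with $\phi(N)\to\infty$. That lemma gives
\[
\log\PP\big(X\ge(1-t)N\big)=-Nf_1(t)-\tfrac12\log N-f_2(t)+\frac{f_3(t)}{N}+\OO\Big(\frac{1}{t^3N^3}+\frac{1}{N^2}\Big),
\]
with $f_3(t)=\tfrac1{12}-\tfrac{tx}{(1-x-t)^2}-\tfrac{1}{12t(1-t)}$, immediately in the variable $t=j/N$. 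No $N{+}1\to N$ Taylor step is needed, and the subsequent Euler--Maclaurin computation is then identical to yours. (Incidentally, $f_3(t)\sim -\tfrac{1}{12t}$ as $t\downarrow 0$, so your claimed $\widetilde P(w)\sim +\tfrac{1}{12w}$ appears to carry the wrong sign.)
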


\begin{proof}
Let $X \sim B(N,x)$. Then by \eqref{k_pm} and \eqref{def of kM}, we have 
\begin{equation} \label{S2_summation}
    S_2 = \sum_{k=k_+}^{k_M} \log \PP(X \geq k+\al+1) = \sum_{j=\ceil M}^{\floor{\del N}}\log \PP(X \ge N-j).
\end{equation}
Let $t = j/N$. By applying Lemma \ref{lem:2.7}, we have
    \[
    \log \PP(X \ge tN) = -Nf_1(t) -\frac12 \log N - f_2(t)
    + \frac{1}{N}f_3(t) + \OO\Big(\frac1{t^3N^3}+\frac1{N^2}\Big),
    \]
    where 
     \begin{equation} \label{def:f3}
        f_3(t) = \frac1{12}-\frac{tx}{(1-x-t)^2}-\frac{1}{12t(1-t)}.
    \end{equation}
    Note that 
    \[
    \sum_{j = \lceil M \rceil}^{\lfloor \del N \rfloor} \OO\Big(\frac{1}{j^3}+\frac{1}{N^2}\Big) = \OO\Big(\frac{1}{M^2}+\frac{1}{N}\Big).
    \]
    By using Lemma \ref{lem:3.3} with $A(N) = M/N$, $a_0(N) = \theta_M$, $B(N) = \del$ and $b_0(N) = -\theta_{\del}$, we obtain
    \begin{align*}
        -N\sum_{j = \lceil M \rceil}^{\lfloor \del N \rfloor} f_1\Big(\frac jN \Big) &= -N^2 \int_{M/N}^\del f_1(t)\, dt + \Big[\Big(\theta_M - \frac12\Big) f_1\Big(\frac MN\Big) + \Big(\theta_\del - \frac12\Big) f_1(\del) \Big] N\\
        &\quad + \frac{1-6\theta_M + 6\theta_M^2}{12}f_1'\Big(\frac MN\Big) - \frac{1-6\theta_{\del}+6\theta_{\del}^2 }{12}f_1'(\del) + \OO\Big(\frac 1{M}\Big), \\
        -\sum_{j = \lceil M \rceil}^{\lfloor \del N \rfloor} f_2\Big(\frac jN\Big) &= -N \int_{M/N}^\del f_2(t)\, dt +\Big(\theta_M-\frac12\Big)f_2\Big(\frac{M}{N}\Big) +\Big(\theta_\del-\frac12\Big)f_2(\del) + \OO\Big(\frac1M + \frac 1N\Big), \\
        \frac 1N \sum_{j = \lceil M \rceil}^{\lfloor \del N \rfloor} f_3\Big(\frac jN\Big) &= \int_{M/N}^\del f_3(t)\, dt + \OO\Big(\frac{1}{M}\Big) \\
        &= \frac1{12}\Big(\del-\log \Big(\frac{\del}{1-\del}\Big)\Big)-\frac{\del x}{1-x-\del}-x\log \Big(\frac{1-x-\del}{1-x}\Big) + \frac{1}{12}\log \Big(\frac MN \Big)+ \OO\Big(\frac1M+ \frac MN\Big).
    \end{align*}
    Combining all of the above, the proof is complete. 
\end{proof}

We conclude this subsection by presenting the asymptotics of $S_0 + S_2 + S_3$, derived from the previous lemmas.

\begin{lem}[\textbf{Asymptotic behaviour of $S_0+S_2+S_3$}] \label{S2+S3}
   As $N \to \infty$, we have
   \begin{equation} \label{S023 asymp}
  S_0+ S_2 + S_3 = E_1N^2 + E_2 N\log N + E_3 N + E_5 \log N + E_6 + \OO \Big(\frac{1}{M}+\frac{M^2}{N}+\frac{M^4}{N^2}\Big),
   \end{equation}
 where 
    \begin{align*}
        E_1 &= \frac12 \log x -\frac12 \del^2 \log \Big( \frac\del {1-x} \Big) + \frac12 (1-\del)^2 \log \Big( \frac{1-\del}{x} \Big) + \frac\del 2, \qquad E_2 = -\frac\del2, 
        \\
        E_3 &= \Big(\frac12 - c\Big) \log x + (1-x-\del) \log \Big( \frac{1-x-\del}{1-x} \Big) +\del - \frac12 \del \log(2\pi\del) - \frac12(1-\del)\log(1-\del) - \Big(\frac12 - \theta_\del\Big) f_1(\del), \\
        E_5 &= \frac{1}{2}\Big(\theta_{\del} + c - c^2\Big) - \frac 5{12}, \\
        E_6 &= \Big(\frac1{12}-\frac{c-c^2}{2}\Big)\log \Big( \frac{x}{1-x} \Big) -\frac14 \log(2\pi)-\zeta'(-1)+\log G(c+1) \\
        &\quad -\frac{1-6\theta_\del+6\theta_\del^2}{12}f_1'(\del) - \Big(\frac12 - \theta_\del\Big)f_2(\del) + \frac1{12}\del - \frac{\del x}{1-x-\del} -x\log \Big( \frac{1-x-\del}{1-x} \Big) -\frac1{12}\log \Big( \frac{\del}{1-\del} \Big) . 
    \end{align*}
    Here, $f_1$ and $f_2$ are defined as \eqref{def:f1f2}.
\end{lem}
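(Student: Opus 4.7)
The plan is simply to add the three asymptotic expansions provided by Lemmas \ref{S0}, \ref{S2}, and \ref{S3}. Since Lemma \ref{S0} gives $S_0 = \OO(e^{-CN})$, this contribution is absorbed into the error term, so the substantive task is to combine the expansions of $S_2$ and $S_3$ and verify that the resulting coefficients simplify to the claimed $E_j$'s with error $\OO(1/M + M^2/N + M^4/N^2)$.

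The only subtlety is that the expansions in Lemmas \ref{S2} and \ref{S3} still depend explicitly on the auxiliary cut-off $M = N^{1/3}$: through $C_1 = -\int_{M/N}^\delta f_1(t)\,dt$ and through $f_1(M/N)$, $f_2(M/N)$, $f_1'(M/N)$, $\theta_M$ in $C_3, C_5, C_6$ from $S_2$, and through $D_3, D_5, D_6, -M^3/(6N)$ from $S_3$. The $E_j$'s, however, are free of $M$. So the key step is to Taylor-expand everything around the small parameter $M/N \to 0$ and check that all $M$-dependent pieces either cancel across $S_2$ and $S_3$ or sit inside the error term.

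Concretely, the local expansion
\[
f_1(t) = -\log x + t\log t + t\Bigl[\log\tfrac{x}{1-x} - 1\Bigr] + \tfrac{t^2}{2} + \tfrac{t^3}{6} + \OO(t^4), \qquad t\to 0^+,
\]
gives, after elementary integration and multiplication by $N^2$,
\[
N^2 \int_0^{M/N} f_1(t)\,dt = -MN\log x + \tfrac{M^2}{2}\log M - \tfrac{M^2}{2}\log N + \tfrac{M^2}{2}\log\tfrac{x}{1-x} - \tfrac{3M^2}{4} + \tfrac{M^3}{6N} + \OO(M^4/N^2).
\]
The $-MN\log x$ piece cancels the $MN\log x$ appearing in $D_3 N$, while $+M^3/(6N)$ cancels the explicit $-M^3/(6N)$ in Lemma \ref{S3}. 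Similarly, $f_1(M/N) = -\log x + \OO((M/N)\log(M/N))$, $f_2(M/N) = \tfrac12\log(2\pi M/N) + \OO(M/N)$, and $f_1'(M/N) = \log(M/N) - \log((1-x)/x) + \OO(M/N)$ contribute the pieces of $C_3 N, C_6$ which combine with the remaining $M^2 \log M$, $M^2\log N$, $M^2$, $M\log M$ and $\theta_M$-dependent terms from $D_5\log N$ and $D_6$. Because $\theta_M, \theta_\delta \in [0,1)$ are bounded, the resulting cancellations at the orders $N, \log N$ and constant are exact, and the remaining terms are smaller than $1/M$ (since $\theta_M$-type terms multiplied by $f_1'(M/N) \sim \log M$ stay bounded, and $M/N = N^{-2/3}$-corrections multiplied by $N$ are of order $N^{1/3} = M$, which fits inside $M^2/N = N^{-1/3}$ after checking more carefully; the worst remainder is $\OO(1/M)$ from the $N^2 \cdot \OO((M/N)^4 \log(M/N))$ contribution in the integral expansion).

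The hard part is not conceptual but combinatorial: one must lay out all $M$-dependent contributions from both $S_2$ and $S_3$ side-by-side, substitute $\lceil M\rceil = M + \theta_M$ everywhere, collect like terms across the two expansions, and verify the exact cancellation at each of the orders $MN$, $M^2\log N$, $M^2\log M$, $M^2$, $M\log M$, and $M$. Once this bookkeeping is done, the surviving $M$-free terms assemble into the stated $E_1, \dots, E_6$, and the worst uncancelled remainder is checked to fit within the error term $\OO(1/M + M^2/N + M^4/N^2)$.
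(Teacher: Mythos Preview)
Your approach is essentially the same as the paper's: drop $S_0$ as exponentially small, then expand the $M$-dependent pieces of $C_1N^2,\,C_3N,\,C_5\log N,\,C_6$ around $t=M/N\to 0$ and verify they cancel term-by-term against the $D_j$'s from Lemma~\ref{S3}. The paper does this by evaluating the exact antiderivatives \eqref{integral of f1}--\eqref{integral of f2} at $t=M/N$ and expanding, while you Taylor-expand $f_1,f_2$ directly; these are equivalent routes to the same collection of $M$-dependent terms.

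One caution: the parenthetical error accounting in your sketch is garbled. You write that ``$M/N=N^{-2/3}$-corrections multiplied by $N$ are of order $N^{1/3}=M$, which fits inside $M^2/N=N^{-1/3}$'', but $N^{1/3}$ certainly does not fit inside $N^{-1/3}$. What actually happens is that the $\OO(M/N)$ remainders in $f_1(M/N),f_2(M/N),f_1'(M/N)$ get multiplied only by bounded quantities (e.g.\ $(\theta_M-\tfrac12)$ in $C_3N$ is multiplied by $N\cdot f_1(M/N)$, and the $\OO((M/N)\log(M/N))$ correction times $N$ gives $\OO(M\log(M/N))$, which in turn cancels against corresponding pieces of $D_5\log N$ and $D_6$; the genuine leftovers are $\OO(M^2/N)$). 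So the cancellation goes deeper than your parenthetical suggests---it is not that the individual remainders are already small enough, but that they too cancel at leading order. This is exactly the ``lengthy but straightforward'' bookkeeping the paper alludes to, and your final paragraph correctly identifies it as the real work; just be aware that the heuristic size-counting you give does not by itself close the argument.
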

\begin{proof}
First, note that since $S_0$ is exponentially small due to Lemma~\ref{S0}, it suffices to compute $S_2 + S_3.$
Note that by definition \eqref{def:f1f2}, we have 
\begin{equation}\label{integral of f1}
\int f_1(t) \, dt  = \frac12 t^2 \log \Big( \frac t{1-x} \Big) - \frac12 (1-t)^2 \log \Big( \frac{1-t}{x} \Big) - \frac t2 , \qquad f_1'(t) = \log \Big( \frac{tx}{(1-t)(1-x)} \Big) ,
  \end{equation}
  and 
    \begin{align}
    \begin{split} \label{integral of f2}
        \int f_2(t) \, dt = -(1-x-t) \log(1-x-t) + t \Big(  \log \Big( \frac{\sqrt{2\pi}}{1-x}\Big)  - 1\Big) 
         + \frac{1}{2} \Big( t \log t +  (1-t) \log(1-t)  \Big) ,
    \end{split}
    \end{align}
    up to integration constants.
    Combining these formulas with Lemma~\ref{S2}, we have
    \begin{align*}
        C_1N^2 &= \bigg[-\frac12 \del^2 \log \Big(\frac{\del}{1-x}\Big)+\frac12(1-\del)^2\log \Big( \frac{1-\del}{x} \Big) +\frac\del2 + \frac12 \log x \bigg]N^2 - (M\log x) N- \frac12 M^2 \log N  
        \\
        &\quad + \bigg[\frac12 M^2 \log M +\frac12 M^2 \log\Big( \frac{x}{1-x} \Big)-\frac34 M^2\bigg] + \frac{M^3}{6N} + \OO\Big(\frac{M^4}{N^2}\Big),\\
        C_3N &=  \bigg[(1-x-\del)\log(1-x-\del) - \del\log \Big( \frac{\sqrt{2\pi}}{1-x} \Big) +\del - \frac12 \del \log \del - \frac12 (1-\del)\log(1-\del)-\Big(\frac12-\theta_\del\Big)f_1(\del) \bigg]N 
        \\
        &\quad + \bigg[\Big(\frac12 -\theta_M\Big)\log x - (1-x)\log(1-x)\bigg] N -M\theta_M \log N + M\theta_M \log M 
        \\
        &\quad + \bigg[\frac12 \log(2\pi) - \theta_M + \Big(\frac12 - \theta_M\Big) \log \Big(\frac{1-x}{x}\Big)\bigg] M + \OO\Big(\frac{M^2}{N}\Big),
        \\
        C_6&= \Big(\frac16-\frac12 \theta_M^2\Big)\log N +\Big(\frac12 \theta_M^2-\frac1{12} \Big)\log M+ \Big(\frac12\theta_M - \frac14 \Big) \log(2\pi)  + \frac{1-6\theta_M+6\theta_M^2}{12}\log \Big( \frac{x}{1-x} \Big) + \frac1{12}\del
        \\ 
        & \quad -\frac1{12}\log\Big(\frac{\del}{1-\del}\Big)- \frac{1-6\theta_\del+6\theta_\del^2}{12}f_1'(\del) - \Big(\frac12 - \theta_\del\Big)f_2(\del) - \frac{\del x}{1-x-\del}-x\log\Big( \frac{1-x-\del}{1-x} \Big) + \OO\Big(\frac MN\Big).
    \end{align*}
   Then after lengthy but straightforward computations, one can observe that all terms depending on $M$ in $D_j$ and $C_j$ from Lemmas \ref{S3} and \ref{S2} cancel each other out, up to an error term $\mathcal{O}(M^{-1} + M^2 N^{-1} + M^4 N^{-2})$. The only terms remaining are those stated in the desired asymptotic formula \eqref{S023 asymp}. 
\end{proof}

\subsection{Analysis of $S_1$}  \label{Subsection_S1}

By Lemma~\ref{S2+S3} in the previous subsection, it remains to derive the asymptotic behaviour of $S_1$. The aim of this subsection is to perform this remaining task and demonstrate Lemma~\ref{S1} below.
This requires to further divide the summation $S_1$. 
First, recall that
\begin{equation}
S_1 = \sum_{k=k_-+1}^{k_+-1} \log I_x(k+\al+1, n+c-k) = \sum_{m=\lfloor \del N\rfloor +1}^{\lceil (1-\del) N \rceil -1} \log I_x(m,N+1-m).
\end{equation}
We denote
\begin{equation} \label{def of N1 M1 x_k}
N_1 = N+1, \qquad M_1 = N_1^{\frac{1}{12}}, \qquad x_k = \frac{m_k}{N_1},
\end{equation}
where $N$ and $m_k$ are given by \eqref{def of N} and \eqref{def of m_k}, respectively.
Define
\begin{equation} \label{g_pm}
g_- = \lceil N_1x - M_1\sqrt{N_1}-\al - 1\rceil, \qquad g_+ = \lfloor N_1x + M_1\sqrt{N_1}-\al - 1\rfloor.
\end{equation}
We split $S_1$ into the following three parts: 
\begin{equation} \label{sum division S1 123}
S_1 = S_1^{(1)}+S_1^{(2)}+S_1^{(3)},
\end{equation} 
where 
\begin{align}
\begin{split}
S_1^{(1)} & = \sum_{k=k_-+1}^{g_--1} \log I_x(m_k,N_1-m_k),  \label{def of S1(123)}
\\
S_1^{(2)} & = \sum_{k=g_-}^{g_+} \log I_x(m_k,N_1-m_k),     
\\
S_1^{(3)} & = \sum_{k=g_++1}^{k_+-1} \log I_x(m_k,N_1-m_k).   
\end{split}
\end{align}
Note that the sizes of $x_k$ in $S_1^{(1)}$, $S_1^{(2)}$ and $S_1^{(3)}$ are roughly $[\del,x-\frac{M_1}{\sqrt{N_1}})$, $[x-\frac{M_1}{\sqrt{N_1}}, x+\frac{M_1}{\sqrt{N_1}}]$ and $(x+\frac{M_1}{\sqrt{N_1}},1-\del]$, respectively. 
On the other hand, since $k_{\pm}$ is defined using $\lfloor \delta N \rfloor$, not $\lfloor \delta N_1 \rfloor$, one cannot guarantee $\del \le x_k \le 1-\del$ on the boundary of $S_1$. Nevertheless, this distinction is not important, as the approximate size of $x_k$ determines the asymptotic behaviour.

We denote the $\eta$ in \eqref{def of eta in lemma} corresponding to $x_0 = x_k$ by $\eta_k$, i.e.
\begin{equation} \label{def of eta}
    -\frac12\eta_k^2 = x_k \log\Big(\frac{x}{x_k}\Big) + (1-x_k) \log \Big(\frac{1-x}{1-x_k}\Big), \qquad \frac{\eta_k}{x-x_k}>0.
\end{equation}
Note that $\eta_k>0$ for $x_k<x$, whereas $\eta_k<0$ for $x_k>x$. Then by Lemma \ref{beta}, we have 
\begin{equation} \label{summand of S1}
    I_x(m_k,N_1-m_k) = \frac12 \erfc\Big(-\eta_k \sqrt{\frac{N_1}{2}}\Big) + \frac1{\sqrt{2\pi N_1}}e^{-\frac12 \eta_k^2 N_1}\Big(c_0(\eta_k) + \frac{1}{N}c_1(\eta_k)+ \OO\Big(\frac1{N_1^2}\Big)\Big).
\end{equation}

We first compute the asymptotic behaviour of $S_1^{(1)}$.

\begin{lem} \label{S1^(1)}
    There exists a constant $C>0$ such that
    \begin{equation}
     S_1^{(1)} = \OO(e^{-CM_1^2}), \qquad \text{ as }N \to \infty.
    \end{equation}
\end{lem}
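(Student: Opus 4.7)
The plan is to exploit the fact that throughout the range of summation for $S_1^{(1)}$, namely $k_-+1 \le k \le g_- - 1$, we have $x_k = m_k/N_1 \le x - M_1/\sqrt{N_1} + \OO(1/N)$, so Temme's expansion \eqref{summand of S1} places us in the regime where $\eta_k > 0$ and $\eta_k\sqrt{N_1/2}$ is large. Consequently each $I_x(m_k, N_1 - m_k)$ is exponentially close to $1$, and the logarithm is essentially minus a small quantity that we can control term by term.

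More precisely, I would first verify that $\eta_k^2 N_1$ grows at least linearly in $M_1^2$ on the entire range. Taylor expanding the defining relation \eqref{def of eta} near $x_k = x$ gives $\eta_k^2 = (x-x_k)^2/(x(1-x)) + \OO((x-x_k)^3)$, and since $x_k \le x - M_1/\sqrt{N_1}$ we obtain
\begin{equation*}
\tfrac12 \eta_k^2 N_1 \,\ge\, \frac{M_1^2}{2x(1-x)} + \OO(M_1^3 N_1^{-1/2})
\end{equation*}
at the right endpoint, and the quantity $\eta_k^2 N_1$ only increases as $x_k$ decreases toward $\delta$ (where $\eta_k$ is bounded below by a positive constant independent of $N$). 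Thus there exists $C_0 > 0$ such that $\tfrac12\eta_k^2 N_1 \ge C_0 M_1^2$ uniformly for $k_- + 1 \le k \le g_- - 1$.

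Next, combining \eqref{summand of S1} with $\tfrac12 \erfc(-t) = 1 - \tfrac12 \erfc(t)$ and the standard asymptotic $\tfrac12 \erfc(t) = \tfrac{e^{-t^2}}{2t\sqrt{\pi}}(1 + \OO(t^{-2}))$ for large $t > 0$, I obtain
\begin{equation*}
1 - I_x(m_k, N_1 - m_k) \,=\, \OO\!\Big(\tfrac{1}{\eta_k \sqrt{N_1}}\, e^{-\frac12 \eta_k^2 N_1}\Big) + \OO\!\Big(\tfrac{1}{\sqrt{N_1}}\, e^{-\frac12 \eta_k^2 N_1}\Big) \,=\, \OO\!\big(e^{-\frac12 \eta_k^2 N_1}\big),
\end{equation*}
where I used that the coefficients $c_0(\eta_k), c_1(\eta_k)$ in Lemma~\ref{beta} are uniformly bounded on $[\delta, 1-\delta]$ by the analyticity statement there, together with $\eta_k \sqrt{N_1} \ge C M_1 \to \infty$. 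Since $\log(1-\epsilon) = \OO(\epsilon)$ for small $\epsilon$, it follows that $-\log I_x(m_k, N_1 - m_k) = \OO(e^{-\frac12 \eta_k^2 N_1})$ uniformly on the summation range.

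Finally, to assemble the bound on $S_1^{(1)}$, I would estimate the sum by a Riemann sum / integral comparison. Writing $u_k := (x - x_k)\sqrt{N_1}$, so that $u_k \ge M_1$ and $\Delta u_k = 1/\sqrt{N_1}$, the summands are bounded by $C\, e^{-C' u_k^2}$ for some $C' > 0$, and
\begin{equation*}
|S_1^{(1)}| \,\le\, C \sum_{k=k_-+1}^{g_- - 1} e^{-C' u_k^2} \,\le\, C \sqrt{N_1} \int_{M_1}^{\infty} e^{-C' u^2}\, du \,=\, \OO\!\Big(\tfrac{\sqrt{N_1}}{M_1}\, e^{-C' M_1^2}\Big).
\end{equation*}
Since $M_1 = N_1^{1/12}$, the polynomial prefactor $\sqrt{N_1}/M_1 = N_1^{5/12}$ is absorbed into the exponential (shrinking $C'$ slightly), yielding $S_1^{(1)} = \OO(e^{-C M_1^2})$ for some $C > 0$. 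The only mildly delicate step is the uniform control of the Temme correction coefficients $c_0, c_1$ together with the lower bound on $\eta_k^2 N_1$ across the whole range $x_k \in [\delta, x - M_1/\sqrt{N_1}]$; the analyticity assertion in Lemma~\ref{beta} and the elementary Taylor estimate above handle this cleanly.
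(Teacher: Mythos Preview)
Your proof is correct and follows essentially the same approach as the paper: Taylor expand $\eta_k^2$ near $x_k=x$ to obtain the lower bound $\tfrac12\eta_k^2 N_1 \ge C_0 M_1^2$, combine \eqref{summand of S1} with the large-argument asymptotics \eqref{erfc asymptotic} of $\erfc$ to show each summand is $1+\OO(e^{-\frac12\eta_k^2 N_1})$, and conclude. Your treatment is in fact slightly more explicit than the paper's, which simply bounds each summand uniformly by $\OO(e^{-cM_1^2})$ and leaves the (polynomially many) sum implicit; your Gaussian integral comparison makes the absorption of the $\OO(N)$ prefactor transparent.
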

\begin{proof}
   By the Taylor expansion of $\eta_k^2$ at $x_k = x$, we have
    \[
    -\frac12 \eta_k^2 = - \frac{(x_k-x)^2}{2x(1-x)} + \OO((x_k - x)^3), \qquad \text{ as } x_k \to x.
    \]
    This in turn implies that  
    $$
    -\frac12 \eta_k^2N_1 = -\frac{1}{2x(1-x)}M_1^2 + o(1), \qquad \text{ as } x_k \to x-\frac{M_1}{\sqrt{N_1}}.  
    $$
    Therefore, it follows that $-\frac12 \eta_k^2N_1 \to -\infty$ for all $k \in [k_-+1,g_--1]$. 
  
    Recall that by \cite[Eq.(7.12.1)]{NIST}, as $x \to \infty$,
    \begin{align} \label{erfc asymptotic}
        \erfc x \sim \frac{e^{-x^2}}{\sqrt \pi x} \sum_{m=0}^\infty (-1)^m \frac{(2m-1)!!}{(2x^2)^m}, \qquad \erfc(-x) \sim 2 - \frac{e^{-x^2}}{\sqrt \pi x} \sum_{m=0}^\infty (-1)^m \frac{(2m-1)!!}{(2x^2)^m}.
    \end{align}
   Using this, \eqref{summand of S1} can be rewritten as
    \[
    I_x(m_k,N_1-m_k) = 1 - \frac{1}{\sqrt{2\pi N_1}\eta_k}e^{-\frac12\eta_k^2 N_1}\OO(1) + \frac{1}{\sqrt{2\pi N_1}}e^{-\frac{1}{2}\eta_k^2 N_1}\OO(1) = 1+\OO(e^{-cM^2}), 
    \]
    which completes the proof.
\end{proof}

We denote 
\begin{equation} \label{def of big M_k}
M_k := \sqrt{N_1}(x_k - x), 
\end{equation}
where $x_k$ is given by \eqref{def of N1 M1 x_k}. 
We acknowledge that this is a slight abuse of notation, as we already use $M_1$ in \eqref{def of N1 M1 x_k}. However, since $M_k$ with $k=1$ will not be used further in the sequel, we allow this abuse of notation.

To analyse $S_1^{(2)}$, let us introduce the following lemma.

\begin{lem}\label{Riemann sum of M_k}
    Let $h \in C^2(\RR)$. Then as $N \to \infty$,
    \begin{equation}
        \sum_{k = g_-}^{g_+} h(M_k) = \sqrt{N_1}\int_{-M_1}^{M_1} h(t) \,dt + \Big(\frac12-\theta_-\Big) h(-M_1) + \Big(\frac12 -\theta_+\Big) h(M_1)+ \OO\Big(\frac{\mf m(h')}{\sqrt{N_1}}+\frac{M_1\mf m(h'')}{\sqrt{N_1}}\Big), 
    \end{equation}
    where
    \begin{equation}\label{theta_pm}
    \theta_- = g_- - (N_1x - M_1\sqrt{N_1}-\al - 1), \qquad \theta_+ = (N_1x + M_1\sqrt{N_1}-\al - 1) - g_+,
    \end{equation}
    and 
    \begin{equation}
    \mf m(f) := \max \{f(t): M_{g_-}\leq t \leq M_{g_+}\}. 
    \end{equation} 
\end{lem}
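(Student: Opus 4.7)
The plan is to recognize the sum as a Riemann sum with uniform step size $1/\sqrt{N_1}$ and then apply a trapezoidal/Euler-Maclaurin estimate. From \eqref{def of m_k} and \eqref{def of N1 M1 x_k} we have $x_k = (k+\alpha+1)/N_1$, whence $M_{k+1} - M_k = 1/\sqrt{N_1}$ exactly. Unpacking \eqref{g_pm} and the definition \eqref{theta_pm} of $\theta_\pm \in [0,1)$, the boundary grid points satisfy
\[
M_{g_-} = -M_1 + \frac{\theta_-}{\sqrt{N_1}}, \qquad M_{g_+} = M_1 - \frac{\theta_+}{\sqrt{N_1}}.
\]

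I would then invoke the trapezoidal rule with step $\epsilon = 1/\sqrt{N_1}$ on the interval $[M_{g_-}, M_{g_+}]$, which has length at most $2M_1$. For $h \in C^2$ this yields
\[
\sum_{k=g_-}^{g_+} h(M_k) = \sqrt{N_1}\int_{M_{g_-}}^{M_{g_+}} h(t)\,dt + \frac{h(M_{g_-}) + h(M_{g_+})}{2} + O\Big(\tfrac{M_1\,\mathfrak{m}(h'')}{\sqrt{N_1}}\Big),
\]
the trapezoidal error $O((b-a)\epsilon\,\mathfrak{m}(h''))$ reducing to $O(M_1\mathfrak{m}(h'')/\sqrt{N_1})$.

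The next step is to match the integration interval to $[-M_1, M_1]$ by writing
\[
\int_{M_{g_-}}^{M_{g_+}} h(t)\,dt = \int_{-M_1}^{M_1} h(t)\,dt - \int_{-M_1}^{-M_1+\theta_-/\sqrt{N_1}} h(t)\,dt - \int_{M_1-\theta_+/\sqrt{N_1}}^{M_1} h(t)\,dt
\]
and Taylor expanding $h$ on each of the two short intervals of length $\theta_\pm/\sqrt{N_1}$. After multiplying by $\sqrt{N_1}$, the two boundary pieces contribute $-\theta_- h(-M_1) - \theta_+ h(M_1)$ up to $O(\mathfrak{m}(h')/\sqrt{N_1})$. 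Similarly, $h(M_{g_\pm}) = h(\mp M_1) + O(\mathfrak{m}(h')/\sqrt{N_1})$ by a one-term Taylor expansion. Assembling all pieces gives the claimed asymptotic.

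There is no genuine obstacle here: the lemma is an elementary trapezoidal-rule computation with careful bookkeeping. The one point that deserves a modicum of care is separating the $O(\mathfrak{m}(h')/\sqrt{N_1})$ contribution coming from shifting the endpoints from the $O(M_1\mathfrak{m}(h'')/\sqrt{N_1})$ contribution coming from the trapezoidal error on an interval of length $\sim M_1$, so that the final error bound takes the precise form stated in the lemma.
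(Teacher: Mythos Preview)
Your proposal is correct and follows essentially the same route as the paper: the paper also identifies the uniform step $1/\sqrt{N_1}$, derives the trapezoidal formula $\sum h(M_k)=\sqrt{N_1}\int_{M_{g_-}}^{M_{g_+}}h+\tfrac12(h(M_{g_-})+h(M_{g_+}))+\OO(M_1\mathfrak m(h'')/\sqrt{N_1})$, and then shifts the endpoints to $\pm M_1$ via one-term Taylor expansions. The only cosmetic difference is that the paper re-derives the trapezoidal rule by hand (Taylor-expanding on each subinterval and then applying the resulting identity once more to $h'$), whereas you invoke it directly.
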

\begin{proof}
Note that by \eqref{def of big M_k}, we have $M_{k+1}-M_k = \frac{1}{\sqrt{N_1}}$. 
Therefore, it follows that 
    \begin{align*}
        \int_{M_{g_-}}^{M_{g_+}} h(t) \,dt &= \sum_{k = g_-}^{g_+-1} \int_{M_k}^{M_{k+1}}h(t) \,dt \\
        &= \sum_{k = g_-}^{g_+-1}\Big[h(M_k)(M_{k+1}-M_k) + h'(M_k)\frac{(M_{k+1}-M_k)^2}{2}\Big] + \OO\Big(\sum_{k=M_{g_-}}^{M_{g_+}} (M_{k+1}-M_k)^3 \mf m(h'')\Big) \\
        &= \frac1{\sqrt {N_1}}\sum_{k=g_-}^{g_+-1} h(M_k) + \frac1{2N_1} \sum_{k = g_-}^{g_+-1} h'(M_k) + \OO\Big(\frac{M_1}{N_1} \mf m(h'')\Big), 
    \end{align*} 
which leads to 
    \[
    \sum_{k=g_-}^{g_+-1} h(M_k) = \sqrt{N_1} \int_{M_{g_-}}^{M_{g_+}}h(t) \,dt - \frac1{2\sqrt{N_1}}\sum_{k=g_-}^{g_+-1}h'(M_k) + \OO\Big(\frac{M_1}{\sqrt{N_1}} \mf m(h'') \Big).
    \]
    By replacing $h(t)$ above by $h'(t)$, we obtain
    \[
    \sum_{k=g_-}^{g_+-1} h'(M_k) = \sqrt {N_1} \Big( h(M_{g_+})-h( M_{g_-}) \Big) + \OO\Big(M_1\mf m(h'')\Big).
    \]
   Combining these two equations, it follows that 
    \begin{equation*}
        \sum_{k=g_-}^{g_+} h(M_k) = \sqrt {N_1}\int_{M_{g_-}} ^{M_{g_+}} h(t) \,dt + \frac12\Big( h(M_{g_+})+ h(M_{g_-}) \Big) + \OO\Big(\frac{M_1}{\sqrt{N_1}} \mf m(h'')\Big).
    \end{equation*}
  Note here that by \eqref{g_pm} and \eqref{def of big M_k}, we have
    \[
    M_{g_+} = M_1 - \frac{\theta_+}{\sqrt{N_1}}, \qquad M_{g_-} = -M_1 + \frac{\theta_-}{\sqrt {N_1}}.
    \]
  By using this, we obtain 
    \[
    \int_{M_{g_-}}^{M_{g_+}} h(t)\, dt = \int_{-M_1}^{M_1} h(t)\, dt - h(-{M_1}) \frac{\theta_-}{\sqrt{N_1}} -h(M_1) \frac{\theta_+}{\sqrt N_1} + \OO\Big(\frac{\mf m(h')}{N_1}\Big)
    \]
 and 
    \[
    \frac12 (h(M_{g_+}) + h(M_{g_-})) = \frac12 (h(M_1) + h(-M_1)) + \OO\Big(\frac{\mf m(h')}{\sqrt{N_1}}\Big).
    \]
    Combining all of the above, the lemma follows. 
\end{proof}

We now compute the asymptotic behaviour of $S_1^{(2)}$. 
For this, it is convenient to introduce the notation
\begin{equation} \label{def of sigma}
    \sigma := \sqrt{x(1-x)}.
\end{equation}
Let us also define 
 \begin{equation} \label{def:g}
        g(t) = \frac{1-2x}{3\sig^3\sqrt{2\pi} \erfc(\frac{t}{\sqrt 2 \sig})}(t^2+2\sig^2)e^{-\frac{t^2}{2\sig^2}}.
    \end{equation}

\begin{lem} \label{S1^(2)}
    As $N \to \infty$, we have
    \begin{equation}
    S_1^{(2)} = B_4^{(2)}\sqrt{N_1} + B_6^{(2)} +  \OO\Big(\frac{M_1^5}{\sqrt {N_1}}+\frac{M_1^{10}}{N_1}\Big),
    \end{equation}
    where 
    \[
    B_4^{(2)} = \int_{-M_1}^{M_1}\log\Big[\frac 12 \erfc\Big(\frac{t}{\sqrt 2 \sig}\Big)\Big]\,dt, \qquad B_6^{(2)} = \int_{-M_1}^{M_1}g(t) \,dt + \Big(\frac12 - \theta_+\Big) \log\Big[\frac12 \erfc\Big(\frac{M_1}{\sqrt2\sig}\Big)\Big].
    \]
    Here $g$ is given by \eqref{def:g}. 
\end{lem}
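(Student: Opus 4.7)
The strategy is to apply the uniform asymptotic of Lemma~\ref{beta} in the symmetric regime $|x_k - x| \le M_1/\sqrt{N_1}$, expand everything in the rescaled variable $M_k = \sqrt{N_1}(x_k - x)$, and then convert the resulting sum into an integral via Lemma~\ref{Riemann sum of M_k}.

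First I would work out the expansion of $\eta_k$ near $x_k = x$. Writing $u := x_k - x = M_k/\sqrt{N_1}$, the implicit definition \eqref{def of eta} expands as
$$-\tfrac12 \eta_k^2 = -\frac{u^2}{2\sigma^2} + \frac{(1-2x)\,u^3}{6\sigma^4} + O(u^4),$$
and the sign condition $\eta_k/(x-x_k) > 0$ forces
$$-\eta_k\sqrt{N_1/2} = \frac{M_k}{\sigma\sqrt 2} - \frac{(1-2x)\,M_k^2}{6\sigma^3\sqrt{2N_1}} + O\!\Bigl(\frac{M_k^3}{N_1}\Bigr).$$
Using the analyticity of $c_0$ at $\eta=0$ asserted in Lemma~\ref{beta}, and matching the divergences of $1/\eta_k$ and $\sqrt{x_k(1-x_k)}/(x-x_k)$ in \eqref{def of c0}, a direct Taylor expansion gives $c_0(\eta_k) = (1-2x)/(3\sigma) + O(M_k/\sqrt{N_1})$.

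Next I would substitute these expansions into \eqref{summand of S1} and expand the complementary error function by the first-order Taylor formula $\tfrac12\erfc(s+\delta) = \tfrac12\erfc(s) - \delta e^{-s^2}/\sqrt\pi + O(\delta^2)$ at $s = M_k/(\sigma\sqrt 2)$. The shift contribution from $\delta$ produces a term proportional to $M_k^2 e^{-M_k^2/(2\sigma^2)}$, while the $c_0$-part of the remainder in Lemma~\ref{beta} contributes $2\sigma^2 e^{-M_k^2/(2\sigma^2)}$; dividing the combination by $\tfrac12\erfc(M_k/(\sigma\sqrt 2))$ and taking the logarithm, the two pieces recombine exactly into the function $g$ of \eqref{def:g}, giving
$$\log I_x(m_k, N_1-m_k) = \log\Bigl[\tfrac12\erfc\Bigl(\tfrac{M_k}{\sigma\sqrt 2}\Bigr)\Bigr] + \frac{g(M_k)}{\sqrt{N_1}} + O\!\Bigl(\frac{1 + M_k^8}{N_1}\Bigr).$$
Verifying the precise algebraic form of $g$ via this recombination is the cleanest and most delicate point of the calculation.

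The third step applies Lemma~\ref{Riemann sum of M_k} to the two functions $h_1(t) = \log[\tfrac12 \erfc(t/(\sigma\sqrt 2))]$ and $h_2(t) = g(t)$. For $h_1$ the integral term produces the leading $\sqrt{N_1}\,B_4^{(2)}$ contribution; the boundary term at $+M_1$ survives as $(\tfrac12-\theta_+)\log[\tfrac12\erfc(M_1/(\sqrt 2\sigma))]$ and is absorbed into $B_6^{(2)}$, while the boundary term at $-M_1$ is of order $O(e^{-M_1^2/(2\sigma^2)})$ since $\erfc(-M_1/(\sqrt 2\sigma)) \to 2$, and hence fits in the error. For $h_2$ the extra $1/\sqrt{N_1}$ prefactor cancels the $\sqrt{N_1}$ from Lemma~\ref{Riemann sum of M_k}, giving the integral $\int_{-M_1}^{M_1} g(t)\,dt$ in $B_6^{(2)}$, while the boundary and derivative remainders are controlled by the polynomial bounds $|g|, |g'|, |g''| = O(|t|^3 + 1)$ on $[-M_1, M_1]$.

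The main obstacle will be the error bookkeeping, since $M_k$ may be as large as $M_1 = N_1^{1/12}$ and the polynomial growth of $g$ and its derivatives near the boundary must be traded against the $\asymp M_1\sqrt{N_1}$ number of summands. To achieve the stated error $O(M_1^5/\sqrt{N_1} + M_1^{10}/N_1)$ one has to push the expansion of $\eta_k$ two orders beyond the one displayed, verify that the $c_1$-contribution of Lemma~\ref{beta} is of order $N^{-3/2}$ per term and hence absorbed, and carefully apply the derivative estimates required by Lemma~\ref{Riemann sum of M_k}. A minor subtlety is that the boundary term $(\tfrac12-\theta_+)\log[\tfrac12\erfc(M_1/(\sqrt 2\sigma))]$ is not negligible (it is of polynomial size in $M_1$) and so must be carried explicitly in $B_6^{(2)}$; it is expected to cancel against analogous boundary contributions from $S_1^{(1)}$ and $S_1^{(3)}$ at a later stage.
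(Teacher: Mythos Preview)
Your proposal is correct and follows essentially the same route as the paper: expand $\eta_k$ in powers of $M_k/\sqrt{N_1}$, Taylor-expand $\erfc$ about $M_k/(\sqrt{2}\sigma)$, recombine the shift and $c_0$ contributions into the function $g$, and then apply Lemma~\ref{Riemann sum of M_k} term by term. One small refinement worth noting: the paper carries the next-order term $g_1(M_k)/N_1$ explicitly (with $g_1(t)=O(t^4)$ as $t\to\infty$), and it is precisely this term that, after Riemann summation, produces the $O(M_1^5/\sqrt{N_1})$ part of the error; the residual beyond $g_1$ is $O(N_1^{-3/2})$ per summand with polynomial growth $O(t^9)$, which yields the $O(M_1^{10}/N_1)$ part.
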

\begin{proof}
    Note that $|x_k-x|<{M_1}/\sqrt{N_1} \to 0$ as $N_1 \to \infty$. By \eqref{def of eta}, as $N_1 \to \infty$, we have
    \begin{align}\label{eq:3.9.1}
    \begin{split}
        -\frac12 \eta_k^2& = -\frac{1}{2\sigma^2} \frac{M_k^2}{N_1}\Big(1-\frac{1-2x}{3\sigma^2}\frac{M_k}{\sqrt{N_1}} + \frac{1-3\sigma^2}{6\sigma^4}\frac{M_k^2}{N_1} + \OO\Big(\frac{M_k^3}{N_1\sqrt {N_1}}\Big)\Big), \\
         \eta_k & = -\frac{M_k}{\sigma\sqrt{N_1}}\Big(1-\frac{1-2x}{6\sigma^2}\frac{M_k}{\sqrt{N_1}} + \frac{5-14\sigma^2}{72\sigma^4}\frac{M_k^2}{N_1}+\OO\Big(\frac{M_k^3}{N_1\sqrt{N_1}}\Big)\Big), \\
        -\eta_k \sqrt{\frac{N_1}{2}} & = \frac{M_k}{\sqrt{2}\sigma}\Big(1-\frac{1-2x}{6\sigma^2}\frac{M_k}{\sqrt{N_1}} + \frac{5-14\sigma^2}{72\sigma^4}\frac{M_k^2}{N_1}+\OO\Big(\frac{M_k^3}{N_1\sqrt{N_1}}\Big)\Big).
    \end{split}
    \end{align}
    Define
    \[
    \Delta_k := -\eta_k \sqrt{\frac{N_1}{2}}- \frac{M_k}{\sqrt{2}\sigma} = -\frac{1-2x}{6\sqrt 2 \sigma^3}\frac{M_k^2}{\sqrt {N_1}} + \frac{5-14\sigma^2}{72\sqrt 2 \sigma^5} \frac{M_k^3}{N_1} + \OO\Big(\frac{M_k^4}{N_1\sqrt {N_1}} \Big).
    \]
    Note that by using the Taylor expansion  
    \[
    \erfc(a+x) = \erfc a - \frac{2}{\sqrt \pi}e^{-a^2}\Big(x - ax^2 + \frac{2a^2-1}{3}x^3 + \OO(a^3x^4)\Big), 
    \]
    as $x \to 0$, we obtain 
    \begin{align*}
&\quad        \erfc\Big(-\eta_k \sqrt{\frac{N}{2}}\Big) = \erfc\Big(\frac{M_k}{\sqrt 2 \sig}\Big) - \frac{2}{\sqrt \pi}e^{-\frac{M_k^2}{2\sig^2}}\Big[\Del_k - \frac{M_k}{\sqrt 2 \sig}\Del_k^2 + \OO((1+M_k^2) \Del_k^3)\Big] 
\\
 &=
        \erfc\Big(\frac{M_k}{\sqrt 2 \sigma}\Big) - \sqrt {\frac2\pi}e^{-\frac{M_k^2}{2\sigma^2}}\bigg[-\frac{1-2x}{6 \sigma^3}\frac{M_k^2}{\sqrt{N_1}}+\frac{5-14\sigma^2}{72\sigma^5}\frac{M_k^3}{N_1} - \frac{1-4\sigma^2}{72\sigma^7}\frac{M_k^5}{N_1}+\OO\Big(\frac{M_k^4+M_k^8}{N_1\sqrt{N_1}}\Big)\bigg].
    \end{align*}
    On the other hand, by \eqref{def of c0}, we have 
    \[
    c_0(\eta_k) = \frac{1}{\eta_k} - \frac{\sqrt{x_k(1-x_k)}}{x-x_k} = \frac{1-2x}{3\sigma} - \frac{1+\sigma^2}{12\sigma^3} \frac{M_k}{\sqrt{N_1}} + \OO\Big(\frac{M_k^2}{N_1}\Big). 
    \]
    Note also that 
    \[
    e^{-\frac12 \eta_k^2 N_1} = e^{-\frac{M_k^2}{2\sig^2}}\Big[1+\frac{1-2x}{6\sig^4}\frac{M_k^3}{\sqrt{N_1}}+\OO\Big(\frac{M_k^4+M_k^6}{N_1}\Big)\Big].
    \]
    Then by Lemma~\ref{beta}, it follows that 
    \begin{align*}
        I_x(m_k, N_1-m_k) &= \frac12 \erfc\Big(-\eta_k \sqrt{\frac{N_1}{2}}\Big) + \frac{1}{\sqrt{2\pi {N_1}}}e^{-\frac12 \eta_k^2{N_1}}\Big(c_0(\eta_k) + \OO\Big(\frac1{N_1}\Big)\Big) \\
        &= u_1(M_k) + \frac{1}{\sqrt{N_1}} u_2(M_k) + \frac{1}{N_1} u_3(M_k) + \OO\Big(\frac{1+M_k^8}{N_1\sqrt{N_1}}e^{-\frac{M_k^2}{2\sig^2}}\Big),
    \end{align*}
    where
    \begin{align*}
        u_1(t) &= \frac12 \erfc\Big(\frac{t}{\sqrt2 \sig}\Big), \qquad u_2(t) = \frac{1-2x}{\sqrt{2\pi}}e^{-\frac{t^2}{2\sig^2}} \Big[\frac{1}{6\sig^3}t^2 + \frac{1}{3\sig}\Big], \\
        u_3(t) &= \frac{1}{\sqrt{2\pi}}e^{-\frac{t^2}{2\sig^2}}\Big[\frac{1-4\sig^2}{72\sig^7}t^5 - \frac{1+2\sig^2}{72\sig^5}t^3 -\frac{1+\sig^2}{12\sig^3}t\Big].
    \end{align*}
    Therefore we obtain
    \begin{align*}
        \log I_x(m_k, N_1-m_k) &= \log u_1(M_k) + \frac{1}{\sqrt{N_1}}  g(M_k) + \frac{1}{N_1} g_1(M_k) 
        \\
        &\quad + \OO\bigg(\frac{1}{N_1\sqrt N_1}\Big(\frac{u_2(M_k)^3}{u_1(M_k)^3}+\frac{u_2(M_k)u_3(M_k)}{u_1(M_k)^2}+\frac{1+M_k^8}{u_1(M_k)} e^{-\frac{M_k^2}{2\sig^2}}\Big)\bigg),
    \end{align*}
    where 
    \[
    g(t) = \frac{u_2(t)}{u_1(t)}, \qquad g_1(t) = -\frac{u_2(t)^2}{2u_1(t)^2} + \frac{u_3(t)}{u_1(t)}.
    \]

    Recall that $S_1^{(2)}$ is given by \eqref{def of S1(123)}. 
    Notice here that as $t \to \infty,$ we have
    $$
   u_1(t) = \OO(t^{-1}e^{-\frac12 xt^2}), \qquad u_2(t) = \OO(t^2 e^{-\frac12 xt^2}), \qquad u_3(t) = \OO(t^5 e^{-\frac12 xt^2}).
    $$
    Then by applying Lemma~\ref{Riemann sum of M_k} with $h(t) = t^9$, we have 
    \[
    \sum_{k = g_-}^{g_+} \OO\bigg(\frac{1}{N\sqrt N}\Big(\frac{u_2(M_k)^3}{u_1(M_k)^3}+\frac{u_2(M_k)u_3(M_k)}{u_1(M_k)^2}+\frac{1+M_k^8}{u_1(M_k)} e^{-\frac{M_k^2}{2\sig^2}}\Big)\bigg) = \OO\Big(\frac{1+M_1^{10}}{N_1}\Big). 
    \] 
    Furthermore, by using Lemma~\ref{Riemann sum of M_k} again with $h(t)= \log u_1(t)$, $h(t)=g(t)$ and $h(t)= g_1(t)$, the desired lemma follows. Here, we have used  $ g_1(t) = \OO(t^4)$ as $t \to \infty$.
\end{proof}

Finally, we compute $S_1^{(3)}$ in \eqref{def of S1(123)}, where $g_+$ and $k_+$ are given as \eqref{g_pm} and \eqref{k_pm}.
For the analysis of $S_1^{(3)}$, we shall apply the following lemma. 

\begin{lem} \label{lem:3.9}
    Let $h \in C^3((x,1-\del))$. Then as $N \to \infty$,
    \begin{align}
    \begin{split}
        \sum_{k=g_++1}^{k_+-1} h(x_k) &= N_1 \int_{x+\frac{M_1}{\sqrt{N_1}}}^{1-\del} h(t) \,dt + \Big(\theta_+ - \frac12 \Big)h\Big(x+\frac{M_1}{\sqrt{N_1}}\Big) + \Big(\theta_\del'-\frac12\Big) h(1-\del) 
        \\
        &\quad -\frac{1-6\theta_++6\theta_+^2}{12N_1}h'\Big(x+\frac{M_1}{\sqrt{N_1}}\Big) + \frac{1-6\theta_\del'+6\theta_\del'^2}{12N_1}h'(1-\del)+\OO\Big(\frac{\mf m(h''')+\mf m(h'')}{N_1^2}\Big),   
    \end{split}
    \end{align}
    where $\theta_\del' := \theta_{\del}-(1-\del)$ and  
    \begin{equation} \label{mf lem3.9}
    \mf m(f) := \max \Big\{f(t):x+\frac{M_1}{\sqrt{N_1}}\leq t \leq 1-\del \Big \}. 
    \end{equation} 
    Here, $\theta_+$ and $\theta_\del$ are given by \eqref{theta_pm} and \eqref{theta_Mdel}. 
\end{lem}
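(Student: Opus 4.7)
The plan is to reduce Lemma~\ref{lem:3.9} to a direct application of the refined Euler--Maclaurin formula of Lemma~\ref{lem:3.3}. Reindexing the sum by $j = k + \alpha + 1$ so that $x_k = j/N_1$, the task becomes computing $\sum_{j=a_{N_1}}^{b_{N_1}} h(j/N_1)$ with explicit integer endpoints $a_{N_1} = g_+ + \alpha + 2$ and $b_{N_1} = k_+ + \alpha$. I would then write these endpoints in the form $A N_1 + a_0$ and $B N_1 + b_0$ demanded by Lemma~\ref{lem:3.3}, apply the lemma, and simplify the resulting boundary coefficients into the claimed form.

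\textbf{Identifying $(A, a_0)$ and $(B, b_0)$.} From \eqref{g_pm} and \eqref{theta_pm} one reads off $a_{N_1} = g_+ + \alpha + 2 = N_1 x + M_1 \sqrt{N_1} + 1 - \theta_+$, hence $A = x + M_1/\sqrt{N_1}$ and $a_0 = 1 - \theta_+$. For the upper endpoint, I would first verify that $\lceil (1-\delta) N \rceil = (1-\delta) N + \theta_\delta$, which follows from a short case analysis according to whether $\delta N$ is an integer. Using $N = N_1 - 1$, this yields $b_{N_1} = (1-\delta) N_1 + (\theta_\delta + \delta - 2)$, so that $B = 1 - \delta$ and $b_0 = \theta_\delta + \delta - 2 = \theta_\delta' - 1$. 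Since $B - A = 1 - \delta - x - M_1/\sqrt{N_1}$ is bounded away from $0$ for large $N$ (as $M_1 = N_1^{1/12} \to 0$ relative to $\sqrt{N_1}$), and $h$ is $C^3$ on the relevant closed subinterval of $(x, 1-\delta)$, all hypotheses of Lemma~\ref{lem:3.3} are met.

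\textbf{Matching boundary terms and error.} With these choices, a short algebraic simplification of the boundary coefficients in Lemma~\ref{lem:3.3} gives
\begin{align*}
\frac{1-2a_0}{2} &= \theta_+ - \tfrac12, \\
\frac{1+2b_0}{2} &= \theta_\delta + \delta - \tfrac32 = \theta_\delta' - \tfrac12, \\
\frac{-1+6a_0-6a_0^2}{12} &= -\frac{1 - 6\theta_+ + 6\theta_+^2}{12}, \\
\frac{1+6b_0+6b_0^2}{12} &= \frac{1 - 6\theta_\delta' + 6\theta_\delta'^2}{12},
\end{align*}
which exactly reproduces the boundary contributions stated in the claim. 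The error term of Lemma~\ref{lem:3.3} specialises to $\OO(N_1^{-2}(\mathfrak{m}(h'') + \mathfrak{m}(h''')))$ after bounding $\sum_{j=a_{N_1}}^{b_{N_1}-1} \mathfrak{m}_{j, N_1}(h''')$ crudely by $N_1 \, \mathfrak{m}(h''')$ with $\mathfrak{m}$ as in \eqref{mf lem3.9}, yielding the advertised remainder. I expect no genuine analytic obstacle: the proof is essentially arithmetic bookkeeping, and the only point that requires care is keeping track of the shift $-\alpha-1$ together with the passage between $N$ and $N_1 = N+1$, which is precisely what makes $\theta_\delta'$ rather than $\theta_\delta$ appear on the $B$-side.
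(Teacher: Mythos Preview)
Your proposal is correct and follows exactly the same route as the paper: set $A = x + M_1/\sqrt{N_1}$, $a_0 = 1 - \theta_+$, $B = 1-\delta$, $b_0 = \theta_\delta' - 1$, verify that $a_{N_1} = g_+ + \alpha + 2$ and $b_{N_1} = k_+ + \alpha$, and apply Lemma~\ref{lem:3.3}. Your write-up is in fact more detailed than the paper's (which records only the choice of $(A,a_0,B,b_0)$ and the verification of $a_n$, $b_n$), and your explicit check of the boundary coefficients and the crude bound $\sum_j \mathfrak m_{j,N_1}(h''') \le N_1\,\mathfrak m(h''')$ fill in exactly what the paper leaves implicit.
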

\begin{proof}
   Let $A = x+\frac{M_1}{\sqrt{N_1}}$, $a_0 = 1-\theta_+$, $B = 1-\del$ and $b_0 = \theta_\del'-1$. Then we have
    \begin{align*}
        a_n &= AN_1 + a_0 = \Big(x+\frac{M_1}{\sqrt{N_1}}\Big)N_1 + 1-\theta_+ = g_++\al+2, \\
        b_n &= BN_1+b_0 = (1-\del)N_1+\theta_\del'-1 =  (1-\del)N + \theta_\del - 1 = k_++\al.
    \end{align*}
    Then the lemma follows from Lemma \ref{lem:3.3}.
\end{proof}

Define
\begin{equation}\label{def:T}
    T(t) = 2\bigg(t\log \Big( \frac tx \Big)+ (1-t) \log \Big( \frac{1-t}{1-x} \Big) \bigg).
\end{equation}
Note that $T(x_k) = \eta_k^2$, where $\eta_k$ is given by \eqref{def of eta}. 

\begin{lem} \label{S1^(3)}
    As $N \to \infty$, we have 
    \begin{align}
    \begin{split}
        S_1^{(3)} &= B_1^{(3)} N_1^2 + B_2^{(3)}N_1\log N_1 + B_3^{(3)} N_1 + B_4^{(3)} \sqrt{N_1} + B_5^{(3)} \log N_1 + B_6^{(3)} \\
        &\quad + B_7^{(3)} \frac 1{N_1} + B_8^{(3)} \frac 1{N_1^2} + \OO\Big(\frac{M_1}{\sqrt{N_1}} + \frac1{M_1^2}+\frac{\sqrt {N_1}}{M_1^7}\Big),
    \end{split}
    \end{align}
    where
    \begin{align*}
        B_1^{(3)} &= -\frac12 \int_{x+\frac{M_1}{\sqrt{N_1}}}^{1-\del} T(t)\, dt, \qquad B_2^{(3)} = -\frac12(1-\del-x), 
        \\
        B_3^{(3)} &= \frac12 \int_{x+\frac{M_1}{\sqrt{N_1}}}^{1-\del} h_1(t)\, dt -\frac12(1-\del-x)\log(2\pi) + \frac12\Big[\Big(\frac12 - \theta_+\Big)T_{M_1}+\Big(\frac12 - \theta_\del'\Big)T_{1-\del}\Big], \\
        B_4^{(3)} &= \frac12 M_1\log(2\pi N_1), \qquad B_5^{(3)} = \frac12(1-\theta_+-\theta_\del'), \\
        B_6^{(3)} &= \frac12 (1-\theta_+-\theta_\del') \log(2\pi) + \frac{1-6\theta_++6\theta_+^2}{24}T_{M_1}' - \frac{1-6\theta_\del'+6\theta_\del'^2}{24}T_{1-\del}' \\
        &\quad + \frac12\Big(\theta_+-\frac12\Big)h_1\Big(x+\frac{M_1}{\sqrt{N_1}}\Big) + \frac12\Big(\theta_\del'-\frac12\Big)h_1(1-\del) - \int_{x+\frac{M_1}{\sqrt{N_1}}}^{1-\del}h_2(t)\, dt, \\
        B_7^{(3)} &= \frac52 \int_{x+\frac{M_1}{\sqrt{N_1}}}^{1-\del} \frac{1}{T(t)^2}\,dt, \qquad B_8^{(3)} = -\frac{37}{3}\int_{x+\frac{M_1}{\sqrt{N_1}}}^{1-\del}\frac1{T(t)^3}\,dt.
    \end{align*}
    Here $T$ is given by \eqref{def:T} and 
    \begin{equation} \label{def of TM1 del}
    T_{M_1} = T(x+\tfrac{M_1}{\sqrt{N_1}}), \qquad T_{1-\del} = T(1-\del), \qquad T_{M_1}' = T'(x+\tfrac{M_1}{\sqrt{N_1}}),  \qquad T_{1-\del}' = T'(1-\del)
    \end{equation} 
    and
    \begin{equation}\label{def:h1h2}
        h_1(t)  = \log \Big( \frac{t(1-t)}{(t-x)^2} \Big), \qquad
        h_2(t) = \frac{1-13t+13t^2}{12t(1-t)} - \frac{1-2t}{t-x}+\frac{t(1-t)}{(t-x)^2}.
    \end{equation}
\end{lem}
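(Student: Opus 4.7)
The plan is to combine Lemma~\ref{beta} with the large-argument asymptotics \eqref{erfc asymptotic} of $\erfc$ to obtain a pointwise asymptotic expansion of $\log I_x(m_k,N_1-m_k)$ valid in the regime $x_k\in[x+M_1/\sqrt{N_1},\,1-\del]$, and then to sum that expansion term-by-term using the Euler-Maclaurin formula of Lemma~\ref{lem:3.9}.

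Throughout the summation range we have $x_k>x$, so $\eta_k<0$ and $-\eta_k\sqrt{N_1/2}$ is bounded below by a positive constant multiple of $M_1$; hence it diverges. I would substitute \eqref{erfc asymptotic} into the formula of Lemma~\ref{beta} and regroup by powers of $1/N_1$. The key algebraic observation is that the $1/\eta_k^{2j+1}$ singularities coming from the $\erfc$ expansion cancel exactly against the corresponding singularities of $c_j(\eta_k)$ at $\eta_k=0$; for instance, \eqref{def of c0} gives $-1/\eta_k+c_0(\eta_k)=\sqrt{x_k(1-x_k)}/(x_k-x)$, and \eqref{def of c1} yields an analogous simplification at the next order. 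After taking logarithms and collecting, one obtains the pointwise expansion
\begin{align*}
\log I_x(m_k,N_1-m_k)&=-\tfrac12 N_1 T(x_k)-\tfrac12\log(2\pi N_1)+\tfrac12 h_1(x_k)-\frac{h_2(x_k)}{N_1}\\
&\quad +\frac{5}{2\,N_1^2\,T(x_k)^2}-\frac{37}{3\,N_1^3\,T(x_k)^3}+\OO\Big(\tfrac{1}{N_1^4 T(x_k)^4}\Big),
\end{align*}
where the identification $\log\big(\sqrt{x_k(1-x_k)}/(x_k-x)\big)=\tfrac12 h_1(x_k)$ is immediate from \eqref{def:h1h2}, while $h_2$ and the numerical constants $5/2$ and $-37/3$ arise from direct but lengthy algebraic simplification of the $1/N_1^2$ and $1/N_1^3$ orders, using the explicit expressions for the $c_j(\eta_k)$.

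Next I would apply Lemma~\ref{lem:3.9} to each of the six non-error terms separately. Concretely, the $-\tfrac12 N_1 T(x_k)$ piece produces $B_1^{(3)}N_1^2$ as the bulk integral, together with boundary/derivative contributions involving $T_{M_1},T_{1-\del},T_{M_1}',T_{1-\del}'$ that feed into $B_3^{(3)}$, $B_5^{(3)}$ and $B_6^{(3)}$; the constant summand $-\tfrac12\log(2\pi N_1)$, multiplied by the number of summands $\approx(1-\del-x)N_1-M_1\sqrt{N_1}$, yields $B_2^{(3)}N_1\log N_1$, the $-\tfrac12(1-\del-x)\log(2\pi)$ part of $B_3^{(3)}$, the term $B_4^{(3)}\sqrt{N_1}=\tfrac12 M_1\log(2\pi N_1)\sqrt{N_1}$, and the $\tfrac12(1-\theta_+-\theta_\del')\log(2\pi N_1)$ pieces that appear in $B_5^{(3)}\log N_1$ and $B_6^{(3)}$; the $\tfrac12 h_1(x_k)$ and $-h_2(x_k)/N_1$ summands contribute $\tfrac12\int h_1$ and $-\int h_2$ to $B_3^{(3)}$ and $B_6^{(3)}$ respectively, together with the $h_1,h_2$ endpoint evaluations and the $h_1'$ derivative boundary terms predicted by Lemma~\ref{lem:3.9}; finally the $1/(N_1^2 T^2)$ and $1/(N_1^3 T^3)$ terms, via the Riemann-sum approximation $\sum f(x_k)\approx N_1\int f\,dt$, produce $B_7^{(3)}/N_1$ and $B_8^{(3)}/N_1^2$.

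The main obstacle is the delicate control of the error at the left endpoint $x_k=x+M_1/\sqrt{N_1}$, where $T(x_k)\sim M_1^2/(\sig^2 N_1)$ is small and the pointwise expansion genuinely degenerates. Three competing error sources must be balanced: the truncation error $\OO(1/(N_1^4 T(x_k)^4))$ in the pointwise expansion, when summed via $\sum\approx N_1\int$, yields $\int_{x+M_1/\sqrt{N_1}}^{1-\del}(t-x)^{-8}dt\sim (\sqrt{N_1}/M_1)^7$ and thus contributes $\OO(\sqrt{N_1}/M_1^7)$; the Euler-Maclaurin boundary value of $-h_2/N_1$ at the left endpoint is of order $1/M_1^2$, since $h_2(t)\sim\sig^2/(t-x)^2$ near $x$; and the remaining boundary and derivative terms in Lemma~\ref{lem:3.9} contribute $\OO(M_1/\sqrt{N_1})$. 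The specific choice $M_1=N_1^{1/12}$ is calibrated so that the dominant error is $\sqrt{N_1}/M_1^7=N_1^{-1/12}$, which, combined with the analogous errors in Lemmas~\ref{S0}--\ref{S1^(2)} and with Lemma~\ref{S2+S3}, produces the overall $\OO(n^{-1/12})$ stated in Theorem~\ref{thm:1.1}.
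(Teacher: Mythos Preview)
Your proposal is correct and follows essentially the same approach as the paper: derive the pointwise expansion \eqref{S1^(3):logIx} from Lemma~\ref{beta} together with \eqref{erfc asymptotic}, then apply the Euler--Maclaurin formula of Lemma~\ref{lem:3.9} term by term. The paper organizes the computation via the auxiliary quantities $\al_k,\beta_k$ of \eqref{def:alphabeta} and the partial sums $\mc S_0,\dots,\mc S_5$, and it keeps a slightly finer list of pointwise error terms (four rather than your single $\OO(1/(N_1^4 T^4))$, which is what makes the $\OO(1/M_1^2)$ contribution appear already at the level of the summand), but the structure and outcome are identical.
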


\begin{proof}
By Lemma \ref{beta}, we have
\begin{equation}\label{eq:49}
    I_x(m_k,N_1-m_k) = \frac12 \erfc\Big(-\eta_k\sqrt{\frac{N_1}{2}}\Big) + \frac{1}{\sqrt{2\pi N_1}}e^{-\frac12 \eta_k^2 N_1}\Big(c_0(\eta_k) + \OO\Big(\frac1{N_1}\Big)\Big).
\end{equation}
Let us write $x_k = x+d$ with $d\ll 1$. 
Note that 
\[
-\eta_k \sqrt{\frac{N_1}{2}} = \sqrt{\frac{N_1}{2}}\frac d\sig \Big(1-\frac{1-2x}{6\sig^2}d + \OO(d^2)\Big)  \to \infty.  
\]
Then by \eqref{erfc asymptotic}, we obtain 
\begin{equation} \label{S1^(3):Ix}
   I_x(m_k,N_1 - m_k) = -\frac1{\eta_k} \,\frac{1}{\sqrt{2\pi N_1}}e^{-\frac12 \eta_k^2 N_1} \Big[\al_k -\beta_k \frac{1}{N_1}+\frac{3}{\eta_k^4 N_1^2}-\frac{15}{\eta_k^6 N_1^3}+\OO\Big(\frac{1}{\eta_k^8N_1^4}+\frac{\eta_k}{N_1^2}\Big)\Big] , 
\end{equation}
where
\begin{equation} \label{def:alphabeta}
    \al_k := 1- \eta_k c_0(\eta_k) = -\eta_k \frac{\sqrt{x_k(1-x_k)}}{x_k-x}, \qquad \beta_k := \frac{1}{\eta_k^2}+\eta_k c_1(\eta_k).
\end{equation}
Here $c_0$ and $c_1$ are given by \eqref{def of c0} and \eqref{def of c1}.

Note that as $x_k \to x$,
\[
\al_k = 1+\OO(x_k - x), \qquad \beta_k = \frac{\sig^2}{(x_k-x)^2}+\frac{1-2x}{3}\frac{1}{x_k - x}-\frac{1-\sig^2}{18\sig^2}+\OO(x_k-x).
\]
Notice also that $\al_k$ is bounded away from 0 in $x_k \in [x,1-\del]$.
Therefore, we have 
\begin{align} \label{S1^(3):logIx}
\begin{split}
    \log I_x(m_k, N_1-m_k) &= -\frac12 \eta_k^2 N_1 - \frac12 \log N_1 - \frac12 \log \eta_k^2 - \frac12 \log(2\pi) +\log \al_k - \frac{\beta_k}{\al_k N_1}\\
    &\quad +\frac{5}{2\eta_k^4 N_1^2}-\frac{37}{3\eta_k^6 N_1^3}+\OO\Big(\frac{1}{\eta_k^8N_1^4}+\frac{1}{\eta_k^2 N_1^2(x_k-x)}+\frac{x_k-x}{\eta_k^4 N_1^2}+\frac{x_k-x}{\eta_k^6N_1^3}\Big). 
\end{split}
\end{align}
    
    We compute the summation of each term of \eqref{S1^(3):logIx} using Lemma \ref{lem:3.9}. 
    For this purpose, let us write 
    \begin{align*}
        \mc S_0 &= -\frac12 \sum_{k=g_++1}^{k_+-1} \Big( \log N_1 + \log(2\pi)\Big), \qquad \mc S_1 = -\frac{N_1}{2} \sum_{k=g_++1}^{k_+-1} \eta_k^2, \qquad \mc S_2 = \sum_{k=g_++1}^{k_+-1} \Big(-\frac12 \log \eta_k^2 + \log \al_k \Big), 
        \\
        \mc S_3 &= -\frac1{N_1} \sum_{k=g_++1}^{k_+-1} \frac{\beta_k}{\al_k}, \qquad  \mc S_4 = \frac{5}{2N_1^2} \sum_{k=g_++1}^{k_+-1} \frac{1}{\eta_k^4}, \qquad \mc S_5 = -\frac{37}{3N_1^3} \sum_{k=g_++1}^{k_+-1} \frac{1}{\eta_k^6}.
    \end{align*}
    First, observe that since 
    \[
    \sum_{k=g_++1}^{k_+-1} 1 = k_+-g_+-1 = (1-\del-x)N_1 - M_1\sqrt{N_1}+ \theta_++ \theta_\del' -1,
    \]
    we have 
    \begin{align}\label{S1^(3):part0}
    \begin{split}
        \mc S_0 &= -\frac12 (1-\del-x) N_1 \log N_1 - \frac12 (1-\del-x) \log(2\pi) N_1 + \frac12 M_1\sqrt{N_1}\log N_1 \\
        & \quad + \frac12 \log(2\pi) M_1 \sqrt{N_1} + \frac12 (1-\theta_+-\theta_\del') \log N_1 + \frac12 (1-\theta_+-\theta_\del') \log(2\pi).
    \end{split} 
    \end{align}
    We take account of the following variation of Lemma \ref{lem:3.9} which can be obtained by putting $h \circ T$ instead of $h$: 
    \begin{align*}
        \sum_{k=g_++1}^{k_+-1} h(\eta_k^2) &= N_1 \int_{x+\frac{M_1}{\sqrt{N_1}}}^{1-\del}h( T(t)) \,dt +\Big(\theta_+-\frac12\Big)h(T_{M_1}) + \Big(\theta_\del'-\frac12\Big)h(T_{1-\del}) \\
    &\quad - \frac{1-6\theta_++6\theta_+^2}{12N_1}h'(T_{M_1})T_{M_1}' + \frac{1-6\theta_\del'+6\theta_\del'^2}{12N_1}h'(T_{1-\del}) T_{1-\del}' + \OO\Big(\frac{\mf m((h\circ T)'') + \mf m((h\circ T)''')}{N_1^2}\Big).
    \end{align*}
    By using this formula, we obtain 
    \begin{align} \label{S1^(3):part145}
    \begin{split}
        \mc S_1 
        &= -\frac{N_1^2}2 \int_{x+\frac{M_1}{\sqrt{N_1}}}^{1-\del} T(t) \,dt + \frac{N_1}{2}\Big[\Big(\frac12-\theta_+\Big)T_{M_1}+\Big(\frac12 -\theta_\del'\Big)T_{1-\del}\Big] \\
        &\quad + \frac{1-6\theta_++6\theta_+^2}{24}T_{M_1}' - \frac{1-6\theta_\del'+6\theta_\del'^2}{24}T_{1-\del}' + \OO\Big(\frac{1}{N_1}\Big), \\
        \mc S_4
        &= \frac{5}{2N_1}\int_{x+\frac{M_1}{\sqrt{N_1}}}^{1-\del} \frac{1}{T(t)^2}\, dt + \OO\Big(\frac{1}{M_1^4}\Big), \qquad \mc S_5 = -\frac{37}{3N_1^2} \int_{x+\frac{M_1}{\sqrt{N_1}}}^{1-\del}\frac{1}{T(t)^3}\,dt + \OO\Big(\frac{1}{M_1^6}\Big).
    \end{split}
    \end{align} 
    In order to analyse $\mc S_2 $, note that
    \[
    \frac{\al_k}{\eta_k} = \frac{1-\eta_kc_0(\eta_k)}{\eta_k} = -\frac{\sqrt{x_k(1-x_k)}}{x_k-x}.
    \]
    Then by using Lemma \ref{lem:3.9} with $h_1$ in \eqref{def:h1h2}, we have 
    \begin{equation} \label{S1^(3):part2}
    \mc S_2 = \frac{N_1}2\int_{x+\frac{M_1}{\sqrt{N_1}}}^{1-\del} h_1(t) \,dt + \frac12\Big(\theta_+-\frac12 \Big) h_1\Big(x+\frac{M_1}{\sqrt{N_1}}\Big) + \frac12 \Big(\theta_\del'-\frac12 \Big) h_1(1-\del) + \OO\Big(\frac{M_1}{\sqrt{N_1}}\Big).
    \end{equation}
    Finally, note that by \eqref{def of c1}, we have 
    \[
    \frac{\beta_k}{\al_k} = \frac{x_k(1-x_k)}{(x_k-x)^2}-\frac{1-2x_k}{x_k-x}+\frac{1-13x_k+13x_k^2}{12x_k(1-x_k)} = h_2(x_k). 
    \]
    Then it follows from Lemma \ref{lem:3.9} that 
    \begin{equation} \label{S1^(3):part3}
        \mc S_3 = -\int_{x+\frac{M_1}{\sqrt{N_1}}}^{1-\del} h_2(t) \,dt + \OO\Big(\frac{1}{M_1^2}\Big). 
    \end{equation} 
    Note also that Lemma \ref{lem:3.9} shows 
    \[
     \sum_{k=g_++1}^{k_+-1}  \OO\Big(\frac{1}{\eta_k^8N_1^4}+\frac{1}{\eta_k^2N_1^2(x_k-x)}+\frac{x_k-x}{\eta_k^4N_1^2}+\frac{x_k-x}{\eta_k^6N_1^3}\Big) = \OO\Big(\frac{\sqrt{N_1}}{M_1^7}+\frac{1}{M_1^2}\Big). 
    \]
    Combining \eqref{S1^(3):part0}, \eqref{S1^(3):part145}, \eqref{S1^(3):part2} and \eqref{S1^(3):part3}, the lemma follows. 
\end{proof}

To state the asymptotic behaviour of $S_1$, we define 
\begin{equation} \label{J1J2}
    \JJ_1^{(\del)} = \int_x^{1-\del}\Big[h_1(t) - 2\log \Big(\frac{\sig}{t-x} \Big) \Big]\,dt, \qquad \JJ_2^{(\del)} = \int_x^{1-\del}\Big[h_2(t) - \frac{\sig^2}{(t-x)^2}\Big]\,dt,
\end{equation}
where $h_1$ and $h_2$ are defined by \eqref{def:h1h2}. 

\begin{lem}[\textbf{Asymptotic behaviour of $S_1$}] \label{S1}
    As $N \to \infty$, we have 
    \begin{equation}
     S_1 = B_1 N_1^2 + B_2 N_1 \log N_1 + B_3 N_1 + B_4 \sqrt {N_1} + B_5 \log N_1 + B_6 + \OO\bigg(\frac{M_1^5}{\sqrt{N_1}}+\frac1{M_1^2}+\frac{\sqrt{N_1}}{M_1^7}\bigg),
    \end{equation}
where 
    \begin{align*}
        B_1 &= -\frac12 \int_x^{1-\del} T(t) \,dt, \qquad B_2 = -\frac12(1-x-\del), \\
        B_3 &= \frac12 \JJ_1^{(\del)}-\frac12 \Big( \theta_\del'-\frac12\Big)T_{1-\del}+(1-x-\del)\Big[1+\log \Big( \frac{\sig}{\sqrt{2\pi}(1-x-\del)} \Big) \Big], \\
        B_4 &=  \sig \, \II_1 , \qquad B_5 = \frac14-\frac12\theta_\del', 
        \\
        B_6 &= \frac{2(1-2x)}{3}\II_2-\JJ_2^{(\del)} +\Big(\frac12 -  \theta_\del'\Big)\log \sqrt{2\pi} - \frac{1-6\theta_\del'+6\theta_\del'^2}{24}T_{1-\del}'+\frac12\Big(\theta_\del'-\frac12\Big)h_1(1-\del) + \frac{\sig^2}{1-x-\del}.
    \end{align*}
    Here, $T$ and $h_1$ are given by \eqref{def:T} and \eqref{def:h1h2}; $T_{1-\del}$, $T_{1-\del}'$ are given by \eqref{def of TM1 del}; and $\II_1, \II_2, \JJ_1^{(\del)}$ and $\JJ_2^{(\del)}$ are given by \eqref{I1}, \eqref{I2} and \eqref{J1J2}.
\end{lem}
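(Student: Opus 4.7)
The plan is to substitute the three asymptotic formulas from Lemmas~\ref{S1^(1)}, \ref{S1^(2)}, \ref{S1^(3)} into the decomposition $S_1 = S_1^{(1)} + S_1^{(2)} + S_1^{(3)}$ from \eqref{sum division S1 123} and track how the $M_1$-dependent pieces of $S_1^{(2)}$ and $S_1^{(3)}$ cancel. The contribution from $S_1^{(1)}$ is exponentially small by Lemma~\ref{S1^(1)} and is absorbed in the error. Since $M_1 = N_1^{1/12}$, the task reduces to showing that the coefficients $B_j^{(2)}$ and $B_j^{(3)}$, which diverge as $M_1 \to \infty$, combine to give $M_1$-independent quantities up to the claimed error $\OO(M_1^5/\sqrt{N_1} + 1/M_1^2 + \sqrt{N_1}/M_1^7)$.

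The key tool for matching the two contributions is a pair of local expansions near $t=x$. On the $S_1^{(3)}$-side, I will Taylor-expand the integrands of $B_1^{(3)}$, $B_3^{(3)}$, $B_6^{(3)}$, $B_7^{(3)}$, $B_8^{(3)}$ using
\[
T(t) = \frac{(t-x)^2}{\sigma^2} + \frac{1-2x}{3\sigma^4}(t-x)^3 + \OO((t-x)^4), \quad h_1(t) = 2\log\frac{t-x}{\sigma} + \OO(t-x), \quad h_2(t) = \frac{\sigma^2}{(t-x)^2} + \OO\Big(\frac{1}{t-x}\Big),
\]
write each integral over $[x + M_1/\sqrt{N_1}, 1-\delta]$ as the corresponding integral over $[x,1-\delta]$ minus the explicit boundary layer $\int_x^{x+M_1/\sqrt{N_1}}$, and compute the latter explicitly. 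This exposes divergent terms of the form $M_1^3 \sqrt{N_1}/\sigma^2$, $M_1 \sqrt{N_1} \log M_1$, $M_1\sqrt{N_1}$, $\log M_1$, and the convergent remainders that produce $\JJ_1^{(\delta)}$ and $\JJ_2^{(\delta)}$ as defined in \eqref{J1J2}. On the $S_1^{(2)}$-side, I will use \eqref{erfc asymptotic} to expand $\log[\tfrac{1}{2}\erfc(t/\sqrt{2}\sigma)]$ for large positive $t$ and note its rapid decay as $t \to -\infty$; after the substitution $t = \sqrt{2}\sigma\, s$, the integral $\int_{-M_1}^{M_1}\log[\tfrac{1}{2}\erfc(t/\sqrt{2}\sigma)]\,dt$ splits as the convergent integral defining $\sigma \II_1$ in \eqref{I1} plus exactly the divergent polynomial and logarithmic boundary terms needed to cancel those arising from the $S_1^{(3)}$-side. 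The analogous analysis of $\int_{-M_1}^{M_1} g(t)\,dt$, with $g$ given by \eqref{def:g}, yields $\frac{2(1-2x)}{3}\II_2$ from \eqref{I2} together with the matching $M_1$-dependent remainders.

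Once these expansions are in place, I will add up $S_1^{(2)} + S_1^{(3)}$ coefficient by coefficient and verify the cancellations: the $N_1^2$ coefficient becomes $-\tfrac{1}{2}\int_x^{1-\delta}T(t)\,dt$ because the boundary layer of $B_1^{(3)}N_1^2$ combines with the $M_1^3\sqrt{N_1}$-divergence of $B_4^{(2)}\sqrt{N_1}$; the $\sqrt{N_1}$ coefficient collapses to $\sigma\II_1$ after the $M_1\log(2\pi N_1)$-part of $B_4^{(3)}$ pairs with the $M_1\log(M_1\sqrt{2\pi}/\sigma)$ divergence from $B_4^{(2)}$; the $N_1$ coefficient rearranges into $B_3$ using the Taylor expansion of $T$ near $x$; and the constant term combines the $\II_2$-contribution from $B_6^{(2)}$, the $\log M_1$ divergences from $B_6^{(3)}$, and the regularized integral $\JJ_2^{(\delta)}$ from $B_3^{(3)}$, $B_7^{(3)}/N_1$, $B_8^{(3)}/N_1^2$.

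The main obstacle will be the bookkeeping of these cancellations, especially for the constant term $B_6$: the $\OO(N_1^{-1})$ and $\OO(N_1^{-2})$ pieces from Lemma~\ref{S1^(3)} have integrands that blow up as $(t-x)^{-4}$ and $(t-x)^{-6}$ at the lower endpoint, so they produce non-trivial $M_1$-dependent contributions of size $\sqrt{N_1}/M_1^3$ and $N_1/M_1^5$ respectively when evaluated against the boundary layer. These must be tracked carefully and shown to sit inside the error $\OO(\sqrt{N_1}/M_1^7 + 1/M_1^2)$ or to cancel against analogous subleading pieces from the Taylor expansion of $T$, $h_1$, $h_2$ in $B_1^{(3)}$, $B_3^{(3)}$, $B_6^{(3)}$. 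Once this accounting is complete and combined with the error from Lemma~\ref{S1^(2)}, the stated asymptotic formula follows.
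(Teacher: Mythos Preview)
Your proposal is correct and follows essentially the same route as the paper: combine the three sub-lemmas, Taylor-expand the integrands of $S_1^{(3)}$ near $t=x$, use the large-$t$ asymptotics of $\erfc$ on the $S_1^{(2)}$ side, and verify that all $M_1$-dependent terms cancel. Two small slips to fix when you carry this out: the local expansions should read $T(t)=\sigma^{-2}(t-x)^2-\tfrac{1-2x}{3\sigma^4}(t-x)^3+\cdots$ and $h_1(t)=2\log\bigl(\sigma/(t-x)\bigr)+\OO(t-x)$ (both signs are flipped in your sketch), and the boundary contribution of $B_8^{(3)}/N_1^2$ is of size $\sqrt{N_1}/M_1^5$, not $N_1/M_1^5$; these terms of order $\sqrt{N_1}/M_1$, $\sqrt{N_1}/M_1^3$, $\sqrt{N_1}/M_1^5$ cancel exactly against the subleading pieces of $B_4^{(2)}\sqrt{N_1}$ coming from the $\erfc$ expansion.
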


\begin{proof}
    By combining Lemmas~\ref{S1^(1)}, \ref{S1^(2)} and \ref{S1^(3)}, we obtain
    \begin{align}
    \begin{split}  \label{S1 sum 123}
        S_1 &= B_1^{(3)}N_1^2 + B_2^{(3)}N_1\log N_1 + B_3^{(3)} N_1 + B'_4\sqrt{N_1}+B_5^{(3)}\log N_1 + B'_6
        \\
    &\quad + B_7^{(3)} \frac{1}{N_1}+B_8^{(3)} \frac{1}{N_1^2} + \OO\bigg(\frac{\sqrt{N_1}}{M_1^7}+\frac{1}{M_1^2}+\frac{M_1^5}{\sqrt{N_1}}+\frac{M_1^{10}}{N_1}\bigg),
    \end{split}
    \end{align}
    where
    \begin{equation}
      B'_4 = B_4^{(2)} +B_4^{(3)}, \qquad 
   B'_6 = B_6^{(2)}+B_6^{(3)}.
    \end{equation}
    We claim that every $M_1$-dependent terms cancel out. 
    
    Let us begin with the leading order term. Note that as $t \to x$, we have
    \[
    T(t) = \frac1{\sig^2}(t-x)^2 - \frac{1-2x}{3\sig^4}(t-x)^3+\frac{1-3\sig^2}{6\sig^6}(t-x)^4+\OO((t-x)^5).
    \]
    Then we obtain
    \begin{align} 
    \begin{split}
        B_1^{(3)}N_1^2 &= -\frac12 N_1^2 \int_{x+\frac{M_1}{\sqrt{N_1}}}^{1-\del} T(t)\, dt = -\frac12 N_1^2\int_x^{1-\del}T(t) \,dt + \frac12 N_1^2 \int_x^{x+\frac{M_1}{\sqrt{N_1}}}T(t) \,dt 
        \\
        &= -\frac12 N_1^2 \int_x^{1-\del}T(t)\, dt + \frac1{6\sig^2}M_1^3\sqrt{N_1} - \frac{1-2x}{24\sig^4}M_1^4 + \OO\bigg(\frac{M_1^5}{\sqrt{N_1}}\bigg). \label{S1:part1}
    \end{split}
    \end{align}
    Note that by Lemma~\ref{S1^(3)}, 
    \begin{align}
    B_2^{(3)}N_1\log N_1 &=  -\frac12(1-\del-x) N_1 \log N_1, 
    \\ 
    B_5^{(3)} \log N_1 &= \frac12(1-\theta_+-\theta_\del') \log N_1. 
    \end{align}
    
   For the term $B_3^{(3)}N_1$, note that
    \[
    h_1(t) = 2\log \Big(\frac{\sig}{t-x}\Big)+\frac{1-2x}{\sig^2}(t-x) + \OO((t-x)^2), \qquad \text{as } t \to x.
    \]
   This gives rise to 
    \begin{align*}
      &\quad   \frac12\int_{x+\frac{M_1}{\sqrt{N_1}}}^{1-\del} h_1(t) \,dt = \frac12 \int_{x+\frac{M_1}{\sqrt{N_1}}}^{1-\del} \Big(h_1(t) - 2\log \Big(\frac{\sig}{t-x}\Big)\Big) \,dt + \int_{x+\frac{M_1}{\sqrt{N_1}}}^{1-\del}\log \Big(\frac{\sig}{t-x}\Big)\,dt 
      \\
        &= \frac{ \JJ_1^{(\del)} }{2}- \frac12 \int_x^{x+\frac{M_1}{\sqrt{N_1}}}\Big[\frac{1-2x}{\sig^2}(t-x) + \OO((t-x)^2)\Big]\,dt + \int_{x+\frac{M_1}{\sqrt{N_1}}}^{1-\del} \log \Big(\frac{\sig}{t-x}\Big)\,dt \\
        &= \frac{ \JJ_1^{(\del)} }{2} +(1-\del-x)\Big[1+\log\Big(\frac{\sig}{1-\del-x}\Big)\Big] + \frac{M_1}{\sqrt{N_1}}\log\Big(\frac{M_1}{\sqrt{N_1}}\Big) -\frac{M_1(1+\log \sig) }{\sqrt{N_1}}- \frac{1-2x}{4\sig^2}\frac{M_1^2}{N_1} + \OO\Big(\frac{M_1^3}{N_1\sqrt{N_1}}\Big).
    \end{align*}
    Using this and $T_{M_1} = \frac{M_1^2}{\sig^2N_1}+\OO(M_1^3N_1^{-3/2})$, we obtain
    \begin{align} \label{S1:part3}
        \begin{split}
            B_3^{(3)}N_1 &= N_1\bigg[\frac12 \JJ_1^{(\del)}-\frac12\Big(\theta_\del'-\frac12\Big)T_{1-\del}+(1-\del-x)\Big[1+\log\Big(\frac{\sig}{\sqrt{2\pi}(1-\del-x)}\Big)\Big]\bigg] 
            \\
            &\quad + \sqrt{N_1} \bigg[M_1\log M_1 - \frac12 M_1\log N_1 - M_1(1+\log \sig) \bigg] - \frac{\theta_+-x}{2\sig^2}M_1^2 + \OO\Big(\frac{M_1^3}{\sqrt{N_1}}\Big).
        \end{split}
    \end{align}
    
    Recall that by Lemmas~\ref{S1^(2)} and \ref{S1^(3)}, 
    \[
    B'_4 = -\int_{-M_1}^{M_1}\log\Big[\frac12\erfc\Big(\frac{t}{\sqrt2\sig}\Big)\Big]\,dt + \frac12 M_1 \log(2\pi N_1).
    \]
    Here, we have 
    $$
      \log\Big[\frac12\erfc\Big(\frac{t}{\sqrt 2 \sig}\Big)\Big] = -\frac{t^2}{2\sig^2}-\frac12 \log\Big(1+\frac{t^2}{2\sig^2}\Big) - \frac12\log(4\pi) + \frac{3\sig^4}{2(1+t^4)}-\frac{11\sig^6}{1+t^6}+\OO\Big(\frac1{t^8}\Big) \qquad \text{ as } t \to +\infty, 
    $$
    and 
    $$
      \log\Big[\frac12\erfc\Big(\frac{t}{\sqrt 2 \sig}\Big)\Big] = \OO\Big(\frac1{|t|}e^{-\frac{t^2}{2\sig^2}}\Big), \qquad \text{ as } t \to -\infty. 
    $$
    Define 
    \[
    w(t) = \log\Big[\frac12 \erfc\Big(\frac{t}{\sqrt 2 \sig}\Big)\Big]+\Big[\frac{t^2}{2\sig^2}+\frac12 \log\Big(1+\frac{t^2}{2\sig^2}\Big)+\frac12\log(4\pi)\Big] \ind_{(0,\infty)}(t). 
    \] 
    We also write 
    \begin{equation}
     \widetilde{\II}_1 = \int_{-\infty}^\infty \bigg[\log\Big(\tfrac12 \erfc(t) \Big) + \Big(t^2+\frac12\log(1+t^2)+\frac12 \log(4\pi)\Big)\ind_{(0,\infty)}(t)\bigg]\, dt.
    \end{equation}
    Note that $\widetilde{\II}_1$ is related to \eqref{I1} as 
    \begin{equation}
    \II_1= \sqrt{2} \Big(  \widetilde{\II}_1- \frac{\pi}{2} \Big). 
    \end{equation}
    Then we have
    \begin{align*}
    &\quad     \int_{-M_1}^{M_1} \log\Big[\frac12 \erfc\Big(\frac{t}{\sqrt2 \sig}\Big)\Big]\,dt = \int_{-M_1}^{M_1} w(t) \,dt - \int_0^{M_1}\Big(\frac{t^2}{2\sig^2}+\frac12\log\Big(1+\frac{t^2}{2\sig^2}\Big)+\frac12\log(4\pi)\Big)\,dt 
    \\
     &= \sqrt 2 \sig  \widetilde{\II}_1 - \int_{M_1}^\infty \Big(\frac{3\sig^4}{2(1+t^4)}-\frac{11\sig^6}{1+t^6}\Big)\,dt - \int_0^{M_1}\Big(\frac{t^2}{2\sig^2}+\frac12\log\Big(1+\frac{t^2}{2\sig^2}\Big)+\log\sqrt{4\pi}\Big)\,dt +\OO\Big(\frac{1}{M_1^7}+\frac{e^{-M_1^2}}{M_1}\Big).
    \end{align*}
    Then by direct computations, it follows that
    \begin{align}\label{S1:part4}
        \begin{split}
            B'_4\sqrt{N_1} &= -\frac{M_1^3\sqrt{N_1}}{6\sig^2}+\frac12 M_1\sqrt{N_1}\log N_1 - M_1\log M_1\sqrt{N_1}+(1+\log \sig)M_1\sqrt{N_1} \\
            &\quad +   \sig \, \II_1  \sqrt{N_1}+\sig^2 \frac{\sqrt{N_1}}{M_1}-\frac{5\sig^4}{6}\frac{\sqrt{N_1}}{M_1^3}+\frac{37\sig^6}{15}\frac{\sqrt{N_1}}{M_1^5}+\OO\Big(\frac{\sqrt{N_1}}{M_1^7}\Big).
        \end{split}
    \end{align}

  By employing similar computations as above, one can obtain 
    \begin{align}
        B_6^{(2)}&= \frac{1-2x}{24\sig^4}M_1^4+\frac{\theta_+-x}{2\sig^2}M_1^2 + \Big(\theta_+-\frac12\Big)\log \Big(\frac{\sqrt{2\pi}M_1}{\sig}\Big) + \frac{2(1-2x)}{3}\II_2 +\OO\Big(\frac1{M_1^2}\Big), \label{S1:part6-2} \\
        \begin{split}\label{S1:part6-3}
            B_6^{(3)}&= -\frac{\sig^2}{M_1}\sqrt{N_1} + \frac12\Big(\theta_+-\frac12\Big)\log N_1 - \Big(\theta_+-\frac12\Big)\log \Big( \frac{\sqrt{2\pi}M_1}{\sig} \Big)  + \Big(\frac12 - \theta_\del'\Big)\log\sqrt{2\pi} 
            \\
        &\quad - \frac{1-6\theta_\del'+6\theta_\del'^2}{24}T_{1-\del}' + \frac12\Big(\theta_\del'-\frac12\Big)h_1(1-\del) + \frac{\sig^2}{1-\del-x}-\JJ_2^{(\del)} + \OO\Big(\frac{M_1}{\sqrt{N_1}}\Big),
        \end{split}
         \\
        \frac{1}{N_1}B_7^{(3)} &= \frac56\sig^4\frac{\sqrt{N_1}}{M_1^3}+\OO\Big(\frac1{M_1^2}+\frac1{N_1}+\frac{M_1}{N_1\sqrt{N_1}}\Big),
        \label{S1:part7}\\
        \frac{1}{N_1^2}B_8^{(3)} &= -\frac{37}{15}\sig^6\frac{\sqrt{N_1}}{M_1^5}+\OO\Big(\frac{1}{M_1^4}+\frac{1}{N_1^2}+\frac{M_1}{N_1^2\sqrt{N_1}}\Big).\label{S1:part8}
    \end{align}
    Here, we have used the asymptotic behaviour
    $$
    g(t) = \frac{1-2x}{6\sig^4}\Big[t^3+3\sig^2t+\OO\Big(\frac1{t^3}\Big)\Big]
    $$
    as $t \to +\infty$, as well as the behaviours 
    $$
    h_2(t) = \frac{\sig^2}{(t-x)^2}+\OO(1), \qquad \frac{1}{T(t)^2} = \frac{\sig^4}{(t-x)^4}+\OO((t-x)^{-3}), \qquad \frac{1}{T(t)^3} = \frac{\sig^6}{(t-x)^6}+\OO((t-x)^{-5}) 
    $$
    as $ t \to x$. 
    Combining all of the above, after straightforward simplifications, one can observe that all the terms depending on $M_1$ cancel out, leading to the desired asymptotic behaviour.
\end{proof}

\subsection{Proof of Theorem~\ref{thm:1.1}}

This subsection culminates the proof of Theorem~\ref{thm:1.1}. 
The remaining task is to combine Lemmas~\ref{S2+S3} and ~\ref{S1}, and then collect all the coefficients.

\begin{proof}[Proof of Theorem \ref{thm:1.1}] 
    Recall that $N_1 = N+1$. Thus we have
    \begin{align*}
        S_1 &= B_1 N^2 + B_2 N \log N + (2B_1 + B_3) N + B_4 \sqrt{N}\\
        &\quad + (B_2 + B_5) \log N + B_1+B_2+B_3+B_6 + \OO\Big(\frac1{\sqrt N}\Big).
    \end{align*}
    By combining \eqref{sum division S0123} with Lemmas~\ref{S2+S3} and \ref{S1}, we have
    \begin{align*}
        \log \mc P_n^\CC(R;\al,c) 
        &= A_1' N^2 + A_2' N \log N + A_3' N + A_4' \sqrt{N}+A_5' \log N + A_6' \\
        &\quad + \OO\Big(\frac1M + \frac{M^2}{N}+\frac{M^4}{N^2}+\frac{\sqrt{N_1}}{M_1^7}+\frac{1}{M_1^2}+\frac{M_1^5}{\sqrt{N_1}}+\frac{M_1^{10}}{N_1}\Big)
    \end{align*}
    as $N \to \infty$, where
    \begin{align*}
        A_1' &= B_1 + E_1, \qquad A_2' = B_2 + E_2, \qquad A_3' = 2B_1 + B_3 + E_3, \\
        A_4' &= B_4, \qquad A_5' = B_2+B_5 + E_5, \qquad A_6' = B_1+B_2+B_3+B_6+E_6.
    \end{align*}
    
    Note that since 
    \[
    B_1 = -\frac12 \int_x^{1-\del}T(t)\, dt = \frac12 \del^2 \log \Big(\frac{\del}{1-x}\Big)-\frac12 (1-\del)^2 \log \Big(\frac{1-\del}{x}\Big)+\frac12(1-x-\del),
    \]
    we have 
    \begin{equation} \label{A1A2}
        A_1' = \frac12 (1-x+\log x), \qquad A_2' = -\frac12(1-x). 
    \end{equation}
    For $A_3'$, recall the definitions of $f_1$, $T_{1-\del}$ from \eqref{def:f1f2}, Lemma \ref{S1^(3)} and $\theta_\del':= \theta_\del-(1-\del)$. 
    Note also that $\JJ_1^{(\del)}$ in \eqref{J1J2} is computed as 
    \begin{align} \label{J1_calculation}
    \begin{split}
        \JJ_1^{(\del)} &:= \int_x^{1-\del}\Big(h_1(t) - 2\log \Big(\frac{\sig}{t-x}\Big)\Big)\,dt = \int_x^{1-\del} \log \Big(\frac{t(1-t)}{\sig^2}\Big) \,dt  \\
        &\,= -\del \log \del + (1-\del) \log(1-\del) - 2(1-x-\del) + \log(1-x) - 2(1-\del)\log \sig.
    \end{split}
    \end{align}
    Then after straightforward computations, we obtain 
    \begin{equation} \label{A3}
        A_3' = (1-x)\Big[1-\frac12 \log (2\pi) + \frac12 \log \Big( \frac x{1-x} \Big) \Big]-c\log x.
    \end{equation}
    The $\OO(\sqrt{N})$ term immediately follows from $B_4$, and $A_5'$ is readily obtained from $B_2$, $B_5$, $E_5$ with the fact that $\theta_\del' = \theta_\del - (1-\del)$. Consequently, we have 
    \begin{equation} \label{A4A5}
        A_4' = B_4 = \sqrt{x(1-x)} \, \II_1  , \qquad A_5' = \frac12 x - \frac{1-3c+3c^2}{6}.
    \end{equation}
    For the $\OO(1)$ term, notice that
    \begin{align*}
        \JJ_2^{(\del)} &:= \int_x^{1-\del}\Big(h_2(t) - \frac{\sig^2}{(t-x)^2}\Big) \,dt = \frac1{12}\int_x^{1-\del}\Big(\frac{1}{t(1-t)}-1\Big)\,dt \\
        &= \frac{1}{12}\log \Big(\frac{(1-\del)(1-x)}{\del x}\Big) - \frac1{12}(1-x-\del).
    \end{align*}
    Then by using  
    \[
    f_1(t) = \frac12 T(1-t), \qquad T'(t) = 2\log \Big(\frac{(1-x)t}{x(1-t)}\Big),
    \]
    after some computations, one can show that
    \begin{equation} \label{A6}
        A_6' = \Big(\frac{1-3c+3c^2}{6}-\frac x2\Big) \log \Big(\frac{x}{1-x}\Big) + \Big(\frac x2 - \frac14\Big) \log (2\pi) + \frac{2(1-2x)}{3}\II_2 + \frac{1+11x}{12} - \zeta'(-1) + \log G(c+1). 
    \end{equation} 
    Finally, the facts that $N = n+\al+c$, $M = N^{\frac13}$ and $M_1 = N_1^{\frac1{12}}$ infer that
    \begin{align} \label{logPnfinal}
    \begin{split}
        \log \mc P_n^\CC(R;\al,c) &= A_1' n^2 + A_2' n \log n + (2(\al+c)A_1'+A_3') n + A_4' \sqrt{n} \\
        & \quad + ((\al+c)A_2'+A_5') \log n + (\al+c)^2 A_1'+(\al+c)(A_2'+A_3') + A_6' + \OO(n^{-\frac1{12}}).
    \end{split}
    \end{align}
    Using \eqref{A1A2}, \eqref{A3}, \eqref{A4A5}, \eqref{A6}, together with $x=1/(1+R^2)$, one can observe that the coefficients in \eqref{logPnfinal} are expressed as in Theorem \ref{thm:1.1}. This completes the proof.
\end{proof}

\section{Gap probabilities of the symplectic ensemble} \label{Section_symplectic}

This section is organised in parallel with the previous section, and we provide the proof of Theorem~\ref{thm:1.2}.
As before, we shall use the division \eqref{H:sum division S0123} of $\log \mathcal{P}_n^{ \mathbb{H} }$, as shown in Figure~\ref{Fig_sum division H}.
Subsection~\ref{Subsection S0123 H} is devoted to the analysis of $\widehat{S}_0$, $\widehat{S}_2$, and $\widehat{S}_3$, while Subsection~\ref{Subsection S1 H} focuses on $\widehat{S}_1$.
During the proof, it is convenient to redefine some of the notations in Section~\ref{Section_complex}, such as $N$, $M$, $N_1$, $M_1$, $\theta_M$, $\theta_{\delta}$, and $\theta_{\delta}'$, which allows us to reuse certain computations.

\subsection{Analysis of $\wh S_0$, $\wh S_2$ and $\wh S_3$} \label{Subsection S0123 H}

In this subsection, we provide the asymptotic behaviours of $\wh S_0$, $\wh S_2$ and $\wh S_3$. 

\begin{lem}[\textbf{Asymptotic behaviour of $\wh S_0$}] \label{H:S0}
    There exists $C>0$ such that
    \begin{equation} \label{H:S0 asymptotic}
        \wh S_0 = \OO(e^{-CN}), \qquad \text{as } N\to \infty.
    \end{equation}
\end{lem}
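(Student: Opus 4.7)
The plan is to adapt the proof of Lemma~\ref{S0} essentially verbatim, with the only substantive change being that the underlying binomial now has size $2N = 2n+2\al+2c+1$ rather than $N$. Via the identity \eqref{beta binomial cdf} applied to $X \sim B(2N,x)$, the summand of $\wh S_0$ can be rewritten as
\begin{equation*}
    \log I_x(2k+2\al+2,2n+2c-2k) = \log\bigl(1-\PP(X \le 2k+2\al+1)\bigr),
\end{equation*}
since the two beta-function parameters sum to $2N+1$.

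For $0 \le k \le k_- = \lfloor \del N\rfloor - \al - 1$, the threshold satisfies $2k+2\al+1 \le 2\lfloor \del N\rfloor - 1 < 2\del N$, which lies strictly below the mean $2Nx$ by our standing assumption $\del < x$. Invoking Lemma~\ref{Lemma:2.1} with sample size $2N$, together with the elementary bound $\erfc(y) \le e^{-y^2}$ for $y>0$, yields
\begin{equation*}
    \PP(X \le 2k+2\al+1) \le \tfrac12 \erfc\Bigl(\tfrac{x-\del}{\sqrt{2x}}\sqrt{2N}\Bigr) \le \tfrac12 e^{-\frac{(x-\del)^2}{x}N}
\end{equation*}
uniformly in $k$. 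Using $\log(1-y) = \OO(y)$ as $y \to 0$, each summand is then $\OO(e^{-CN})$ with $C = (x-\del)^2/x > 0$; summing over the at most $\OO(N)$ indices is absorbed into the exponential decay and gives \eqref{H:S0 asymptotic} with any $0 < C' < C$.

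I do not foresee any genuine obstacle here. Passing from the complex to the symplectic case only doubles the number of binomial trials, which if anything strengthens the exponential bound, and the structural elements of the argument (the representation in terms of a binomial cdf, the large deviation tail bound, and the linearization of the logarithm) all transfer without modification.
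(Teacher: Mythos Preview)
Your proof is correct and follows essentially the same approach as the paper's own proof: both rewrite the summand via the binomial cdf for $X\sim B(2N,x)$, bound the lower tail uniformly by $\tfrac12\erfc\bigl(\tfrac{x-\del}{\sqrt{2x}}\sqrt{2N}\bigr)\le \tfrac12 e^{-(x-\del)^2 N/x}$ using Lemma~\ref{Lemma:2.1}, and conclude. The only cosmetic difference is that the paper writes $\PP(X<2m_k)$ in place of your $\PP(X\le 2k+2\al+1)$, which is the same event for integer-valued $X$.
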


\begin{proof}
Recall that $m_k$ is given by \eqref{def of m_k}. 
Note that the sum $\wh S_0$ in \eqref{H:sum division S0123} corresponds to $\al+1 \le m_k \le \lfloor \del N \rfloor$. 
Let $X \sim B(2N,x)$, where $N$ is given by \eqref{def of N} and $x=1/(1+R^2)$ as in Lemma~\ref{Lem_finite expression}. Then since 
    \begin{align*}
        \log I_x(2k+2\al+2, 2n+2c-2k) = \log I_x(2m_k, 2N+1-2m_k) = \log (1-\PP(X<2m_k)),
    \end{align*}
   it follows from Lemma~\ref{Lemma:2.1} that 
    \[
    \PP(X<2m_k)\le \PP(X<2\del N) = \frac12 \erfc\Big(\frac{x-\del}{\sqrt{2x}}\sqrt{2N}\Big) \le \frac12 e^{-\frac{(x-\del)^2}{x}N} \le e^{-CN}.
    \]
    This completes the proof. 
\end{proof}

We write 
\begin{equation} \label{H:theta_Mdel}
    \theta_M := \lceil M \rceil - M, \qquad \theta_{\del} := \del N - \lfloor \del N-\tfrac12 \rfloor.
\end{equation}
Then we have the following. 

\begin{lem}[\textbf{Asymptotic behaviour of $\wh S_3$}] \label{H:S3}
    As $N \to \infty$, we have 
    \begin{equation}
    \wh S_3 = \wh D_3N + \wh D_5 \log N + \wh D_6 -\frac13 \frac{M^3}{N} + \OO\Big(\frac1M + \frac{M^2}N + \frac{M^4}{N^2}\Big),
    \end{equation}
    where 
    \begin{align*}
            \wh D_3 &= 2(M + \theta_M - c) \log x,\quad \wh D_5 = M^2+2\theta_M M + \theta_M^2 - c^2,  
            \\
            \wh D_6 &= -M^2 \log M + \Big(\frac32 + \log \Big(\frac{1-x}{x}\Big)\Big)M^2 - \Big(2\theta_M+\frac12\Big)M\log M \\
            & \quad + \Big[\frac12 - \frac12 \log(4\pi) + \Big(2+2\log \Big(\frac{1-x}x\Big)\Big)\theta_M\Big]M -\Big(\theta_M^2+\frac12 \theta_M - \frac1{24}\Big)\log M - \Big(\frac14 + \theta_M\Big) \log 2 \\
            &\quad + (\theta_M^2-c^2)\log \Big(\frac{1-x}x\Big) - \frac{1+2c+2\theta_M}{4}\log \pi - 2\zeta'(-1) + \log\Big(G(c+1)G(c+\tfrac32)\Big).
    \end{align*}
\end{lem}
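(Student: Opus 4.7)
The proof will closely follow that of Lemma~\ref{S3} for the complex case, with the key new ingredient being the Gauss duplication formula for the gamma function.

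First, I would reinterpret $\wh S_3$ as a sum of binomial tail probabilities. Writing $m_k=k+\al+1$ and using $2N+1=2n+2\al+2c+2$ together with the identity $I_x(a,b) = \PP(X\geq a)$ for $X\sim B(a+b-1,x)$, we get $\log I_x(2m_k,2N+1-2m_k) = \log \PP(Y\leq 2N-2m_k)$ for $Y\sim B(2N,1-x)$. As $k$ runs from $k_M+1$ to $n-1$, the value $\ell := 2N-2m_k$ takes the odd integer values $2c+1, 2c+3,\ldots,2\lceil M\rceil-1$, so setting $\ell = 2j+1$,
\begin{equation*}
\wh S_3 = \sum_{j=c}^{\lceil M\rceil-1}\log \PP(Y\leq 2j+1).
\end{equation*}
I would then expand each summand in the same spirit as Eqs.~\eqref{eq:3.2.1}--\eqref{eq:3.2.3} of the complex case proof (with $2N$ replacing $N$ and $1-x$ replacing $x$), which yields, for $\ell$ small compared to $2N$,
\begin{equation*}
\log \PP(Y\leq \ell) = 2N\log x + \ell\log(2N) + \ell\log\Big(\tfrac{1-x}{x}\Big) - \log \ell! - \tfrac{\ell^2}{4N} + \OO\Big(\tfrac{\ell}{N}+\tfrac{\ell^3}{N^2}\Big).
\end{equation*}
The elementary identities $\sum_{j=c}^{\lceil M\rceil-1}(2j+1) = \lceil M\rceil^2-c^2$ and $\sum_{j=c}^{\lceil M\rceil-1}(2j+1)^2 = \tfrac{1}{3}(4\lceil M\rceil^3-\lceil M\rceil-4c^3+c)$ dispatch the linear and quadratic sums directly, producing in particular the $-M^3/(3N)$ correction.

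The central new step is evaluating $\sum_{j=c}^{\lceil M\rceil - 1}\log(2j+1)!$. For this I would apply the Gauss duplication formula $\Gamma(2j+2) = 2^{2j+1}\pi^{-1/2}\Gamma(j+1)\Gamma(j+\tfrac{3}{2})$, which separates the sum as
\begin{equation*}
\sum_{j=c}^{\lceil M\rceil-1}\log(2j+1)! = (\lceil M\rceil^2-c^2)\log 2 - \tfrac{\lceil M\rceil-c}{2}\log \pi + \log \tfrac{G(\lceil M\rceil+1)}{G(c+1)} + \log \tfrac{G(\lceil M\rceil+3/2)}{G(c+3/2)},
\end{equation*}
the last two terms arising from telescoping $G(z+1)=\Gamma(z)G(z)$ at integer and half-integer arguments respectively. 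Note that the $(\lceil M\rceil^2-c^2)\log 2$ produced here cancels the $\log 2$ piece of $(\lceil M\rceil^2-c^2)\log(2N)$, leaving the clean $(\lceil M\rceil^2-c^2)\log N$ contribution that realises $\wh D_5\log N$; and the $G(c+1)G(c+3/2)$ factor appearing in $\wh D_6$ originates from this step.

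Finally, I would insert the asymptotic expansion \eqref{Barnes G asymp} into both $\log G(\lceil M\rceil+1)$ and $\log G(\lceil M\rceil+\tfrac{3}{2})$ (the latter applied with $z=\lceil M\rceil+\tfrac{1}{2}$), then Taylor expand using $\lceil M\rceil = M+\theta_M$ and $\log \lceil M\rceil = \log M+\theta_M/M - \theta_M^2/(2M^2)+\OO(M^{-3})$. The main obstacle is purely computational: tracking the many cross-terms in $M^2$, $M\log M$, $M$, $\log M$, $\theta_M$, $\log(2\pi)$, $\log 2$, $\log \pi$ and $\zeta'(-1)$, and verifying they collapse into the stated expressions for $\wh D_3$, $\wh D_5$ and $\wh D_6$ (for instance, combining $\tfrac{1}{2}\log\pi$, $-\log(2\pi)$ and $-\theta_M\log(2\pi)$ with the duplication contribution yields precisely $-(\tfrac{1}{4}+\theta_M)\log 2 - \tfrac{1+2c+2\theta_M}{4}\log\pi$). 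The $\OO(M^{-1})$ contribution to the error arises from the quadratic Taylor expansion of $\lceil M\rceil^2\log \lceil M\rceil$ beyond the $\OO(M^{-2})$ Barnes $G$-remainder, while $\OO(M^2/N)$ and $\OO(M^4/N^2)$ come from summing the remainders in the binomial tail expansion.
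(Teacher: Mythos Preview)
Your proposal is correct and follows essentially the same route as the paper: both rewrite $\wh S_3$ as $\sum_{j=c}^{\lceil M\rceil-1}\log\PP(Y\le 2j+1)$ for $Y\sim B(2N,1-x)$, expand each summand via the analogue of \eqref{eq:3.2.1}--\eqref{eq:3.2.3}, handle $\sum\log(2j+1)!$ through the duplication formula and the Barnes $G$-function to reach exactly your displayed identity, and then insert \eqref{Barnes G asymp} together with $\lceil M\rceil=M+\theta_M$. The only difference is cosmetic (the paper parametrises the summand directly by $2\ell+1$ rather than first writing the generic-$\ell$ expansion and substituting).
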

\begin{proof}
     By letting $X \sim B(2N,x)$ and $Y \sim B(2N,1-x)$, the summand of $S_3$ is expressed as
    \[
    \log I_x(2m_k, 2N+1-2m_k) = \PP(X \ge 2m_k) = \PP(Y \le 2N-2m_k).
    \]
    Note that by \eqref{def of m_k}, the indices of the summation of $S_3$ correspond to 
    $$
    2c+1 \le 2N-2m_k \le 2\lceil M \rceil -1. 
    $$ 
    By the argument used to derive \eqref{eq:3.2.1}, \eqref{eq:3.2.2} and \eqref{eq:3.2.3}, we obtain
    \begin{equation} \label{H:S3_summand}
        \log \PP(Y \le 2\ell+1) = 2N\log x + (2\ell+1)\log(2N) +(2\ell+1) \log \Big( \frac{1-x}{x} \Big) -\log [(2\ell+1)!] - \frac{\ell^2}{N}+\OO\Big(\frac{\ell}{N}+\frac{\ell^3}{N^2}\Big),
    \end{equation}
    which leads to 
    \begin{align} \label{H:S3_with_ceilM}
        \begin{split}
            &\quad  \wh S_3 = \sum_{\ell = c}^{\ceil M -1} \log \PP(Y \le 2\ell+1) \\
            &= [2(\ceil M - c) \log x] N+ (\ceil M^2 - c^2)\log\Big(2N \frac{1-x}x\Big)- \sum_{\ell=c}^{\ceil M - 1}\log [(2\ell+1)!] - \frac{M^3}{3N}+\OO\Big(\frac{M^2}{N}+\frac{M^4}{N^2}\Big).
        \end{split}
    \end{align}
   Here, by using the well-known duplication formula for the gamma function and the definition of the Barnes $G$-function \eqref{Barnes G def}, 
     \[
     \prod_{\ell=0}^{n-1}(2\ell+1)! = \prod_{\ell=1}^{n}\Gamma(2\ell) = \prod_{\ell=1}^n \frac{2^{2\ell-1}}{\sqrt \pi} \Gamma(\ell)\Gamma(\ell+\tfrac12) = \frac{2^{n^2}}{\pi^{n/2}}\frac{G(n+1)G(n+\frac32)}{G(\frac32)}, 
     \]
     which implies
     \begin{align*}
         \sum_{\ell = c}^{\ceil{M}-1} \log[(2\ell+1)!] &= (\ceil{M}^2-c^2)\log 2 - \frac{\ceil M - c}{2}\log \pi + \log \bigg(\frac{G(\ceil M + 1) G(\ceil M + \frac32)}{G(c+1)G(c+\frac32)}\bigg).
     \end{align*}
     Then by using the asymptotic formula \eqref{Barnes G asymp}, we obtain 
     \begin{align} \label{H:prod_odd_factorial}
     \begin{split}
         \sum_{\ell = c}^{\ceil{M}-1} &\log[(2\ell+1)!] = \ceil M^2 \log \ceil M + \Big(\log 2 - \frac32 \Big) \ceil M^2 + \frac12 \ceil M \log \ceil M + \frac{\log (4\pi) -1}{2}\ceil M 
         \\
         & \quad -\frac1{24}\log \ceil M + \Big(\frac14 - c^2\Big) \log2 + \frac{1+2c}{4}\log \pi + 2\zeta'(-1) -\log\Big( G(c+1)G (c+\tfrac32 ) \Big) + \OO\Big(\frac1M\Big).
     \end{split} 
     \end{align}
     Now putting \eqref{H:prod_odd_factorial} and $\ceil M = M + \theta_M$ to \eqref{H:S3_with_ceilM} yields the lemma.
\end{proof}

Next, we analyse the summation $\widehat{S}_2$. 
For this, we shall use a slightly modified version of Lemma \ref{lem:3.3}. 

\begin{lem} \label{lem:4.3}
    Let $A, B, a_0, b_0, a_n, b_n, \mf m_{A,n},\mf m_{B,n}$ and $\mf m_{j,n}$ be as in Lemma \ref{lem:3.3}. Then as $n \to \infty$, we have
    \begin{align}
    \begin{split}
          \sum_{j=a_n}^{b_n}f\Big(\frac{2j+1}{2n}\Big) &= n \int_A^B f(x) \,dx - a_0f(A) + (1+b_0)f(B) + \frac{(1-12a_0^2)f'(A)-(1-12(1+b_0)^2)f'(B)}{24n} 
        \\
        &\quad + \OO\bigg(\frac{\mf m_{A,n}(f'') + \mf m_{B,n}(f'')}{n^2}+\sum_{j=a_n}^{b_n}\frac{\mf m_{j,n}(f''')}{n^3}\bigg).
    \end{split}
    \end{align}
\end{lem}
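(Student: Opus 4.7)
The plan is to reduce Lemma~\ref{lem:4.3} to Lemma~\ref{lem:3.3} via a Taylor expansion about $j/n$. Writing $\frac{2j+1}{2n} = \frac{j}{n} + \frac{1}{2n}$ and expanding to third order yields
\begin{equation*}
f\Big(\frac{2j+1}{2n}\Big) = f\Big(\frac{j}{n}\Big) + \frac{1}{2n} f'\Big(\frac{j}{n}\Big) + \frac{1}{8n^2} f''\Big(\frac{j}{n}\Big) + \OO\Big(\frac{\mf m_{j,n}(f''')}{n^3}\Big),
\end{equation*}
so the half-integer sum reduces to the three integer-lattice sums $\mc T_k := \sum_{j=a_n}^{b_n} f^{(k)}(j/n)$ for $k=0,1,2$, plus a pointwise remainder whose cumulative contribution already has the form of the stated error term.

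I would then evaluate each $\mc T_k$ by invoking Lemma~\ref{lem:3.3} applied to $f^{(k)}$. The term $\mc T_0$ furnishes the leading integral $n\int_A^B f\,dx$, boundary contributions $\tfrac{1}{2}\big[(1-2a_0)f(A) + (1+2b_0)f(B)\big]$, and the $f'$-boundary correction $\tfrac{1}{12n}\big[(-1+6a_0-6a_0^2)f'(A) + (1+6b_0+6b_0^2)f'(B)\big]$. The term $\mc T_1/(2n)$ contributes $\tfrac{1}{2}(f(B)-f(A))$ from its leading integral $n\int_A^B f'\,dx$, together with the sub-leading boundary piece $\tfrac{1}{4n}\big[(1-2a_0)f'(A) + (1+2b_0)f'(B)\big]$. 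Finally, $\mc T_2/(8n^2)$ gives $\tfrac{1}{8n}(f'(B)-f'(A))$ at leading order. Assembling the boundary $f$-terms, one finds $\big[\tfrac{1-2a_0}{2} - \tfrac{1}{2}\big]f(A) + \big[\tfrac{1+2b_0}{2} + \tfrac{1}{2}\big]f(B) = -a_0 f(A) + (1+b_0)f(B)$, which matches the claim.

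The remaining task is an elementary polynomial identity at order $1/n$: the three contributions to the coefficient of $f'(A)/n$, namely $\tfrac{-1+6a_0-6a_0^2}{12}$, $\tfrac{1-2a_0}{4}$, and $-\tfrac{1}{8}$, combine over common denominator $24$ to give $\tfrac{1-12a_0^2}{24}$; symmetrically at $B$, one has $\tfrac{1+6b_0+6b_0^2}{12} + \tfrac{1+2b_0}{4} + \tfrac{1}{8} = \tfrac{12(1+b_0)^2-1}{24} = -\tfrac{1-12(1+b_0)^2}{24}$. The only mild subtlety, and the main obstacle, is regularity bookkeeping: applying Lemma~\ref{lem:3.3} to $f'$ and $f''$ formally demands extra smoothness, but since only leading and $\OO(1/n)$ information is retained from $\mc T_1$ and only leading information from $\mc T_2$, coarser trapezoidal-type Euler--Maclaurin expansions suffice at those orders, and all higher-derivative residual errors are absorbed into the stated $\OO\big(\mf m_{A,n}(f'')/n^2 + \mf m_{B,n}(f'')/n^2 + \sum_{j} \mf m_{j,n}(f''')/n^3\big)$ bound.
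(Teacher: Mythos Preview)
Your approach is correct and essentially equivalent to the paper's, which simply applies Lemma~\ref{lem:3.3} once to the shifted function $\widetilde f(t) := f(t+\tfrac{1}{2n})$ and then Taylor expands $\widetilde f(A)$, $\widetilde f(B)$, $\widetilde f'(A)$, $\widetilde f'(B)$ and $n\int_A^B \widetilde f$ about the unshifted values. You perform the Taylor expansion at the level of the summand and then invoke Lemma~\ref{lem:3.3} (or its lower-order truncations) three times; this is the same computation reorganised, and your careful remark about needing only coarser Euler--Maclaurin estimates for $\mc T_1$ and $\mc T_2$ neatly sidesteps the nominal $n$-dependence of $\widetilde f$ that the paper's one-line proof glosses over.
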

\begin{proof}
    This immediately follows from Lemma \ref{lem:3.3} with $\widetilde f(t) := f(t+\frac1{2n})$ and simplifications.
\end{proof}

\begin{lem}[\textbf{Asymptotic behaviour of $\wh S_2$}] \label{H:S2}
    As $N \to \infty$, we have
    \begin{equation}
        \wh S_2 = \wh C_1 N^2 + \wh C_2 N\log N +\wh C_3 N + \wh C_5\log N + \wh C_6+ \OO\Big(\frac1M+\frac{M}{N}\Big),
    \end{equation}
   where
    \begin{align*}
        \wh C_1 &= -2\int_{M/N}^\del f_1(t) \,dt, \quad \wh C_2 = -\frac{\del}{2}, \quad \wh C_3 = 2\theta_M f_1\Big(\frac MN \Big) - 2(1-\theta_\del)f_1(\del) - \frac\del2\log 2 -\int_{M/N}^\del f_2(t)\,dt\\
        \wh C_5 &= \frac12(M+\theta_{\del}+\theta_M -1) -\frac1{24}, \\
        \wh C_6 &= \Big(\theta_M^2-\frac1{12}\Big)f_1'\Big(\frac MN\Big) + \Big(\frac1{12}-(1-\theta_\del)^2\Big)f_1'(\del) + \theta_M f_2\Big(\frac MN\Big) - (1-\theta_\del)f_2(\del) \\
        &\quad +\frac1{24}\log M + \frac1{24}\Big(\del - \log \Big(\frac{\del}{1-\del}\Big)\Big) - \frac{\del x}{2(1-x-\del)}-\frac x2 \log \Big(\frac{1-x-\del}{1-x}\Big)+\frac{\log 2}{2}(M+\theta_\del+\theta_M-1).
    \end{align*}
    Here, $\theta_M$ and $\theta_{\del}$ are given by \eqref{H:theta_Mdel} and the functions $f_1$ and $f_2$ are given by \eqref{def:f1f2}.
\end{lem}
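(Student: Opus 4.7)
The proof closely follows the strategy of Lemma~\ref{S2} with two structural modifications: the underlying binomial has parameter $2N$ (instead of $N$), and the summation acquires a half-integer shift, forcing us to replace Lemma~\ref{lem:3.3} with the half-integer variant Lemma~\ref{lem:4.3}.

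With $X \sim B(2N,x)$ and $m_k = k+\al+1$, the identity $I_x(2m_k,2N+1-2m_k) = \PP(X \ge 2m_k)$ combined with the substitution $\ell = n+c-k-1$ (consistent with the convention used for $\wh S_3$ in Lemma~\ref{H:S3}) yields
\begin{equation*}
\wh S_2 = \sum_{\ell=\lceil M\rceil}^{\lfloor \del N - 1/2\rfloor} \log \PP\bigl(X \ge (1-s_\ell)\cdot 2N\bigr), \qquad s_\ell := \tfrac{2\ell+1}{2N} \in [M/N,\del].
\end{equation*}
The uniform tail asymptotics of Lemma~\ref{lem:2.7}, applied with the $2N$-binomial, then give
\begin{equation*}
\log\PP\bigl(X \ge (1-s_\ell)\cdot 2N\bigr) = -2N f_1(s_\ell) - \tfrac12 \log(2N) - f_2(s_\ell) + \tfrac{1}{2N} f_3(s_\ell) + \OO\Bigl(\tfrac{1}{\ell^3}+\tfrac{1}{N^2}\Bigr),
\end{equation*}
with $f_1,f_2,f_3$ as in \eqref{def:f1f2}, \eqref{def:f3}.

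Summing term by term via Lemma~\ref{lem:4.3}, applied with parameters $A = M/N$, $a_0 = \theta_M$, $B = \del$, $b_0 = -\theta_\del$ (chosen so that $a_n = \lceil M \rceil$ and $b_n = \lfloor \del N - 1/2 \rfloor$), produces all six coefficients $\wh C_1,\dots,\wh C_6$. Specifically, the $N^2$ and leading $N$ contributions come from $-2N \sum f_1(s_\ell)$; the $-\del/2 \log 2$ in $\wh C_3$, the $\tfrac{1}{2}(M+\theta_\del+\theta_M-1)$ part of $\wh C_5$, and the $\tfrac{\log 2}{2}(M+\theta_\del+\theta_M-1)$ part of $\wh C_6$ all arise by splitting $\log(2N) = \log N + \log 2$ in the contribution of the constant piece $-\tfrac12 \log(2N)$ summed over the $\del N - M + 1 - \theta_\del - \theta_M$ indices. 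The $-\tfrac{1}{24}$ in $\wh C_5$ (versus $-\tfrac{1}{12}$ in the complex analogue $C_5$) stems from the $\tfrac{1}{2N}$ prefactor of $f_3$: using the antiderivative $\int f_3 = \tfrac{t}{12} - \tfrac{x(1-x)}{1-x-t} - x\log(1-x-t) - \tfrac{1}{12}\log(t/(1-t))$ and integrating between $M/N$ and $\del$ produces, after halving, precisely the displayed terms of $\wh C_6$ together with a $-\tfrac{1}{24}\log N$ contribution that merges with the piece above. The remaining boundary corrections $(\theta_M^2-\tfrac{1}{12}) f_1'(M/N) + (\tfrac{1}{12}-(1-\theta_\del)^2) f_1'(\del)$ in $\wh C_6$ arise by multiplying the $\OO(1/N)$ correction in Lemma~\ref{lem:4.3} by the $-2N$ prefactor of $f_1$.

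The main obstacle is purely algebraic bookkeeping: one must track the $\log 2$ factors produced by the splitting $\log(2N) = \log N + \log 2$ and respect the half-integer convention $\theta_\del = \del N - \lfloor \del N - 1/2 \rfloor$ of \eqref{H:theta_Mdel}. No genuinely new analytic input beyond Lemmas~\ref{lem:2.7} and~\ref{lem:4.3} is needed; the calculations otherwise mirror those of Lemma~\ref{S2} line by line, with error control $\OO(1/M + M/N)$ inherited in the obvious way from (i) the $\OO(1/\ell^3)$ remainder summing to $\OO(1/M^2)$, (ii) the $\OO(1/N^2)$ remainder of Lemma~\ref{lem:4.3}, and (iii) the boundary contribution $M/N$ from replacing $\int_0^\del$ by $\int_{M/N}^\del$.
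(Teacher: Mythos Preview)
Your proposal is correct and follows essentially the same approach as the paper: rewrite $\wh S_2$ as a sum over $\ell$ from $\lceil M\rceil$ to $\lfloor \del N - 1/2\rfloor$ of $\log\PP(X\ge 2N-2\ell-1)$ with $X\sim B(2N,x)$, apply Lemma~\ref{lem:2.7} with parameter $2N$, and then evaluate the three resulting sums $-2N\sum f_1$, $-\sum f_2$, $\frac{1}{2N}\sum f_3$ via Lemma~\ref{lem:4.3} with $A=M/N$, $a_0=\theta_M$, $B=\del$, $b_0=-\theta_\del$. Your account of where each coefficient originates (the $\log 2$ splitting, the halving that turns $-1/12$ into $-1/24$, the boundary corrections $(\theta_M^2-\tfrac1{12})f_1'$ etc.\ from the $\OO(1/N)$ line of Lemma~\ref{lem:4.3} multiplied by $-2N$) is accurate; only the phrasing of error source (iii) is slightly off, since $\int_0^\del f_3$ diverges --- the $M/N$ error really comes from Taylor-expanding the antiderivative of $f_3$ at the lower endpoint $t=M/N$, not from comparing to an integral from $0$.
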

\begin{proof}
Let $X \sim B(2N,x)$. Then as in \eqref{S2_summation}, we have
\begin{equation} \label{H:S2 summation}
    \widehat{S}_2 = \sum_{k=k_+}^{k_M} \log I_x(2m_k, 2N+1-2m_k) = \sum_{k=k_+}^{k_M} \log \PP(X \ge 2m_k) = \sum_{j=\ceil M}^{\floor{\del N-\frac12}}\log \PP(X \ge 2N-2j-1).
\end{equation}
   Then by using Lemma \ref{lem:2.7}, we have
    \[
        \log \PP(X \ge 2N(1-t)) = -2N f_1(t) - \frac12\log (2N) -f_2(t) + \frac1{2N}f_3(t) + \OO\Big(\frac1{t^3N^3}+\frac{1}{N^2}\Big),
    \]   
   where $f_3$ is given by \eqref{def:f3}. 
   By employing Lemma~\ref{lem:4.3} with $A(N) = M/N$, $a_0(N) = \theta_M$, $B(N) = \del$ and $b_0(N) = -\theta_\del$, we obtain
    \begin{align*}
        -2N\sum_{j=\ceil M}^{\floor{\del N - \frac12}} f_1\Big(\frac{2j+1}{2N}\Big)&= -2N^2\int_{M/N}^\del f_1(t)\,dt + \Big[2\theta_Mf_1\Big(\frac MN\Big) - 2(1-\theta_\del)f_1(\del)\Big]N \\
        &\quad + \Big(\theta_M^2-\frac1{12}\Big)f_1'\Big(\frac MN\Big) + \Big(\frac1{12}-(1-\theta_\del)^2\Big)f_1'(\del) + \OO\Big(\frac1{M}\Big), \\
        -\sum_{j=\ceil M}^{\floor{\del N - \frac12}} f_2\Big(\frac{2j+1}{2N}\Big)&= -N \int_{M/N}^\del f_2(t)\, dt + \theta_M f_2\Big(\frac MN\Big) -(1-\theta_\del)f_2(\del) + \OO\Big(\frac 1M + \frac1N\Big), \\
        \frac{1}{2N}\sum_{j=\ceil M}^{\floor{\del N - \frac12}} f_3\Big(\frac{2j+1}{2N}\Big) &= \frac1{24}\bigg(\del - \log \Big(\frac{\del}{1-\del}\Big)\bigg)- \frac{\del x}{2(1-x-\del)} -\frac x2\log \Big(\frac{1-x-\del}{1-x}\Big) \\
        &\quad + \frac1{24}\log \Big(\frac MN\Big) + \OO\Big(\frac 1M + \frac MN\Big).
    \end{align*}
    Combining all of the above, we obtain the desired asympototic formula.
\end{proof}

As a counterpart of Lemma~\ref{S2+S3}, we have the following.

\begin{lem}[\textbf{Asymptotic behaviour of $\wh S_0+ \wh S_2 + \wh S_3$}] \label{H:S2+S3}
    As $N \to \infty$, we have
    \begin{equation}
    \wh S_0 + \wh S_2 + \wh S_3 = \wh E_1 N^2 + \wh E_2  N\log N + \wh E_3 N + \wh E_5 \log N + \wh E_6 + \OO \Big(\frac{1}{M}+\frac{M^2}{N}+\frac{M^4}{N^2}\Big),
    \end{equation}
where 
    \begin{align*}
        \wh E_1 &= \log x - \del^2 \log \Big(\frac\del {1-x}\Big) +  (1-\del)^2 \log \Big(\frac{1-\del}{x}\Big) + \del, \qquad \wh E_2 = -\frac\del2, \\
        \wh E_3 &= -2c \log x + (1-x-\del) \log \Big(\frac{1-x-\del}{1-x}\Big)+\del - \frac12 \del \log(4\pi\del) - \frac12(1-\del)\log(1-\del) - 2(1 - \theta_\del) f_1(\del), \\
        \wh E_5 &= \frac{1}{2}\theta_\del - c^2 - \frac{11}{24}, \\
        \wh E_6 &= \Big(c^2-\frac1{12}\Big)\log\Big(\frac{x}{1-x}\Big)-\frac14 \log(2\pi)-\frac c2 \log \pi -2\zeta'(-1)+\log \Big(G(c+1)G(c+\tfrac32)\Big) \\
        &\quad +\Big(\frac1{12}-(1-\theta_\del)^2\Big)f_1'(\del) - (1-\theta_\del)f_2(\del) + \frac1{24}\Big(\del-\log \Big(\frac{\del}{1-\del}\Big)\Big) \\
        &\quad - \frac{\del x}{2(1-x-\del)} -\frac x2\log\Big(\frac{1-x-\del}{1-x}\Big) - \frac{1-\theta_\del}2 \log 2.
    \end{align*}
    Here, $f_1$ and $f_2$ are given by \eqref{def:f1f2}.
\end{lem}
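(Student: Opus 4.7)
The plan is to mirror the proof of Lemma~\ref{S2+S3} with the modifications required by the symplectic setting. First, Lemma~\ref{H:S0} gives $\wh S_0 = \OO(e^{-CN})$, so this contribution is absorbed into the error term and it suffices to expand $\wh S_2 + \wh S_3$ from Lemmas~\ref{H:S3} and \ref{H:S2}. The same antiderivatives \eqref{integral of f1} and \eqref{integral of f2} of $f_1$ and $f_2$ that were used in the complex case remain applicable here, since the $f_j$'s appearing in Lemma~\ref{H:S2} are the same as in Lemma~\ref{S2}.

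The central computation is to Taylor-expand the integrals $\int_{M/N}^{\del} f_1(t)\,dt$ and $\int_{M/N}^{\del} f_2(t)\,dt$ at the lower endpoint $M/N$, using $M = N^{1/3}$ to identify the scale of each boundary contribution. From $\wh C_1 N^2$ this produces the $N^2$-scale integral $-2\int_0^{\del} f_1$ (combined with $\frac12 \log x$ coming from the endpoint), together with boundary corrections of orders $M\sqrt N \log x$, $M^2 \log N$, $M^2 \log M$, and $M^2$, plus a residual $\OO(M^3/N + M^4/N^2)$. Similarly, $\wh C_3 N$ yields the $N$-scale coefficient plus corrections of orders $M \log N$, $M \log M$, $M$, and $\wh C_6$ contributes the $\log N$ term together with $\OO(1)$ boundary pieces involving $\theta_M^2 \log M$, $\theta_M \log M$, and explicit constants.

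The remaining step—which is the principal obstacle, being lengthy but structurally forced—is the bookkeeping verification that every $M$-dependent contribution (all powers of $M$ times $\log M$, $\log N$, $\theta_M$, including the explicit $-\tfrac13 M^3/N$ from Lemma~\ref{H:S3}) cancels when $\wh S_2$ and $\wh S_3$ are summed. This cancellation is forced because the cutoff $M$ is an artefact of the splitting \eqref{H:sum division S0123} and cannot appear in the final answer; the factor-of-two differences relative to the complex case (arising from the binomial $B(2N,x)$ in \eqref{H:S2 summation} and the duplication-formula evaluation \eqref{H:prod_odd_factorial}) are precisely what produce the symplectic coefficients $\wh E_1, \dots, \wh E_6$ in place of $E_1, \dots, E_6$. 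What survives after simplification is exactly the asymptotic expansion stated, with the error $\OO(1/M + M^2/N + M^4/N^2)$ inherited from the errors in Lemmas~\ref{H:S3} and \ref{H:S2}.
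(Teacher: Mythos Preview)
Your proposal is correct and follows essentially the same approach as the paper: drop $\wh S_0$ via Lemma~\ref{H:S0}, expand the $\wh C_j$ of Lemma~\ref{H:S2} using the antiderivatives \eqref{integral of f1}--\eqref{integral of f2} near the lower endpoint $M/N$, and verify that all $M$- and $\theta_M$-dependent pieces cancel against those in Lemma~\ref{H:S3}, leaving only the $\delta$-dependent coefficients and the stated error. One small slip: the boundary contribution from $\wh C_1 N^2$ of the form $-(2M\log x)N$ is of order $MN$, not $M\sqrt{N}$ as you wrote, but this does not affect the argument.
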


\begin{proof}
    Note that by Lemma~\ref{H:S0}, one can ignore the exponentially small $\wh S_0$. Using \eqref{integral of f1} and \eqref{integral of f2} again, the terms in Lemma \ref{H:S2} can be rewritten as
    \begin{align*}
        \wh C_1 N^2 &= \Big[-\del^2\log \Big(\frac{\del}{1-x}\Big)+(1-\del)^2\log\Big(\frac{1-\del}{x}\Big)+\del+\log x\Big]N^2 - (2M\log x)N - M^2 \log N \\
        &\quad + \Big[M^2 \log M + M^2 \log \Big(\frac{x}{1-x}\Big) - \frac32 M^2\Big]+\frac{M^3}{3N} + \OO\Big(\frac{M^4}{N^2}\Big), \\
        \wh C_3 N &= \Big[(1-x-\del)\log(1-x-\del) - \del \log \Big(\frac{\sqrt{4\pi}}{1-x}\Big)+\del-\frac12 \del \log \del - \frac12 (1-\del)\log(1-\del) - 2(1-\theta_\del)f_1(\del)\Big]N \\
        &\quad - (2\theta_M\log x +(1-x)\log(1-x)) N - \Big(2\theta_M+\frac12\Big) M \log N + \Big(2\theta_M+\frac12\Big)M \log M \\
        &\quad + \Big[\frac12 \log(2\pi) - 2\theta_M + 2\theta_M \log \Big(\frac{x}{1-x}\Big)- \frac12\Big]M + \OO\Big(\frac{M^2}{N}\Big), \\
        \wh C_6 &= \Big(\frac1{12}-\frac12 \theta_M -\theta_M^2\Big) \log N + \frac{\log 2}{2}M + \Big(-\frac1{24}+\frac12 \theta_M + \theta_M^2\Big)\log M + \frac12 \theta_M \log(4\pi) \\
        &\quad + \Big(\theta_M^2 - \frac1{12}\Big) \log \Big(\frac{x}{1-x}\Big) - \Big(\frac1{12}-(1-\theta_\del)^2\Big)f_1'(\del) - (1-\theta_\del)f_2(\del) \\
        &\quad+ \frac1{24}\Big(\del- \log \Big(\frac \del{1-\del}\Big)\Big) - \frac{\del x}{2(1-x-\del)} - \frac x2 \log \Big(\frac{1-x-\del}{1-x}\Big)-\frac{1-\theta_\del}{2}\log 2 + \OO\Big(\frac MN\Big).
    \end{align*}
    Then these terms cancel out with the terms of Lemma \ref{H:S3} up to the error $\OO(M^{-1}+M^2N^{-1}+M^4N^{-2})$. 
   After simplifications, only the coefficients depending on $\delta$ remain, and the lemma follows.
\end{proof}

\subsection{Analysis of $\wh S_1$} \label{Subsection S1 H}

In this subsection, we derive the asymptotic behaviour of $\wh S_1$, which corresponds to the symmetric regime of the symplectic ensemble. We shall follow the strategy used in Subsection~\ref{Subsection_S1}.
Recall that $N$ is given by \eqref{def of N}. We denote
\begin{equation}
N_1 = N+\frac12, \qquad M_1 = N_1^{\frac1{12}}, \qquad  x_k = \frac{m_k}{N_1}.
\end{equation}
Notice here that $N_1$ is an integer. It is convenient to redefine
\begin{equation} \label{H:g_pm}
    g_- = \Big\lceil N_1\Big(x+\frac{M_1}{\sqrt{N_1}}\Big)-\al-1\Big\rceil, \qquad g_+ = \Big\lfloor N_1\Big(x+\frac{M_1}{\sqrt{N_1}}\Big)-\al-1\Big\rfloor.
\end{equation}
As an analogue of \eqref{sum division S1 123}, we divide $\wh S_1$ by
\begin{equation} \label{H:sum division S1 123}
    \wh S_1 = \wh S_1^{(1)} + \wh S_1^{(2)} + \wh S_1^{(3)},
\end{equation}
where
\begin{align}
\begin{split}
    \wh S_1^{(1)} &= \sum_{k=k_-+1}^{g_--1} \log I_x(2m_k,2N_1-2m_k),\\
    \wh S_1^{(2)} &= \sum_{k=g_-}^{g_+} \log I_x(2m_k,2N_1-2m_k),\\
    \wh S_1^{(3)} &= \sum_{k=g_++1}^{k_+-1} \log I_x(2m_k,2N_1-2m_k).
\end{split}
\end{align}
With our new notations, we can efficiently reuse most of the calculations from the complex case. 
In particular, by setting $\eta_k$ as defined in \eqref{def of eta}, Lemma \ref{beta} yields
\begin{equation}\label{H:summand of S1}
    I_x(2m_k,2N_1-2m_k) = \frac12 \erfc(-\eta_k\sqrt{N_1}) + \frac{1}{\sqrt{4\pi N_1}} e^{-\eta_k^2 N_1}\Big(c_0(\eta_k) + \frac{1}{2N_1}c_1(\eta_k) + \OO\Big(\frac1{N_1^2}\Big)\Big).
\end{equation}
This also follows by putting $2N_1$ to $N_1$ in \eqref{summand of S1}.

\begin{lem} \label{H:S1^(1)}
    There exists a constant $C>0$ such that
    \begin{equation} \label{H:S1^(1) asymptotic}
        \wh S_1^{(1)} = \OO(e^{-CM_1^2}), \qquad \text{ as } N \to \infty.
    \end{equation}
\end{lem}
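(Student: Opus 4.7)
The plan is to mirror the proof of Lemma \ref{S1^(1)} almost verbatim. The only structural difference between the complex and symplectic settings at this step is that the uniform incomplete beta asymptotics \eqref{H:summand of S1} feature $e^{-\eta_k^2 N_1}$ instead of $e^{-\frac12 \eta_k^2 N_1}$ and $\erfc(-\eta_k\sqrt{N_1})$ instead of $\erfc(-\eta_k\sqrt{N_1/2})$, which, if anything, strengthens the exponential decay relative to the complex case.

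The first step is to Taylor-expand $\eta_k^2$ at $x_k = x$ using \eqref{def of eta}, obtaining
$$
-\eta_k^2 \;=\; -\frac{(x_k - x)^2}{x(1-x)} \;+\; \OO\bigl((x_k - x)^3\bigr), \qquad x_k \to x.
$$
The range $k \in [k_-+1, g_- - 1]$ of $\wh S_1^{(1)}$ corresponds to $x_k$ in the left symmetric tail, i.e.\ $x - x_k \ge M_1/\sqrt{N_1}(1 + o(1))$, so $\eta_k > 0$ and
$$
\eta_k^2 N_1 \;\ge\; \frac{(x - x_k)^2 N_1}{x(1-x)}\,(1 + o(1)) \;\ge\; C_0 M_1^2
$$
uniformly in $k$, with the right-hand side growing quadratically in $\sqrt{N_1}(x - x_k)$.

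Next, I would plug the large-argument asymptotic expansion \eqref{erfc asymptotic} of $\erfc$ at $+\infty$,
$$
\tfrac12 \erfc\bigl(-\eta_k \sqrt{N_1}\bigr) \;=\; 1 \;-\; \frac{1}{\sqrt{\pi}\,\eta_k \sqrt{N_1}}\, e^{-\eta_k^2 N_1}\bigl(1 + \OO((\eta_k^2 N_1)^{-1})\bigr),
$$
into \eqref{H:summand of S1}. This yields the pointwise estimate
$$
I_x(2m_k, 2N_1 - 2m_k) \;=\; 1 + \OO\Bigl(\tfrac{1}{\sqrt{N_1}}\, e^{-\eta_k^2 N_1}\Bigr), \qquad \log I_x(2m_k,2N_1-2m_k) \;=\; \OO\bigl(e^{-\eta_k^2 N_1}\bigr).
$$
Summing over $k$ and using the geometric decay of $e^{-\eta_k^2 N_1}$ along the $x_k$-lattice (where consecutive $x_k$ differ by $1/N_1$), the sum is dominated by the boundary term near $x_k = x - M_1/\sqrt{N_1}$, yielding $\wh S_1^{(1)} = \OO(e^{-CM_1^2})$ for some $C > 0$.

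I do not expect any substantial obstacle: the mechanism is identical to Lemma \ref{S1^(1)}, and the only mild subtlety is verifying that the $\OO(N)$-fold sum of pointwise exponentially small contributions still closes at $\OO(e^{-CM_1^2})$. This follows from the elementary bound
$$
\sum_{j \ge 0} \exp\!\Bigl(-\tfrac{C(M_1 + j)^2}{x(1-x)}\Bigr) \;\le\; \exp\!\Bigl(-\tfrac{CM_1^2}{x(1-x)}\Bigr) \cdot \sum_{j \ge 0} \exp\!\Bigl(-\tfrac{2CM_1 j}{x(1-x)}\Bigr) \;=\; \OO(e^{-CM_1^2}),
$$
where we used that $M_1 \to \infty$.
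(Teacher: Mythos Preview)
Your proposal is correct and follows essentially the same approach as the paper: Taylor-expand $\eta_k^2$ near $x_k=x$, use the large-argument asymptotics \eqref{erfc asymptotic} for $\erfc$ to reduce \eqref{H:summand of S1} to $1+\OO(e^{-\eta_k^2 N_1})$, and then sum. The paper's own proof is in fact terser than yours---it simply writes $I_x(2m_k,2N_1-2m_k)=1+\OO(e^{-cM_1^2})$ pointwise and leaves the summation implicit (the $\OO(N_1)$ polynomial prefactor from summing is absorbed into the exponential, since $M_1^2=N_1^{1/6}$).

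One small inaccuracy in your final display: with $j$ indexing the lattice shift $g_--1-k$, consecutive values of $\sqrt{N_1}(x-x_k)$ differ by $1/\sqrt{N_1}$, not by $1$, so the exponent should read $(M_1+j/\sqrt{N_1})^2$ rather than $(M_1+j)^2$. The geometric series then sums to $\OO(\sqrt{N_1}/M_1)\cdot e^{-CM_1^2}$, which is still $\OO(e^{-C'M_1^2})$. Alternatively, and more simply, the monotonicity of $\eta_k^2$ in $|x_k-x|$ gives $\eta_k^2 N_1\ge cM_1^2$ uniformly on the whole range $[k_-+1,g_--1]$, so the crude bound $N_1\cdot\OO(e^{-cM_1^2})=\OO(e^{-c'M_1^2})$ already suffices without any geometric-decay refinement.
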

\begin{proof}
    An argument similar to Lemma \ref{S1^(1)} shows
    \[
    I_x(2m_k, 2N_1 - 2m_k) = 1 - \frac{1}{\sqrt{4\pi N_1} \eta_k}e^{-\eta_k^2 N_1}\OO(1) + \frac{1}{\sqrt{4\pi N_1}}e^{-\eta_k^2 N_1}\OO(1) = 1+\OO(e^{-cM_1^2}),
    \]
    which verifies the lemma.
\end{proof}

We write
\begin{equation} \label{H:theta_pm}
    \theta_- = g_- - \Big(N_1\Big(x-\frac{M_1}{\sqrt{N_1}}\Big)-\al-1\Big), \qquad \theta_+ = \Big(N_1\Big(x+\frac{M_1}{\sqrt{N_1}}\Big) - \al - 1\Big) - g_+.
\end{equation}
Define $M_k = \sqrt{N_1}(x_k-x)$. Note that Lemma \ref{Riemann sum of M_k} is still valid for the new $N_1$ and $M_1$ since $M_k, \theta_-$ and $\theta_+$ are defined in the same way as in the complex case, and its proof did not use any relations between $N_1, M_1$, and $n$.

\begin{lem} \label{H:S1^(2)}
    As $N \to \infty$,
    \begin{equation}
    \wh S_1^{(2)} = \wh B_4^{(2)}\sqrt{N_1} + \wh B_6^{(2)} + \OO\Big(\frac{M_1^5}{\sqrt{N_1}}+\frac{M_1^{10}}{N_1}\Big),
    \end{equation}
    where 
    \begin{align*}
        \wh B_4^{(2)} = \int_{-M_1}^{M_1} \log\Big[\frac12 \erfc\Big(\frac t\sig\Big)\Big]\,dt, \qquad \wh B_6^{(2)} = \int_{-M_1}^{M_1} \widehat g(t) \,dt + \Big(\frac12 - \theta_+\Big)\log\Big[\frac12 \erfc\Big(\frac{M_1}\sig\Big)\Big].
    \end{align*}
    Here, $\widehat g$ is a rescaled version of $g$ defined as \eqref{def:g}:
    \[
    \widehat g(t) = \frac{1}{\sqrt 2}g(\sqrt 2 t) = \frac{1-2x}{3\sig^3\sqrt \pi \erfc(t/\sig)}(t^2+\sig^2)e^{-\frac{t^2}{\sig^2}}.
    \]
\end{lem}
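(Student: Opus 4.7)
The plan is to closely mirror the proof of Lemma \ref{S1^(2)} for the complex case, exploiting the observation that \eqref{H:summand of S1} differs from \eqref{summand of S1} essentially by the substitution $N_1 \mapsto 2N_1$ in the quadratic exponent $-\tfrac12 \eta_k^2 N_1$ and in the erfc argument $-\eta_k\sqrt{N_1/2}$. In particular, the Taylor expansion of $\eta_k$ around $x_k = x$ from \eqref{eq:3.9.1} yields
\[
-\eta_k\sqrt{N_1} = \frac{M_k}{\sig}\Big(1-\frac{1-2x}{6\sig^2}\frac{M_k}{\sqrt{N_1}} + O\Big(\frac{M_k^2}{N_1}\Big)\Big),
\]
so the leading contribution to $I_x(2m_k,2N_1-2m_k)$ is $\tfrac12 \erfc(M_k/\sig)$ instead of $\tfrac12 \erfc(M_k/(\sqrt 2 \sig))$. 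This is precisely the origin of the rescaling relation $\widehat g(t) = g(\sqrt 2 t)/\sqrt 2$ stated in the lemma.

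First I will derive the expansion
\[
e^{-\eta_k^2 N_1} = e^{-M_k^2/\sig^2}\Big(1+\frac{1-2x}{3\sig^4}\frac{M_k^3}{\sqrt{N_1}}+O\Big(\frac{M_k^4+M_k^6}{N_1}\Big)\Big),
\]
reuse the (already-$N_1$-independent) expansion of $c_0(\eta_k)$ from the proof of Lemma \ref{S1^(2)}, and set $\widehat \Delta_k := -\eta_k\sqrt{N_1} - M_k/\sig$. Taylor-expanding $\erfc$ about $M_k/\sig$, substituting into \eqref{H:summand of S1}, and collecting terms in powers of $1/\sqrt{N_1}$, I will obtain
\[
I_x(2m_k, 2N_1-2m_k) = \widehat u_1(M_k) + \frac{\widehat u_2(M_k)}{\sqrt{N_1}} + \frac{\widehat u_3(M_k)}{N_1} + O\Big(\frac{1+M_k^8}{N_1\sqrt{N_1}}e^{-M_k^2/\sig^2}\Big),
\]
with $\widehat u_1(t) = \tfrac12\erfc(t/\sig)$ and $\widehat u_2,\widehat u_3$ explicit Gaussian--polynomial functions. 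Taking logarithms produces the analogous three-term expansion in $\log\widehat u_1$, $\widehat g = \widehat u_2/\widehat u_1$, and $\widehat g_1 = -\widehat u_2^2/(2\widehat u_1^2)+\widehat u_3/\widehat u_1$.

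Finally I will apply Lemma \ref{Riemann sum of M_k} to each of $h = \log\widehat u_1$, $h = \widehat g$, and $h = \widehat g_1$, summing over $g_-\le k\le g_+$. The leading Riemann integrals produce the claimed $\widehat B_4^{(2)}\sqrt{N_1}$ and $\widehat B_6^{(2)}$ terms, with the endpoint corrections from the lemma supplying the boundary contribution $(\tfrac12-\theta_+)\log[\tfrac12\erfc(M_1/\sig)]$. The remainder is controlled by the same device as in the complex case, namely applying Lemma \ref{Riemann sum of M_k} once more with $h(t) = t^9$ to the Gaussian-tailed error bound, which yields the $O(M_1^{10}/N_1)$ contribution, while the truncation of the integrals at $\pm M_1$ contributes the $O(M_1^5/\sqrt{N_1})$ component.

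The main obstacle is the bookkeeping: one must simultaneously track the cubic correction in $e^{-\eta_k^2 N_1}$, the subleading pieces of $c_0(\eta_k)$ and $\widehat\Delta_k$, and the higher Taylor coefficients of $\erfc$ so that $\widehat u_1,\widehat u_2,\widehat u_3$ are correctly identified and no spurious lower-order terms survive. There is no conceptual novelty beyond Lemma \ref{S1^(2)}, and a careful execution of the bookkeeping will verify in particular the expected rescaling identity $\widehat g(t) = g(\sqrt 2\, t)/\sqrt 2$.
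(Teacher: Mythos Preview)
Your proposal is correct and follows essentially the same approach as the paper: expand $-\eta_k\sqrt{N_1}$ via \eqref{eq:3.9.1}, define $\widehat\Delta_k$, Taylor-expand $\erfc$ about $M_k/\sig$, assemble $\widehat u_1,\widehat u_2,\widehat u_3$, take logarithms, and apply Lemma~\ref{Riemann sum of M_k} termwise. One small clarification on the bookkeeping: the $\OO(M_1^5/\sqrt{N_1})$ error does not come from ``truncation of the integrals at $\pm M_1$'' but rather from the $\widehat g_1$ contribution itself---since $\widehat g_1(t)=\OO(t^4)$ as $t\to\infty$, the Riemann sum $\tfrac{1}{N_1}\sum_k\widehat g_1(M_k)$ becomes $\tfrac{1}{\sqrt{N_1}}\int_{-M_1}^{M_1}\widehat g_1(t)\,dt+\ldots=\OO(M_1^5/\sqrt{N_1})$.
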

\begin{proof}
    By using \eqref{eq:3.9.1}, we have
    \[
    \Del_k := -\eta_k \sqrt{N_1} - \frac{M_k}{\sig} = -\frac{1-2x}{6\sig^3}\frac{M_k^2}{\sqrt{N_1}}+\frac{5-14\sig^2}{72\sig^5}\frac{M_k^3}{N_1}+\OO\Big(\frac{M_k^4}{N_1\sqrt{N_1}}\Big).
    \]
    This in turn implies that 
    \begin{align*}
       &\quad  \erfc(-\eta_k\sqrt{N_1}) = \erfc\Big(\frac{M_k}{\sig}\Big) - \frac{2}{\sqrt{\pi}}e^{-\frac{M_k^2}{\sig^2}}\Big[\Del_k - \frac{M_k}{\sig}\Del_k^2 + \OO((1+M_k^2)\Del_k^3)\Big] \\
        &= \erfc\Big(\frac{M_k}{\sig}\Big)-\frac{2}{\sqrt \pi}e^{-\frac{M_k^2}{\sig^2}}\Big[-\frac{1-2x}{6\sig^3}\frac{M_k^2}{\sqrt{N_1}}+\frac{5-14\sig^2}{72\sig^5}\frac{M_k^3}{N_1}-\frac{1-4\sig^2}{36\sig^7}\frac{M_k^5}{N_1}+\OO\Big(\frac{M_k^4+M_k^8}{N_1\sqrt{N_1}}\Big)\Big].
    \end{align*}
   Following the argument in Lemma \ref{S1^(2)}, we can then obtain
    \begin{align*}
        I_x(2m_k, 2N-2m_k) &= \frac12 \erfc(-\eta_k \sqrt{N_1}) + \frac{1}{2\sqrt{\pi {N_1}}}e^{-\eta_k^2{N_1}}\Big(c_0(\eta_k) + \OO\Big(\frac1{N_1}\Big)\Big) \\
        &= \widehat u_1(M_k) + \frac{1}{\sqrt{N_1}} \widehat  u_2(M_k) + \frac{1}{N_1} \widehat  u_3(M_k) + \OO\Big(\frac{1+M_k^8}{N_1\sqrt{N_1}}e^{-\frac{M_k^2}{\sig^2}}\Big),
    \end{align*}
    where
    \begin{align*}
        \widehat  u_1(t) &= \frac12 \erfc\Big(\frac{t}{ \sig}\Big), \qquad \widehat  u_2(t) = \frac{1-2x}{6\sig\sqrt{\pi}}e^{-\frac{t^2}{\sig^2}} \bigg(\frac{t^2}{\sig^2}+1\bigg), \\
       \widehat  u_3(t) &= \frac{1}{2\sqrt{\pi}}e^{-\frac{t^2}{\sig^2}}\bigg(\frac{1-4\sig^2}{18\sig^7}t^5 - \frac{1-2\sig^2}{36\sig^5}t^3 +\frac{1+\sig^2}{12\sig^3}t\bigg).
    \end{align*}
    Therefore, it follows that
    \begin{align*}
        \log I_x(2m_k, 2N_1-2m_k) &= \log \widehat 
 u_1(M_k) + \frac{1}{\sqrt{N_1}}  \widehat g(M_k) + \frac{1}{N_1} \widehat g_1(M_k) \\
        &\quad + \OO\bigg(\frac{1}{N_1\sqrt N_1}\bigg(\frac{\widehat  u_2(M_k)^3}{\widehat 
 u_1(M_k)^3}+\frac{\widehat  u_2(M_k) \widehat 
 u_3(M_k)}{\widehat  u_1(M_k)^2}+\frac{1+M_k^8}{\widehat  u_1(M_k)} e^{-\frac{M_k^2}{\sig^2}}\bigg)\bigg),
    \end{align*}
    where 
    \[
    \widehat g(t) = \frac{\widehat  u_2(t)}{\widehat  u_1(t)}, \qquad \widehat g_1(t) = -\frac{\widehat u_2(t)^2}{2\widehat u_1(t)^2} + \frac{\widehat u_3(t)}{\widehat u_1(t)}.
    \]
    Note that $\widehat g_1(t) = \OO(t^4)$ as $t \to +\infty$.
    Then by applying  Lemma \ref{Riemann sum of M_k} to $\log \widehat  u_1(t)$, $\widehat g(t)$, $\widehat g_1(t)$ and the error term, the lemma follows. 
\end{proof}

It now suffices to analyse  $\wh S_1^{(3)}$. 
We need a modified version of Lemma \ref{lem:3.9} since the definition of $\theta_\del$ is different from the complex case. 

\begin{lem} \label{lem:4.8}
    Let $h \in C^3((x,1-\del))$. Then as $N \to \infty$,
    \begin{align*}
        \sum_{k=g_++1}^{k_+-1} h(x_k) &= N_1 \int_{x+\frac{M_1}{\sqrt{N_1}}}^{1-\del} h(t) \,dt + \Big(\theta_+ - \frac12 \Big)h\Big(x+\frac{M_1}{\sqrt{N_1}}\Big) + \Big(\theta_\del'-\frac12\Big) h(1-\del) \\
        &\quad -\frac{1-6\theta_++6\theta_+^2}{12N_1}h'\Big(x+\frac{M_1}{\sqrt{N_1}}\Big) + \frac{1-6\theta_\del'+6\theta_\del'^2}{12N_1}h'(1-\del)+\OO\Big(\frac{\mf m(h''')+\mf m(h'')}{N_1^2}\Big),
    \end{align*}
    where $\theta_\del' := \theta_{\del}-(1-\frac\del 2)$ and $\mf m$ is given by \eqref{mf lem3.9}. 
    Here, $\theta_+$ and $\theta_\del$ are given by \eqref{H:theta_pm} and  \eqref{H:theta_Mdel}.
\end{lem}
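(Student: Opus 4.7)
The plan is to reuse the proof of Lemma~\ref{lem:3.9} essentially verbatim, the only real work being to match the parameters of Lemma~\ref{lem:3.3} against the symplectic redefinitions of $N$, $N_1$, and $\theta_\delta$. Concretely, I would apply Lemma~\ref{lem:3.3} with $n$ replaced by $N_1$, the function $f$ by $h$, and
\begin{equation*}
A = x + \frac{M_1}{\sqrt{N_1}}, \qquad a_0 = 1 - \theta_+, \qquad B = 1 - \delta, \qquad b_0 = \theta_\delta' - 1,
\end{equation*}
where $\theta_\delta' = \theta_\delta - (1 - \delta/2)$. Since $x_k = (k+\alpha+1)/N_1 = m_k/N_1$, summing $h(x_k)$ over $k \in [g_++1, k_+-1]$ is the same as summing $h(m/N_1)$ over the integers $m \in [g_+ + \alpha + 2,\, k_+ + \alpha]$, which is exactly the form Lemma~\ref{lem:3.3} handles.

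The genuine content of the proof is to check that $a_n := AN_1 + a_0$ and $b_n := B N_1 + b_0$ coincide with these integer endpoints. The left identity $a_n = g_+ + \alpha + 2$ is immediate from the definition \eqref{H:theta_pm} of $\theta_+$. For the right identity $b_n = k_+ + \alpha$, I would expand
\begin{equation*}
(1-\delta) N_1 + \theta_\delta' - 1 = (1-\delta)\Big(N + \tfrac{1}{2}\Big) + \theta_\delta - \Big(1 - \tfrac{\delta}{2}\Big) - 1
\end{equation*}
and verify that this equals $\lceil (1-\delta) N \rceil - 1 = k_+ + \alpha$. After cancellation this reduces to the arithmetic identity
\begin{equation*}
\lceil (1-\delta) N \rceil = \Big(N - \tfrac{1}{2}\Big) - \lfloor \delta N - \tfrac{1}{2} \rfloor,
\end{equation*}
which follows from the standard formula $\lceil m - y \rceil = m - \lfloor y \rfloor$ (valid for any integer $m$) applied with the integer $m = N - \tfrac{1}{2} = n + \alpha + c$ and $y = \delta N - \tfrac{1}{2}$. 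This is exactly the reason that in the symplectic case $\theta_\delta$ is defined in \eqref{H:theta_Mdel} with a shift of $-\tfrac{1}{2}$ inside the floor, and it is essentially the only place where the proof departs from that of Lemma~\ref{lem:3.9}.

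The remaining hypotheses of Lemma~\ref{lem:3.3} are routine: a direct computation shows $\theta_\delta \in [\tfrac{1}{2}, \tfrac{3}{2})$, so $a_0, b_0$ are uniformly bounded in $N_1$; also $M_1/\sqrt{N_1} \to 0$ and $B - A \to 1 - \delta - x > 0$ is bounded away from zero by the choice $\delta < \min\{x, 1-x\}$ from Section~\ref{Section_Outline}. Plugging into Lemma~\ref{lem:3.3} and reading off the two boundary terms at $A$ and $B$, the two $\OO(1/N_1)$ correction terms with the $(1 - 6\theta^2 + \ldots)/12$ coefficients, and the error bound in $\mathfrak{m}(h'')$ and $\mathfrak{m}(h''')$, yields the stated expansion. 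I do not expect any real obstacle beyond the careful bookkeeping of the $\theta_\delta'$-offset described above.
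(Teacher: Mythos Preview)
Your proposal is correct and follows essentially the same approach as the paper: apply Lemma~\ref{lem:3.3} with $A = x+\frac{M_1}{\sqrt{N_1}}$, $a_0 = 1-\theta_+$, $B = 1-\delta$, $b_0 = \theta_\delta'-1$, and verify that the resulting integer endpoints match $g_++\alpha+2$ and $k_++\alpha$. Your explicit verification of the $b_n$ identity via $\lceil m - y \rceil = m - \lfloor y \rfloor$ with $m = N - \tfrac{1}{2}$ is more detailed than the paper's one-line computation $(1-\delta)N_1 + \theta_\delta' - 1 = (1-\delta)N + \theta_\delta - \tfrac{3}{2} = k_+ + \alpha$, but the content is identical.
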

\begin{proof}
   Let $A = x+\frac{M_1}{\sqrt{N_1}}$, $a_0 = 1-\theta_+$, $B = 1-\del$ and $b_0 = \theta_\del'-1$. Then we have
    \begin{align*}
        a_n &= AN_1 + a_0 = \Big(x+\frac{M_1}{\sqrt{N_1}}\Big)N_1 + 1-\theta_+ = g_++\al+2, \\
        b_n &= BN_1+b_0 = (1-\del)N_1+\theta_\del'-1 =  (1-\del)N + \theta_\del - \frac 32= k_++\al.
    \end{align*}
    Now Lemma \ref{lem:3.3} completes the proof.
\end{proof}

\begin{lem}\label{H:S1^(3)}
    As $N \to \infty$, we have
    \begin{align}
    \begin{split}
        \wh S_1^{(3)} &= \wh B_1^{(3)} N_1^2 + \wh B_2^{(3)} N_1 \log N_1 + \wh B_3^{(3)}N_1 + \wh B_4^{(3)} \sqrt{N_1} + \wh B_5^{(3)} \log N_1 + \wh B_6^{(3)} \\
        &\quad + \wh B_7^{(3)}\frac{1}{N_1}+\wh B_8^{(3)} \frac{1}{N_1^2}+\OO\Big(\frac{M_1}{\sqrt{N_1}}+\frac{1}{M_1^2}+\frac{\sqrt{N_1}}{M_1^7}\Big),  
    \end{split}
    \end{align}
    where
    \begin{align*}
        \wh B_1^{(3)} &= -\int_{x+\frac{M_1}{\sqrt{N_1}}}^{1-\del} T(t)\, dt, \qquad \wh B_2^{(3)} = -\frac12(1-\del-x), \\
        \wh B_3^{(3)} &= \frac12 \int_{x+\frac{M_1}{\sqrt{N_1}}}^{1-\del} h_1(t)\, dt -\frac12(1-\del-x)\log(4\pi) + \Big(\frac12 - \theta_+\Big)T_{M_1}+\Big(\frac12 - \theta_\del'\Big)T_{1-\del}, \\
        \wh B_4^{(3)} &= \frac12 M_1\log(4\pi N_1), \qquad \wh B_5^{(3)} = \frac12(1-\theta_+-\theta_\del'), \\
        \wh B_6^{(3)} &= \frac12 (1-\theta_+-\theta_\del') \log(4\pi) + \frac{1-6\theta_++6\theta_+^2}{12}T_{M_1}' - \frac{1-6\theta_\del'+6\theta_\del'^2}{12}T_{1-\del}' \\
        &\quad + \frac12\Big(\theta_+-\frac12\Big)h_1\Big(x+\frac{M_1}{\sqrt{N_1}}\Big) + \frac12\Big(\theta_\del'-\frac12\Big)h_1(1-\del) - \frac12\int_{x+\frac{M_1}{\sqrt{N_1}}}^{1-\del}h_2(t) \,dt, \\
        \wh B_7^{(3)} &= \frac58 \int_{x+\frac{M_1}{\sqrt{N_1}}}^{1-\del} \frac{1}{T(t)^2}\,dt, \qquad \wh B_8^{(3)} = -\frac{37}{24}\int_{x+\frac{M_1}{\sqrt{N_1}}}^{1-\del}\frac1{T(t)^3}\,dt.
    \end{align*}
    Here $T$ is given by \eqref{def:T}, 
    \begin{equation} \label{def of TM1 etc H}
      T_{M_1} = T(x+\tfrac{M_1}{\sqrt{N_1}}), \qquad T_{1-\del} = T(1-\del), \qquad T_{M_1}' = T'(x+\tfrac{M_1}{\sqrt{N_1}}), \qquad T_{1-\del}' = T'(1-\del).  
    \end{equation}
   Also, $h_1$ and $h_2$ are given by \eqref{def:h1h2}, $\theta_+$ is given by \eqref{H:theta_pm}, and $\theta_\del'$ is given by Lemma \ref{lem:4.8}.
\end{lem}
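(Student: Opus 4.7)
The strategy runs in direct parallel with that of Lemma~\ref{S1^(3)}, the only novelty being the bookkeeping of factors of two induced by the symplectic normalisation. The starting point is \eqref{H:summand of S1}. Since $k \in [g_++1, k_+-1]$ corresponds to $x_k \ge x + M_1/\sqrt{N_1}$, we have $\eta_k < 0$ and $-\eta_k\sqrt{N_1}\to +\infty$, so the asymptotic expansion \eqref{erfc asymptotic} applies. Substituting it into \eqref{H:summand of S1} and collecting terms, I would obtain
\begin{equation*}
    I_x(2m_k,2N_1-2m_k) = -\frac{1}{\eta_k}\frac{e^{-\eta_k^2 N_1}}{\sqrt{4\pi N_1}}\Big[\al_k - \frac{\beta_k}{2N_1} + \frac{3}{4\eta_k^4 N_1^2} - \frac{15}{8\eta_k^6 N_1^3} + \OO\Big(\frac{1}{\eta_k^8 N_1^4}+\frac{\eta_k}{N_1^2}\Big) \Big],
\end{equation*}
where $\al_k, \beta_k$ are as in \eqref{def:alphabeta}. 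The coefficients $\tfrac{3}{4}$ and $\tfrac{15}{8}$ (against $3$ and $15$ in the complex case) are $(2m-1)!!/2^m$ coming from the tail of $\erfc$ at $-\eta_k\sqrt{N_1}$, while the factor $\tfrac{1}{2N_1}$ in front of $\beta_k$ is inherited from the subleading term $c_1(\eta_k)/(2N_1)$ of \eqref{H:summand of S1}.

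Taking the logarithm and Taylor expanding as in the derivation of \eqref{S1^(3):logIx} yields
\begin{align*}
    \log I_x(2m_k,2N_1-2m_k) &= -\eta_k^2 N_1 - \tfrac12\log N_1 - \tfrac12\log \eta_k^2 - \tfrac12\log(4\pi) + \log\al_k \\
    &\quad - \frac{\beta_k}{2\al_k N_1} + \frac{5}{8\eta_k^4 N_1^2} - \frac{37}{24\eta_k^6 N_1^3} + \OO\Big(\frac{1}{\eta_k^8 N_1^4}+\frac{1}{\eta_k^2 N_1^2(x_k-x)}+\frac{x_k-x}{\eta_k^4 N_1^2}\Big),
\end{align*}
which is the complex expansion \eqref{S1^(3):logIx} with $-\tfrac12\eta_k^2 N_1$ replaced by $-\eta_k^2 N_1$, $\log(2\pi)$ by $\log(4\pi)$, and the subleading coefficients halved. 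One then splits $\wh S_1^{(3)}$ into six subsums $\widehat{\mc S}_0,\ldots,\widehat{\mc S}_5$ matching the six terms of this expansion, in direct imitation of the splitting used to prove Lemma~\ref{S1^(3)}.

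Each subsum is evaluated by Lemma~\ref{lem:4.8}, which is formally identical to Lemma~\ref{lem:3.9} but with the redefined $\theta_\del' = \theta_\del - (1-\del/2)$. The identities $\eta_k^2 = T(x_k)$, $\log\al_k - \tfrac12\log\eta_k^2 = \tfrac12 h_1(x_k)$, and $\beta_k/\al_k = h_2(x_k)$ from the complex analysis carry over verbatim, so the principal integrals reproduce $-\int T$, $\tfrac12\int h_1$, $-\tfrac12\int h_2$, together with tail integrals $\int T^{-2}$ and $\int T^{-3}$ furnishing $\wh B_1^{(3)},\wh B_3^{(3)},\wh B_6^{(3)},\wh B_7^{(3)}$ and $\wh B_8^{(3)}$. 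The Euler--Maclaurin boundary corrections in Lemma~\ref{lem:4.8} supply the $T_{M_1}, T_{M_1}', T_{1-\del}, T_{1-\del}'$ and $h_1$-boundary terms; in particular, the factor $1/12$ (versus $1/24$ in the complex case) in front of the $T'$ corrections comes precisely from the doubled coefficient of $\eta_k^2 N_1$, while the factor $\tfrac12$ in front of $\int h_2$ mirrors the denominator $2N_1$ in $\beta_k/(2\al_k N_1)$.

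The main hazard is the dual role of $N_1$: the prefactor $1/\sqrt{4\pi N_1}$ and exponent $\eta_k^2 N_1$ effectively play the role of an expansion parameter $2N_1$ (explaining all the halvings above), while the summation still ranges over $\sim N_1$ distinct indices, so Lemma~\ref{lem:4.8} must be invoked with its original $N_1$. Keeping these two roles distinct, and checking that the $M_1$-dependent and $\theta_\pm$-dependent residual constants assemble exactly into the stated $\wh B_j^{(3)}$, is the bulk of the verification; no genuinely new analytic difficulty arises beyond what was already handled in the proof of Lemma~\ref{S1^(3)}.
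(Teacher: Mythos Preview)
Your proposal is correct and follows essentially the same approach as the paper's own proof, which simply notes that substituting $2N_1$ for $N_1$ in \eqref{S1^(3):logIx} yields the symplectic analogue \eqref{H:S1^(3):logIx}, splits into the six subsums $\wh{\mc S}_0,\ldots,\wh{\mc S}_5$, and defers the details of applying Lemma~\ref{lem:4.8} to the reader. Your write-up is in fact more explicit than the paper's, correctly tracking the halvings of the subleading coefficients and the doubling that converts $1/24$ to $1/12$ in the $T'$ boundary corrections; the one minor omission is the error term $\tfrac{x_k-x}{\eta_k^6 N_1^3}$, but this is dominated by $\tfrac{x_k-x}{\eta_k^4 N_1^2}$ throughout the range and so does not affect the stated bound.
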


\begin{proof}

By putting $2N_1$ to $N_1$ in \eqref{S1^(3):logIx}, we have 
\begin{align}\label{H:S1^(3):logIx}
\begin{split}
    I_x(2m_k, 2N_1 - 2m_k) &= -\eta_k^2 N_1 - \frac12 \log N_1 - \frac12 \log \eta_k^2 - \frac12 \log(4\pi) + \log \al_k - \frac{\beta_k}{2\al_k N_1}\\
    &\quad +\frac{5}{8\eta_k^4N_1^2}-\frac{37}{24\eta_k^6N_1^3} + \OO\Big(\frac1{\eta_k^8N_1^4}+\frac{1}{\eta_k^2N_1^2(x_k-x)}+\frac{x_k-x}{\eta_k^4N_1^2}+\frac{x_k-x}{\eta_k^6N_1^3}\Big),
\end{split}
\end{align} 
where $\al_k$ and $\beta_k$ are defined by \eqref{def:alphabeta}. 
Let us write 
    \begin{align*}
        \wh{\mc S}_0 &= -\frac12 \sum_{k=g_++1}^{k_+-1} \Big( \log N_1 + \log(4\pi)\Big), \qquad\wh{\mc S}_1 = -N_1 \sum_{k=g_++1}^{k_+-1} \eta_k^2, \qquad \wh{\mc S}_2 = \sum_{k=g_++1}^{k_+-1} \Big(-\frac12 \log \eta_k^2 + \log \al_k \Big), \\
        \wh{\mc S}_3 &= -\frac1{2N_1} \sum_{k=g_++1}^{k_+-1} \frac{\beta_k}{\al_k}, \qquad  \wh{\mc S}_4 = \frac{5}{8N_1^2} \sum_{k=g_++1}^{k_+-1} \frac{1}{\eta_k^4}, \qquad \wh{\mc S}_5 = -\frac{37}{24N_1^3} \sum_{k=g_++1}^{k_+-1} \frac{1}{\eta_k^6}.
    \end{align*}
    Here, each sum can then be computed using Lemma~\ref{lem:4.8} and expressed with formulas similar to those in Lemma~\ref{S1^(3)}. We leave the details to the interested reader. 
    Then the desired asymptotic formula follows. 
\end{proof}

We are now ready to state the asymptotic behaviour of $\wh S_1. $

\begin{lem}[\textbf{Asymptotic behaviour of $\wh S_1$}] \label{H:S1}
    As $N \to \infty$, we have
    \begin{equation}
    \wh S_1 = \wh B_1 N_1^2 + \wh B_2 N_1 \log N_1 + \wh B_3 N_1 + \wh B_4 \sqrt {N_1} + \wh B_5 \log N_1 + \wh B_6 + \OO\Big(\frac{M_1^5}{\sqrt{N_1}}+\frac1{M_1^2}+\frac{\sqrt{N_1}}{M_1^7}\Big),
    \end{equation}
  where 
    \begin{align*}
        \wh B_1 &= -\int_x^{1-\del} T(t)\, dt, \qquad \wh B_2 = -\frac12(1-x-\del), \\
        \wh B_3 &= \frac12 \JJ_1^{(\del)}-\Big(\theta_\del'-\frac12\Big)T_{1-\del}+(1-\del-x)\Big(1+\log \Big( \frac{\sig}{\sqrt{4\pi}(1-\del-x)} \Big) \Big),
        \\
        \wh B_4 &= \frac{ \sig }{\sqrt 2} \,\II_1, \qquad \wh B_5 = \frac14 - \frac12 \theta_\del', \\
        \wh B_6 &= \frac{1-2x}{3}\II_2-\frac12 \JJ_2^{(\del)} +\Big(\frac14 - \frac12\theta_\del'\Big)\log(4\pi) - \frac{1-6\theta_\del'+6\theta_\del'^2}{12}T_{1-\del}'+\Big(\frac12\theta_\del'-\frac14\Big)h_1(1-\del) + \frac{\sig^2}{2(1-\del-x)}.
    \end{align*}
    Here $T$, $h_1$, $T_{1-\del}$ and $T_{1-\del}'$ are given by \eqref{def:T}, \eqref{def:h1h2} and \eqref{def of TM1 etc H}, whereas $\II_1, \II_2, \JJ_1^{(\del)}$ and $\JJ_2^{(\del)}$ are given by \eqref{I1}, \eqref{I2} and \eqref{J1J2}.
\end{lem}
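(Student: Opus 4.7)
The strategy mirrors that of Lemma~\ref{S1} in the complex case. By \eqref{H:sum division S1 123} we write
\begin{equation}
\wh S_1 = \wh S_1^{(1)} + \wh S_1^{(2)} + \wh S_1^{(3)},
\end{equation}
and invoke Lemmas~\ref{H:S1^(1)}, \ref{H:S1^(2)}, \ref{H:S1^(3)}. The contribution $\wh S_1^{(1)}$ is exponentially small and may be absorbed into the error term. The plan is to expand every quantity in Lemma~\ref{H:S1^(3)} whose domain of integration touches $x+M_1/\sqrt{N_1}$ by splitting
\[
\int_{x+M_1/\sqrt{N_1}}^{1-\del} = \int_x^{1-\del} - \int_x^{x+M_1/\sqrt{N_1}},
\]
and Taylor-expanding the integrand around $t=x$. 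For $T(t)$ this will produce, in addition to $-\int_x^{1-\del} T(t)\,dt$, a term of order $M_1^3\sqrt{N_1}/\sig^2$ together with lower-order $M_1$-dependent remainders. For $h_1(t)$ we first isolate the logarithmic singularity as in \eqref{J1_calculation}, so that the regular piece contributes $\tfrac12\JJ_1^{(\del)}$ while $\int\log(\sig/(t-x))\,dt$ generates the factors $M_1\log M_1$, $M_1\log N_1$ and $(1+\log\sig)M_1$ required to cancel the corresponding terms in $\wh B_4^{(2)}\sqrt{N_1}$. The handling of $h_2(t)$, $1/T(t)^2$ and $1/T(t)^3$ is identical in shape to the complex case but carries the rescaled constants $5/8$ and $37/24$ from Lemma~\ref{H:S1^(3)}.

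The next step is the asymptotic expansion of $\wh B_4^{(2)}\sqrt{N_1}$ and $\wh B_6^{(2)}$ using the behaviour
\[
\log\Big[\tfrac12\erfc\Big(\frac{t}{\sig}\Big)\Big] = -\frac{t^2}{\sig^2} - \frac12 \log\Big(1+\frac{t^2}{\sig^2}\Big) - \frac12\log(4\pi) + \OO(t^{-4}),\qquad t \to +\infty,
\]
and exponential decay as $t\to-\infty$. Introducing the regularised integrand and using the definition \eqref{I1} of $\II_1$ in the rescaled form
\[
\widetilde{\II}_1^{\,\mathbb H} := \int_{-\infty}^\infty \!\!\bigg[\log\Big(\tfrac12 \erfc(t)\Big)+\Big(t^2+\tfrac12\log(1+t^2)+\tfrac12\log(4\pi)\Big)\ind_{(0,\infty)}(t)\bigg]\,dt,
\]
the change of variables $t\mapsto t/\sig$ yields $\int_{-M_1}^{M_1}\log[\tfrac12\erfc(t/\sig)]\,dt = \sig\,\widetilde{\II}_1^{\,\mathbb H} + (\text{explicit $M_1$-dependent polynomial/log terms}) + \OO(M_1^{-7})$, with $\widetilde{\II}_1^{\,\mathbb H}$ related to $\II_1$ by $\II_1 = \sqrt{2}(\widetilde{\II}_1^{\,\mathbb H}-\pi/2)$ through the substitution $t = s/\sqrt{2}$ that links the two $\erfc$ normalisations; this accounts for the prefactor $\sig/\sqrt 2$ in $\wh B_4$. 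An entirely analogous computation for $\wh B_6^{(2)}$, using $\widehat g(t) = \tfrac{1-2x}{6\sig^4}(t^3+\tfrac{3}{2}\sig^2 t) + \OO(t^{-3})$ as $t\to+\infty$ and the definition \eqref{I2} of $\II_2$, produces the constant $\tfrac{1-2x}{3}\II_2$.

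The final step is the cancellation bookkeeping. Collecting the $M_1$-dependent contributions from \eqref{S1:part1}--\eqref{S1:part8} in their symplectic analogues, one checks term by term that the pieces proportional to $M_1^3\sqrt{N_1}$, $M_1\sqrt{N_1}\log N_1$, $M_1\log M_1\,\sqrt{N_1}$, $(1+\log\sig)M_1\sqrt{N_1}$, the powers $M_1^4$, $M_1^2$, $M_1^{-1}\sqrt{N_1}$, $M_1^{-3}\sqrt{N_1}$, $M_1^{-5}\sqrt{N_1}$ all cancel, the last three coming from the matching of $\wh B_7^{(3)}/N_1$ and $\wh B_8^{(3)}/N_1^2$ with the expansion of $\wh B_4^{(2)}\sqrt{N_1}$ (with the new coefficients $5/8$ and $37/24$ producing, after multiplication by $\sig^4$ and $\sig^6$ from the leading Laurent terms of $h_2$, $1/T^2$ and $1/T^3$, exactly the constants appearing in $\wh B_6$). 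What survives is the stated asymptotic formula.

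The main obstacle is the same as in the complex case: the sheer bookkeeping of a dozen or so $M_1$-dependent terms across three lemmas, and ensuring that the factor of $\sqrt{2}$ distinguishing the symplectic $\erfc(\cdot/\sig)$ from the complex $\erfc(\cdot/(\sqrt{2}\sig))$ is correctly propagated through every step (in particular into $\II_1$, $\II_2$, and into the coefficients $5/8$, $37/24$ as opposed to $5/2$, $37/3$). Once this cancellation is verified, the error term of size $M_1^5/\sqrt{N_1}+1/M_1^2+\sqrt{N_1}/M_1^7$ is inherited directly from Lemmas~\ref{H:S1^(2)} and \ref{H:S1^(3)}, completing the proof.
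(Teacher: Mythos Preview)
Your proposal is correct and follows essentially the same route as the paper: combine Lemmas~\ref{H:S1^(1)}--\ref{H:S1^(3)}, split all integrals at $x+M_1/\sqrt{N_1}$, Taylor-expand near $t=x$, and verify that every $M_1$-dependent piece cancels, exactly as in the proof of Lemma~\ref{S1} but with the rescaling $\sqrt{2}\sig\mapsto\sig$ threaded through. One minor slip: your asymptotic for $\widehat g(t)$ should carry the prefactor $\tfrac{1-2x}{3\sig^4}$ rather than $\tfrac{1-2x}{6\sig^4}$ (this follows from $\widehat g(t)=\tfrac{1}{\sqrt 2}g(\sqrt 2 t)$), though your stated constant $\tfrac{1-2x}{3}\II_2$ is nonetheless correct; also note that the $\sig^2\sqrt{N_1}/M_1$ cancellation comes from the $h_2$ integral inside $\wh B_6^{(3)}$, not from $\wh B_7^{(3)}$.
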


\begin{proof}
    Combining Lemmas \ref{H:S1^(1)}, \ref{H:S1^(2)} and \ref{H:S1^(3)}, we have
    \begin{align}
    \begin{split}
        \wh S_1 &= \wh B_1^{(3)}N_1^2 + \wh B_2^{(3)}N_1\log N_1 + \wh B_3^{(3)} N_1 + \wh B_4'\sqrt{N_1}+\wh B_5^{(3)}\log N_1 + \wh B_6'\\
    &\quad + \wh B_7^{(3)} \frac{1}{N_1}+ \wh B_8^{(3)} \frac{1}{N_1^2} + \OO\Big(\frac{\sqrt{N_1}}{M_1^7}+\frac{1}{M_1^2}+\frac{M_1^5}{\sqrt{N_1}}+\frac{M_1^{10}}{N_1}\Big),
    \end{split}
    \end{align}
    where
    \[
    \wh B_4' = \wh B_4^{(2)} + \wh B_4^{(3)}, \qquad 
    \wh B_6' = \wh B_6^{(2)} + \wh B_6^{(3)}.
    \]
    Through long but straightforward computations similar to those in Lemma \ref{S1}, we obtain
    \begin{align*}
        \wh B_1^{(3)}N_1^2 &= -N_1^2\int_x^{1-\del} T(t) \,dt + \frac{1}{3\sig^2}M_1^3 \sqrt{N_1}-\frac{1-2x}{12\sig^4}M_1^4 + \OO\Big(\frac{M_1^5}{\sqrt{N_1}}\Big), \\
        \wh B_2^{(3)}N_1 \log N_1 &= -\frac12 (1-\del-x) N_1 \log N_1, \\
        \wh B_3^{(3)}N_1 &= \Big[\frac12 \JJ_1^{(\del)}-\Big(\theta_\del'-\frac12\Big)T_{1-\del}+(1-x-\del)\Big(1+\log \Big(\frac{\sig}{\sqrt{4\pi}(1-x-\del)}\Big)\Big)\Big]N_1 \\
        &\quad + \Big(M_1\log M_1 - \frac12 M_1 \log N_1 - M_1(1+\log \sig)\Big)\sqrt{N_1} + \frac{1+2x-4\theta_+}{4\sig^2}M_1^2 + \OO\Big(\frac{M_1^3}{\sqrt{N_1}}\Big), \\
        \wh B_4' \sqrt{N_1} &= -\frac{M_1^3\sqrt{N_1}}{3\sig^2}+\frac12 M_1 \sqrt{N_1}\log N_1 - M_1 \log M_1 \sqrt{N_1}+(1+\log \sig)M_1\sqrt{N_1} \\
        &\quad + \frac1{\sqrt 2}\sig \, \II_1 \sqrt{N_1}+\frac{\sig^2}{2}\frac{\sqrt{N_1}}{M_1}-\frac{5\sig^4}{24}\frac{\sqrt{N_1}}{M_1^3}+\frac{37\sig^6}{120}\frac{\sqrt{N_1}}{M_1^5}+\OO\Big(\frac{\sqrt{N_1}}{M_1^7}\Big), \\
        \wh B_5^{(3)} \log N_1 &= \frac12(1-\theta_+-\theta_\del') \log N_1, \\
        \wh B_6^{(2)} &= \frac{1-2x}{12\sig^4}M_1^4 - \frac{1+2x-4\theta_+}{4\sig^2}M_1^2 + \Big(\theta_+-\frac12\Big) \log \Big(\frac{\sqrt{4\pi}M_1}{\sig}\Big)+\frac{1-2x}{3}\II_2 + \OO\Big(\frac1{M_1^2}\Big), \\
        \wh B_6^{(3)} &= -\frac{\sig^2}{2}\frac{\sqrt{N_1}}{M_1} + \Big(\frac{\theta_+}{2}-\frac14\Big) \log N_1 - \Big(\theta_+-\frac12\Big) \log \Big(\frac{\sqrt{4\pi}M_1}{\sig}\Big)+\Big(\frac14 - \frac12 \theta_\del'\Big) \log(4\pi) \\
        &\quad -\frac{1-6\theta_\del'+6\theta_\del'^2}{12}T_{1-\del}'+\Big(\frac12 \theta_\del'-\frac14\Big) h_1(1-\del) + \frac{\sig^2}{2(1-\del-x)}-\frac12 \JJ_2^{(\del)} + \OO\Big(\frac{M_1}{\sqrt{N_1}}\Big), \\
        \frac{1}{N_1}\wh B_7^{(3)} &= \frac{5\sig^4}{24}\frac{\sqrt{N_1}}{M_1^3}+ \OO\Big(\frac{1}{M_1^4}+\frac{1}{N_1^2}+\frac{M_1}{N_1^2\sqrt{N_1}}\Big), \\
        \frac{1}{N_1^2}\wh B_8^{(3)} &= -\frac{37\sig^6}{120}\frac{\sqrt{N_1}}{M_1^5} + \OO\Big(\frac{1}{M_1^4}+\frac{1}{N_1^2}+\frac{M_1}{N_1^2\sqrt{N_1}}\Big).
    \end{align*}
  Combining these equations, the lemma follows. 
\end{proof}

\subsection{Proof of Theorem~\ref{thm:1.2}}

We now complete the proof of Theorem~\ref{thm:1.2}. 

\begin{proof}[Proof of Theorem \ref{thm:1.2}]
    Since $N_1 = N + \frac12$, Lemma~\ref{H:S1} gives rise to 
    \begin{align*}
        \wh S_1 &= \wh B_1 N^2 + \wh B_2 N \log N + (\wh B_1 + \wh B_3) N + \wh B_4 \sqrt{N}\\
        &\quad + \Big(\frac12\wh B_2 + \wh B_5\Big) \log N + \frac14 \wh B_1+\frac12 \wh B_2+\frac12 \wh B_3+ \wh B_6 + \OO\Big(\frac1{\sqrt N}\Big).
    \end{align*}
    Combining this with Lemma~\ref{H:S2+S3}, we have 
    \begin{align*}
        \log \mc P_n^\H(R;\al,c) &= \wh A_1' N^2 + \wh A_2' N \log N + \wh A_3' N + \wh A_4' \sqrt{N}+\wh A_5' \log N + \wh A_6' \\
        &\quad + \OO\Big(\frac1M + \frac{M^2}{N}+\frac{M^4}{N^2}+\frac{\sqrt{N_1}}{M_1^7}+\frac{1}{M_1^2}+\frac{M_1^5}{\sqrt{N_1}}+\frac{M_1^{10}}{N_1}\Big)
    \end{align*}
    as $N \to \infty$, where
    \begin{align*}
        \wh A_1' &= \wh B_1 + \wh E_1, \qquad \wh A_2' = \wh B_2 +\wh E_2, \qquad \wh A_3' = \wh B_1 + \wh B_3 +\wh E_3, \\
        \wh A_4' &= \wh B_4, \qquad \wh A_5' = \frac12\wh B_2+\wh B_5 + \wh E_5, \qquad \wh A_6' = \frac14\wh B_1+\frac12\wh B_2+\frac12 \wh B_3+\wh B_6+\wh E_6.
    \end{align*}
    After straightforward simplifications, reusing some computations from the proof of Theorem \ref{thm:1.1}, we obtain 
    \begin{align} \label{H:A tilde}
    \begin{split}
        \wh A_1' &= 1-x+\log x, \qquad \wh A_2' = -\frac12 (1-x), \\
        \wh A_3' &= (1-x)\Big[1-\frac12 \log(4\pi) + \frac12 \log \Big(\frac{x}{1-x}\Big)\Big]-\Big(2c+\frac12\Big)\log x, \\
        \wh A_4' &= \sqrt{\frac{x(1-x)}2}\,\II_1, \qquad \wh A_5' =  - c^2+ \frac14 x + \frac1{24}, \\
        \wh A_6' &= \Big(c^2-\frac14 x-\frac1{24}\Big) \log \Big( \frac{x}{1-x} \Big) - \Big(c+\frac{1-x}{2}\Big) \frac{\log(4\pi)}{2}+\Big(c+\frac14\Big) \log 2 \\
        &\quad + \frac{1-2x}{3}\II_2 + \frac1{24}(1+11x) - 2\zeta'(-1) + \log\Big(G(c+1)G(c+\tfrac32)\Big).
    \end{split}
    \end{align}
    Finally, using the relation $N = n+\al+c+\frac12$, we can recover the large $n$ asymptotics of the gap probability: 
    \begin{align} \label{H:logPnfinal}
    \begin{split}
        \log  \mc P_n^\HH(R;\al,c)& = \wh A_1'n^2 + \wh A_2' n \log n + ((2\al+2c+1)\wh A_1'+\wh A_3') n + \wh A_4' \sqrt{n} \\
        &\quad + \Big( (\al+c+\tfrac12) \wh A_2'+\wh A_5'\Big) \log n 
        \\
        &\quad +(\al+c+\tfrac12)^2 \wh A_1'+ (\al+c+\tfrac12)(\wh A_2'+\wh A_3') + \wh A_6' + \OO(n^{-\frac1{12}}).
    \end{split}
    \end{align}
    By putting \eqref{H:A tilde} to \eqref{H:logPnfinal}, the proof of Theorem~\ref{thm:1.2} is complete. 
\end{proof}

\appendix

\section{Tail Probabilities of the Binomial Distribution} \label{Appendix_binomial tails}

In this appendix, we compile some tail probabilities of the binomial distribution. 

The following lemma provides an upper bound for the lower tail probability, which can be found in \cite[p.405]{Slud1977}.

\begin{lem}
\label{Lemma:2.1}
     Let $X\sim B(n,p)$. Then for any integer $k \leq \frac{n^2p}{n+1}-1$, we have 
     \begin{equation} 
    \PP(X\geq k) \geq e^{-np}\sum_{j=k}^\infty \frac{(np)^j}{j!} \geq 1-\frac12 \erfc\Big(\frac{np-k}{\sqrt{2np}}\Big).    
     \end{equation}
\end{lem}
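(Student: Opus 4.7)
The statement consists of two inequalities; I would treat them sequentially. Throughout, set $\lambda = np$.

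For the first inequality, the plan is to exploit the unimodality of the ratio $r_j := a_j/b_j$ of the binomial and Poisson point masses, $a_j = \binom{n}{j}p^j(1-p)^{n-j}$ and $b_j = e^{-\lambda}\lambda^j/j!$. A direct calculation yields
\[ \frac{r_{j+1}}{r_j} \;=\; \frac{n-j}{n(1-p)}, \]
which is strictly greater than $1$ for $j<np$ and strictly less than $1$ for $j>np$. Hence $r_j$ is strictly unimodal, so the set $\{j\geq 0: a_j\geq b_j\}$ is an interval $[j_-,j_+]$. Using that both $(a_j)$ and $(b_j)$ are probability distributions, one has the telescoping identity
\[ \PP(X\geq k) - \PP(\mathrm{Poisson}(\lambda)\geq k) \;=\; -\sum_{j=0}^{k-1}(a_j-b_j), \]
and every summand on the right is nonpositive provided $k\leq j_-$, whence the first inequality. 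The plan is therefore to verify that the hypothesis $k\leq n^2p/(n+1)-1$ implies $r_{k-1}<1$ (so that $k-1<j_-$), which reduces to the elementary algebraic inequality
\[ \sum_{j=0}^{k-2}\log(1-j/n) + (n-k+1)\log(1-p) + np \;<\; 0, \]
an inequality whose threshold case is exactly $k = n^2p/(n+1)-1$.

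For the second inequality I would use the Poisson-to-Gamma identity
\[ e^{-\lambda}\sum_{j=k}^\infty \frac{\lambda^j}{j!} \;=\; \PP(\Gamma_k\leq\lambda), \qquad \Gamma_k = E_1 + \cdots + E_k,\quad E_i\overset{\mathrm{iid}}{\sim}\mathrm{Exp}(1), \]
and then establish $\PP(\Gamma_k\leq\lambda)\geq \Phi((\lambda-k)/\sqrt\lambda)$. After the change of variable $t = \lambda + s\sqrt\lambda$ in the Gamma density $t^{k-1}e^{-t}/(k-1)!$, the Taylor expansion
\[ (k-1)\log(1+s/\sqrt\lambda) - s\sqrt\lambda \;=\; -\tfrac12 s^2 + \frac{k-1-\lambda}{\sqrt\lambda}s + \OO(|s|^3/\sqrt\lambda) \]
combined with $k\leq\lambda$ (which is implied by the hypothesis on $k$) and Stirling's formula for $(k-1)!$ yields a pointwise comparison of the Gamma density with the standard Gaussian density on the relevant half-line. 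Integrating over $s\leq (\lambda-k)/\sqrt\lambda$ produces the claimed Gaussian lower bound.

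The main obstacle is the algebraic verification in the first step: translating the explicit threshold $k\leq n^2p/(n+1)-1$ into the unimodality crossing $r_{k-1}<1$ is the only delicate point, and is where the precise numerical form of the hypothesis enters. The second step is an essentially routine normal-approximation estimate, with care required only in tracking the correction terms so as to match the exact prefactor $1/\sqrt{2\pi\lambda}$ implicit in $\Phi$.
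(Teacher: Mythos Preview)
The paper does not prove this lemma; it is lifted verbatim from Slud (1977), so the only ``paper's proof'' to compare against is the citation. Your treatment of the first inequality via the unimodality of the likelihood ratio $r_j=a_j/b_j$ is precisely Slud's device and is structurally correct. One caveat: your assertion that ``the threshold case is exactly $k=n^2p/(n+1)-1$'' overstates things. The crossing $r_{k-1}=1$ is a transcendental condition in $k$, and the hypothesis on $k$ is a \emph{sufficient} bound that sits safely below the true crossing, not an exact match; you still owe the (short) verification that $k\le n^2p/(n+1)-1$ forces $\log r_{k-1}<0$.

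Your plan for the second inequality has a genuine gap. The Gamma representation $\PP(\mathrm{Poisson}(\lambda)\ge k)=\PP(\Gamma_k\le\lambda)$ and the substitution $t=\lambda+s\sqrt\lambda$ are fine, but the expansion you write carries an $\OO(|s|^3/\sqrt\lambda)$ remainder of indeterminate sign, and ``tracking the correction terms'' in an asymptotic expansion cannot yield a non-asymptotic inequality. What you have sketched proves the bound only for large $\lambda$, not for all admissible $k$. To make this step rigorous you would need a sign-definite remainder (e.g., via the concavity of $u\mapsto (k-1)\log u - u$ together with a sharp Stirling inequality, or a direct Poisson--normal stochastic comparison as in Slud), not a Taylor expansion. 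For the paper's actual applications (Lemmas~\ref{S0} and~\ref{H:S0}) only the exponential decay $\PP(X<\delta N)\le e^{-cN}$ is used, and for that a Chernoff bound would already suffice; but the lemma as stated is a finite-$n$ inequality, and your second step does not establish it.
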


Let us also introduce a slightly improved version of the asymptotic behaviour in \cite{Ferrante2021} adapted to our purpose.

\begin{lem} \label{lem:2.7}
    Let $X \sim B(N,x)$ where $0<x<1$. Let $\del  \in (0, \min\{\frac14,1-x\}) $ be a constant. 
    Suppose that $\phi$ is an increasing function such that $0<\phi(N)<\del N$ and $\phi(N) \to \infty$ as $N \to \infty$. Then as $N \to \infty$,
    \begin{align}
    \begin{split}
        \log \PP[X\ge (1-t)N] &= -\frac12\Big(t\log \Big(\frac t{1-x}\Big) + (1-t)\log\Big(\frac{1-t}{x}\Big)\Big) N - \frac12 \log N - \log \Big(\frac{1-x-t}{(1-x)(1-t)}\Big)
        \\
        &\quad - \frac{1}{2}\log(2\pi t(1-t)) + \frac{1}{N}\bigg(\frac{1}{12}-\frac1{12}\frac{1}{t(1-t)}-\frac{xt}{(1-x-t)^2}\bigg) +\OO\Big(\frac{1}{t^3N^3}+\frac{1}{N^2}\Big) 
    \end{split}
    \end{align}
    uniformly for $t$ such that $tN$ is an integer with $\phi(N)<tN<\del N$. 
\end{lem}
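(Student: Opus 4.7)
\emph{Proof plan.} The approach is a direct asymptotic analysis that avoids Lemma~\ref{beta}, whose uniform expansion is stated only for $x_0$ in a compact subinterval of $(0,1)$, while our parameter $m/N=1-t$ approaches $1$. I will combine a refined Stirling expansion of the dominant binomial term with a careful analysis of the tail sum as a perturbed geometric series.

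Setting $m=\lceil(1-t)N\rceil$ (the rounding contributes only an $O(1/N)$ shift in $t$, absorbed into the error), write $T_k = \binom{N}{k}x^k(1-x)^{N-k}$. Applying the refined Stirling formula $\log n! = n\log n - n + \tfrac12 \log(2\pi n) + \tfrac{1}{12n} + O(n^{-3})$ to each of $N!,\,m!,\,(N-m)!$ and simplifying produces
\[
\log T_m = -Nf_1(t) - \tfrac12\log\bigl(2\pi Nt(1-t)\bigr) + \frac{1}{12N}\Big(1 - \tfrac{1}{t} - \tfrac{1}{1-t}\Big) + \OO\Big(\frac{1}{N^2} + \frac{1}{(tN)^3}\Big),
\]
where the $1/(tN)^3$ error comes from the cubic Bernoulli correction at $n = N-m = tN$, the smallest factorial.

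Next, factor $\PP[X \ge m] = T_m\cdot S$ with $S = \sum_{j=0}^{tN}\prod_{i=0}^{j-1}r_{m+i}$, where $r_{m+j} = \tfrac{(tN-j)x}{((1-t)N+j+1)(1-x)}$. Set the limiting ratio $r := \tfrac{tx}{(1-t)(1-x)}$, which satisfies $r < 1$ uniformly in $t \in [\phi(N)/N,\delta]$ since $\delta<1-x$. Taylor-expanding $\log(r_{m+i}/r) = \log(1-i/(tN)) - \log(1+(i+1)/((1-t)N))$ and summing gives
\[
\log\prod_{i=0}^{j-1}\frac{r_{m+i}}{r} = -\frac{j(j-1)/2}{tN} - \frac{j(j+1)/2}{(1-t)N} + (\text{higher-order}),
\]
so that summation against $r^j$, using $\sum_{j\ge 0} r^j j(j-1)/2 = r^2/(1-r)^3$ and $\sum_{j\ge 0} r^j j(j+1)/2 = r/(1-r)^3$ together with the algebraic simplification $\tfrac{1}{(1-r)^2}\bigl(\tfrac{r^2}{t} + \tfrac{r}{1-t}\bigr) = \tfrac{tx}{(1-x-t)^2}$, yields
\[
\log S = \log\frac{(1-x)(1-t)}{1-x-t} - \frac{tx}{N(1-x-t)^2} + (\text{error}).
\]
Adding the expansions of $\log T_m$ and $\log S$ recovers the claimed formula with $f_2$ and $f_3$ as defined in the paper.

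The main technical obstacle is the uniform control of the error in $t$. This will rest on two facts. First, $r$ is bounded away from $1$ uniformly on $t \in [\phi(N)/N,\delta]$, so the partial geometric sums converge geometrically and truncation at $j = tN$ is exponentially small in $N$. Second, $r = O(t)$ as $t \to 0$, so sums of the form $\sum_{j\ge 0} r^j j^k$ are themselves $O(r) = O(t)$ for small $r$, which demotes the higher-order $O(j^k/(tN)^k)$ remainders from the Taylor expansion to fit within the stated error. The constraint $tN \ge \phi(N) \to \infty$ is precisely what makes the Stirling-generated error $1/(tN)^3$ vanish.
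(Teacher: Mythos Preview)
Your decomposition $\PP[X\ge m]=T_m\cdot S$ with Stirling for $T_m$ and a perturbed geometric series for $S$ is exactly the paper's approach; the identity $\tfrac{1}{(1-r)^2}(\tfrac{r^2}{t}+\tfrac{r}{1-t})=\tfrac{tx}{(1-x-t)^2}$ and the resulting $1/N$ coefficient are correct. The difference lies in how $S$ is controlled: the paper establishes explicit polynomial bounds
\[
L(k)=1-\frac{k^2+(2t-1)k}{2(1-t)tN}\ \le\ \prod_{j=1}^{k}\frac{1-\frac{j-1}{tN}}{1+\frac{j}{(1-t)N}}\ \le\ U(k)=L(k)+\frac{3k^4+\cdots}{24(1-t)^2t^2N^2},
\]
verified by an elementary inductive comparison, and then sums $\sum r^k L(k)$ and $\sum r^k U(k)$ in closed form. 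The gap between the two closed forms is $\tfrac{1}{N^2}\cdot\tfrac{tx(1-x^2-t(1-2x))}{(1-r)(1-x-t)^4}=O(t/N^2)=O(1/N^2)$ uniformly on $t\in(0,\delta)$, which delivers the stated error directly.

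Your Taylor-plus-moment argument has a gap precisely at this uniformity step. The justification ``$\sum_{j\ge0}r^j j^k=O(r)$'' applied to the second-order remainder $O(j^{3}/(tN)^{2})$ yields only $O(r/(tN)^2)=O(1/(tN^2))$; at $t=N^{-1/3}$ this is $N^{-5/3}$, which is \emph{not} $O(1/N^2+1/(t^3N^3))=O(N^{-2})$. What actually saves the estimate is that the relevant polynomials are not $j^k$ but $\sum_{i=0}^{j-1}i^k$, which vanish at $j=0$ \emph{and} $j=1$, so that $\sum_j r^j\sum_{i<j}i^k=O(r^2)$ rather than $O(r)$; this extra factor of $r\sim t$ is exactly what cancels the dangerous $1/t$. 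Once you make this explicit (and similarly for the cross terms $A_j^2$ from expanding the exponential, where $A_1=O(1/N)$ has no $1/t$), your approach goes through. A minor point: the lemma hypothesises that $tN$ is an integer, so your rounding remark about $m=\lceil(1-t)N\rceil$ is unnecessary.
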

\begin{proof}
    Write
    \begin{equation}
        s = 1-t, \qquad r = \frac{tx}{(1-t)(1-x)}.
    \end{equation}
    Then by definition, we have 
    \begin{equation} \label{tail prob exact formula}
        \PP[X\ge (1-t)N] = \sum_{k=0}^{tN} \binom{N}{tN-k}x^{(1-t)N+k}(1-x)^{tN-k} = \binom{N}{tN} (x^{1-t}(1-x)^{t})^N \sum_{k=0}^{tN}r^k\prod_{j=1}^k \frac{1-\frac{j-1}{tN}}{1+\frac{j}{sN}}.
    \end{equation}
    By using \eqref{log N!}, we have 
    \begin{align} \label{binomial coefficient asymptotic}
    \begin{split}
        \log \binom{N}{tN} &= -\Big(t\log t+(1-t)\log(1-t)\Big)N - \frac12 \log N \\
        &\quad - \frac12 \log(2\pi t(1-t)) +1 + \frac1{12N}\Big(1-\frac{1}{t(1-t)}\Big)+\OO\Big(\frac{1}{t^3N^3}\Big).
    \end{split}
    \end{align}
    
    Next, we estimate the summation in \eqref{tail prob exact formula}. We will compute explicit bounds for this sum. For this purpose, let
    \begin{align}
    \begin{split}
         L(k;N,t) &= 1-\frac{k^2+(2t-1)k}{2stN}, \\
        U(k;N,t) &= 1-\frac{k^2+(2t-1)k}{2stN} + \frac{3k^4+(20t-10)k^3+(9-24st)k^2+(4t-2)k}{24s^2t^2N^2}.
    \end{split}
    \end{align}
    We claim that for any $0 \le k \le tN$,
    \begin{equation}\label{bound of product term}
        L(k;N,t) \leq \prod_{j=1}^k \frac{1-\frac{j-1}{tN}}{1+\frac{j}{sN}} \leq U(k;N,t).
    \end{equation}
    Fix $N,t$ and let us write $L(k) := L(k;N,t)$, $U(k) := U(k;N,t)$.
    First we show the left inequality of \eqref{bound of product term}. Since the inequality is trivial if $L(k)\le0$, we may assume $L(k)>0$. 
    Note that it suffices to show 
    \[
    \frac{L(j)}{L(j-1)} \le \frac{1-\frac{j-1}{tN}}{1+\frac{j}{sN}}
    \]
    for every $j=1, \ldots, k$.
     Through straightforward computations, we have
    \begin{align*}
        L(j)\Big(1+\frac{j}{sN}\Big) &= 1-\frac{j^2-j}{2stN}-\frac{j^3+(2t-1)j^2}{2s^2tN^2}, \\
        L(j-1)\Big(1-\frac{j-1}{tN}\Big) &= 1-\frac{j^2-j}{2stN}+\frac{(j-1)^2(j+2t-2)}{2st^2N^2}.
    \end{align*}
   It is clear that the numerators of the last terms are non-negative for any positive integer $j$, thereby proving the lower bound part of \eqref{bound of product term}.
    
    Next, let us verify the right inequality of \eqref{bound of product term}. 
    Let $0 \le \al \le 1$. Note that if $tN$ is large enough, we have 
    \begin{align*}
        8s^2U(\al tN) &= t^2N^2\al^4 + \frac{20t-10}{3}tN \al^3 + (3-8st-4stN)\al^2+\frac{(2t-1)(2-12stN)}{3tN}\al + 8s^2 \\
        & = \Big(tN \al^2 + \frac{10t-5}{3}\al - 2s\Big)^2 + \Big(\frac29+\frac{28}{9}st\Big)\al^2+\Big(\frac{4t-2}{3tN}+\frac{16}{3}st - \frac{8}{3}s\Big)\al + 4s^2 \\
        & \ge -\frac2{3tN}-\frac83 s + 4s^2 \ge \frac14-\frac2{3tN} \ge 0.
    \end{align*}
    Therefore, it follows that 
    \begin{equation} \label{U positive}
        U(k) \ge 0
    \end{equation}
    for all integer $k = 0,1,\dots,tN$. Thus it is enough to show again
    \[
    \frac{1-\frac{j-1}{tN}}{1+\frac{j}{sN}} \le \frac{U(j)}{U(j-1)}
    \]
    for all $j=1,\dots, tN$. For this, note that 
    \begin{align*}
        U(j-1)\Big(1-\frac{j-1}{tN}\Big) &= 1-\frac{j^2-j}{2stN}+\frac{3j^4+(20t-10)j^3+(9-24st)j^2+(4t-2)j}{24s^2t^2N^2}-\frac{E_1(j)}{24s^2t^3N^3}, \\
        U(j)\Big(1+\frac{j}{sN}\Big) &= 1-\frac{j^2-j}{2stN}+ \frac{3j^4+(20t-10)j^3+(9-24st)j^2+(4t-2)j}{24s^2t^2N^2}+\frac{E_2(j)}{24s^3t^2N^3},
    \end{align*}
    where
    \begin{align*}
        E_1(j) 
        &= (j-1)^2\big[(j-2)(j-3)(3j-4)+4t(5(j-1)^2-6s(j-1)+1)\big], \\
        E_2(j) 
        &= j^2[(j-1)(j-2)(3j-1) +4t(5j^2-6sj+1)].
    \end{align*}
    Since the quadratic function $f(u) = 5u^2 - 6su + 1$ attains its minimum at $u = \frac{3s}{5} \in (0,1)$ and both $f(0)$ and $f(1)$ are positive, we can deduce that $f(j) > 0$ for all integers $j$. Consequently, we can conclude that $E_1(j)$ and $E_2(j)$ are positive for all $j$, which leads to the upper bound part of \eqref{bound of product term}.

    Next, we assert that
    \begin{equation} \label{bound of sum prod term}
        \sum_{k=0}^{tN} r^k \prod_{j=1}^k \frac{1-\frac{j-1}{tN}}{1+\frac{j}{sN}} = \frac1{1-r}\Big(1-\frac1N\frac{tx}{(1-x-t)^2}+\OO\Big(\frac1{N^2}\Big)\Big).
    \end{equation}    
    For this, note that for large enough $tN$ and $k>tN$,
    \begin{align} \label{L negative U positive}
    \begin{split}
        L(k;N,t) &= 1-\frac{k^2+(2t-1)k}{2stN} \leq 1-\frac{(tN)^2+(2t-1)tN}{2stN} < 0, \\
       U(k;N,t) &= 1-\frac{k^2+(2t-1)k}{2stN} + \frac{3k^4+(20t-10)k^3+(9-24st)k^2+(4t-2)k}{24s^2t^2N^2} > 1
    \end{split}
    \end{align}
    since $3k^4 / (24s^2t^2N^2)$ term dominates the size of $U(k;N,t)$. Let $N$ be large enough so that $tN>\phi(N)$ satisfies \eqref{U positive} and \eqref{L negative U positive}. Then one can observe that
    \begin{align*}
        \sum_{k=0}^\infty r^k L(k;N,t) \le \sum_{k=0}^{tN} r^kL(k;N,t) \le \sum_{k=0}^{tN} r^k \prod_{i=1}^k \frac{1-\frac{i-1}{tN}}{1+\frac{i}{sN}} \le \sum_{k=0}^\infty r^k U(k;N,t) \le \sum_{k=0}^\infty r^k U(k;N,t).
    \end{align*}
    Here, we have assumed $t<\del<1-x$ so that $r<1$. Then we obtain 
    \begin{align*}
        \sum_{k=0}^\infty r^k L(t;N,t) &= \frac{1}{1-r}\Big(1-\frac1N\frac{tx}{(1-x-t)^2}\Big),\\
        \sum_{k=0}^\infty r^k U(k;N,t) &= \frac1{1-r}\Big(1-\frac1N\frac{tx}{(1-x-t)^2}+\frac1{N^2}\frac{tx(1-x^2-t(1-2x))}{(1-x-t)^4}\Big),
    \end{align*}
    which gives rise to \eqref{bound of sum prod term}.
    Then we conclude the proof by putting the asymptotics \eqref{binomial coefficient asymptotic} and \eqref{bound of sum prod term} to \eqref{tail prob exact formula}.
\end{proof}



\end{document}